\newif\ifprocs
\procstrue
\procsfalse %

\ifprocs
\documentclass[a4paper,USenglish,numberwithinsect]{lipics-v2016}
\usepackage{babel}
\else
\documentclass[11pt]{article}
\usepackage[margin=1.0in]{geometry}
\usepackage[utf8]{inputenc}
\usepackage[english]{babel}
\fi

\usepackage{amssymb}
\usepackage{amsfonts}
\usepackage{amsmath}
\usepackage{amsthm}
\usepackage{verbatim}
\usepackage{xcolor,graphicx}
\usepackage{xspace}

\usepackage{lineno} 

\ifprocs
\hypersetup{colorlinks,allcolors=blue}
\else
\usepackage{ifpdf}
\ifpdf    %
\usepackage[pdftex,colorlinks,linkcolor=blue,citecolor=blue,filecolor=blue,urlcolor=blue]{hyperref}
\else    %
\usepackage[hypertex,colorlinks,linkcolor=blue,citecolor=blue,filecolor=blue,urlcolor=blue]{hyperref}
\fi
\fi

\ifprocs
\renewcommand{\paragraph}{\subparagraph}
\else
\newtheorem{theorem}{Theorem}[section]
\newtheorem{lemma}[theorem]{Lemma}
\newtheorem{definition}[theorem]{Definition}
\newtheorem{corollary}[theorem]{Corollary}
\fi

\theoremstyle{plain}
\newtheorem{claim}[theorem]{Claim}

\def\compactify{\itemsep=0pt \topsep=0pt \partopsep=0pt \parsep=0pt}

\newcommand{\eqdef}{:=}

\providecommand{\card}[1]{\lvert#1\rvert}
\newcommand{\set}[1]{\left\{ #1 \right\}}

\DeclareMathOperator{\poly}{poly}
\DeclareMathOperator{\Count}{Count}
\DeclareMathOperator{\parity}{Par}
\DeclareMathOperator{\mincut}{mincut}
\DeclareMathOperator{\mincutset}{argmincut}
\DeclareMathOperator*{\cupdot}{\mathbin{\mathaccent\cdot\cup}}

\DeclareMathOperator*{\mycup}{\cup} %

\newcommand{\1}{{\rm 1\hspace*{-0.4ex}%
\rule{0.1ex}{1.52ex}\hspace*{0.2ex}}}

\newcommand{\R}{\mathbb R}

\newcommand{\I}{\mathcal I}

\newcommand{\T}{\mathcal T}
\newcommand{\G}{\mathcal G}

\newcommand{\eps}{\epsilon}

\newcommand{\Wlog}{without loss of generality\xspace}

\newcommand{\MTC}{minimum terminal cut\xspace}
\newcommand{\MTCs}{{\MTC}s\xspace}
\newcommand{\TCs}{TC-scheme\xspace}

\newcommand{\Raecke}{R\"{a}cke}

\title{Refined Vertex Sparsifiers of Planar Graphs%
\footnote{This work was partially supported by the Israel Science Foundation grants \#897/13 and \#1086/18, and by a Minerva Foundation grant.}
}

\ifprocs
\author[1]{Robert Krauthgamer}
\author[2]{Havana (Inbal) Rika}
\affil[1]{Weizmann Institute of Science, Rehovot, Israel\\
  \texttt{robert.krauthgamer@weizmann.ac.il}}
\affil[2]{Weizmann Institute of Science, Rehovot, Israel\\
  \texttt{havana.rika@weizmann.ac.il}}
\authorrunning{R.Krauthgamer and H.Rika} %

\Copyright{Robert Krauthgamer and Havana (Inbal) Rika}%

\subjclass{G.2.2 Graph Theory, G.2.1 Combinatorics, E.1 Data Structure}%
\keywords{Vertex Sparsifiers, Graph Algorithms, Planar Graphs,
Graph Distances, Succinct Data Structures.}

\else
\author{Robert Krauthgamer%
  \footnote{Email: \texttt{robert.krauthgamer@weizmann.ac.il}  }
\qquad
Havana (Inbal) Rika%
  \footnote{Email: \texttt{havana.rika@weizmann.ac.il}  }
\\
Weizmann Institute of Science
}

\fi

\begin{document}
\maketitle

\begin{abstract}

We study the following version of cut sparsification.
Given a large edge-weighted network $G$ with $k$ terminal vertices,
compress it into a smaller network $H$ with the same terminals,
such that every minimum terminal cut in $H$ approximates the corresponding one in $G$,
up to a factor $q\geq 1$ that is called the quality.
(The case $q=1$ is known also as a mimicking network).
We provide new insights about the structure of minimum terminal cuts,
leading to new results for cut sparsifiers of \emph{planar} graphs.

Our first contribution identifies a subset of the minimum terminal cuts,
which we call elementary, that generates all the others.
Consequently, $H$ is a cut sparsifier if and only if
it preserves all the elementary terminal cuts (up to this factor $q$).

Our second and main contribution is to refine the known bounds
in terms of $\gamma=\gamma(G)$, which is defined as the minimum number
of faces that are incident to all the terminals in a planar graph $G$.
We prove that the number of elementary terminal cuts is $O((2k/\gamma)^{2\gamma})$ (compared to $O(2^k)$ terminal cuts),
and furthermore obtain a mimicking network of size $O(\gamma 2^{2\gamma} k^4)$,
which is near-optimal as a function of $\gamma$.

Our third contribution is a duality between cut sparsification
and distance sparsification for certain planar graphs,
when the sparsifier $H$ is required to be a minor of $G$.
This duality connects problems that were previously studied separately,
implying new results, new proofs of known results,
and equivalences between open gaps.
\end{abstract}

\section{Introduction}
A very powerful paradigm when manipulating a huge graph $G$
is to \emph{compress} it,
in the sense of transforming it into a small graph $H$
(or alternatively, into a succinct data structure) that maintains
certain features (quantities) of $G$, like distances, cuts, or flows.
The basic idea is that
once the compressed graph $H$ is computed in a preprocessing step,
further processing can be performed on $H$ instead of on $G$,
using less resources like runtime and memory,
or achieving better accuracy when the solution is approximate.
This paradigm has lead to remarkable successes,
such as faster runtimes for fundamental problems,
and the introduction of important concepts,
from spanners \cite{PU89} to cut and spectral sparsifiers \cite{BK15,ST11}.
In these examples, $H$ is a subgraph of $G$ with the \emph{same} vertex set
but \emph{sparse}, and is sometimes called an edge sparsifier.
In contrast, we aim to reduce the number of vertices in $G$,
using so-called vertex sparsifiers.

In the vertex-sparsification scenario,
$G$ has $k$ designated vertices called \emph{terminals},
and the goal is to construct a small graph $H$ that contains these terminals,
and maintains some of their features inside $G$, like distances or cuts.
Throughout, a \emph{$k$-terminal network}, denoted $G=(V,E,T,c)$,
is an undirected graph $(V,E)$ with edge weights $c:E\to \R_+$
and terminals set $T\subset V$ of size $\card{T}=k$.
As usual, a \emph{cut} is a partition of the vertices,
and its \emph{cutset} is the set of edges that connect between different parts.
Interpreting the edge weights as capacities,
the \emph{cost} of a cut $(W,V\setminus W)$ is the total weight of
the edges in the respective cutset.

We say that a cut $(W,V\setminus W)$ \emph{separates}
a terminals subset $S\subset T$ from $\bar S\eqdef T \setminus S$
(or in short that it is \emph{$S$-separating}),
if all of $S$ is on one side of the cut and $\bar S$ on the other side,
i.e., $W\cap T$ equals either $S$ or $\bar S$.
We denote by $\mincut_G(S)$ the minimum cost of an $S$-separating cut in $G$,
where by a consistent tie-breaking mechanism,
such as edge-weights perturbation,
we assume throughout that the minimum is attained by only one cut,
which we call the \emph{\MTC} (of $S$).
\begin{definition}
A network $H=(V_H,E_H,T,c_H)$ is a
\emph{cut sparsifier of $G=(V,E,T,c)$ with quality $q\ge1$ and size $s\ge k$}
(or in short, a $(q,s)$-cut-sparsifier),
if its size is $\card{V_H} \le s$ and
\begin{equation}\label{EQ:CutSparsifier}
   \forall S\subset T,\qquad
   \mincut_G(S) \leq \mincut_{H}(S) \leq q \cdot \mincut_G(S).
\end{equation}
\end{definition}
In words, \eqref{EQ:CutSparsifier} requires that every \MTC in $H$
approximates the corresponding one in $G$.
Throughout, we consider only $S\neq \emptyset,T$
although for brevity we will not write it explicitly.

Two special cases are particularly important for us.
One is quality $q=1$, or a $(1,s)$-cut-sparsifier,
which is known in the literature as a \emph{mimicking network}
and was introduced by \cite{HKNR98}.
The second case is a cut sparsifier $H$ that is furthermore a minor of $G$,
and then we call it a \emph{minor cut sparsifier},
and similarly for a \emph{minor mimicking network}.
In all our results, the sparsifier $H$ is actually a minor of $G$,
which can be important in some applications;
for instance, if $G$ is planar then $H$ admits planar-graph algorithms.

In known constructions of mimicking networks ($q=1$),
the sparsifier's size $s$ highly depends on
the number of constraints in~\eqref{EQ:CutSparsifier} that are really needed.
Naively, there are at most $2^k$ constraints, one for every \MTC
(this can be slightly optimized, e.g., by symmetry of $S$ and $\bar S$).
This naive bound was used to design, for an arbitrary network $G$,
a mimicking network whose size $s$ is exponential in the number of constraints,
namely $s\le 2^{2^k}$ \cite{HKNR98}.
A slight improvement, that is still doubly exponential in $k$,
was obtained by using the submodularity of cuts to reduce
the number of constraints \cite{KR14}.
For a planar network $G$, the mimicking network size was improved
to a polynomial in the number of constraints,
namely $s\leq k^2 2^{2k}$ \cite{KR13SODA},
and this bound is actually near-optimal, due to a very recent work
showing that some planar graphs require $s=2^{\Omega(k)}$ \cite{KPZ17}.
In this paper we explore the structure of \MTCs more deeply,
by introducing technical ideas that are new and different
from previous work like \cite{KR13SODA}.

\paragraph{Our approach.}
We take a closer look at the mimicking network size $s$ of planar graphs,
aiming at bounds that are more sensitive to the given network $G$.
For example, we would like to ``interpolate''
between the very special case of an outerplanar $G$,
which admits a mimicking network of size $s=O(k)$ \cite{CSWZ00},
and an arbitrary planar $G$ for which $s\le 2^{O(k)}$
is known and optimal \cite{KR13SODA,KPZ17}.
Our results employ a graph parameter $\gamma(G)$, defined next.

\begin{definition}[Terminal Face Cover]
The \emph{terminal face cover} $\gamma=\gamma(G)$ of a planar $k$-terminal network $G$
with a given drawing%
\footnote{We can let $\gamma$ refer to the best drawing of $G$,
and then our results might be non-algorithmic.
}
is the minimum number of faces that are incident to all the $k$ terminals,
and thus $1\leq \gamma\leq k$.
\end{definition}
This graph parameter $\gamma(G)$ is well-known to be important algorithmically.
For example, it can be used to control the runtime of
algorithms for shortest-path problems \cite{Frederickson91,CX00},
for cut problems \cite{ChenWu04,Bentz09},
and for multicommodity flow problems \cite{MNS86}.
For the complexity of computing an optimal/approximate face cover $\gamma(G)$,
see \cite{BM88,Frederickson91}.

When $\gamma=1$, all the terminals lie on the boundary of the same face,
which we may assume to be the outerface.
This special case was famously shown by Okamura and Seymour \cite{OS81}
to have a flow-cut gap of $1$ (for multicommodity flows).
Later work showed that for general $\gamma$,
the flow-cut gap is at most $3\gamma$ \cite{LS09,CSW13}.

\subsection{Main Results and Techniques}
\label{sec:results}

We provide new bounds for mimicking networks of planar graphs.
In particular, our main result refines the previous bound
so that it depends exponentially on $\gamma(G)$ rather than on $k$,
This yields much smaller mimicking networks in important cases,
for instance, when $\gamma=O(1)$ we achieve size $s=\poly(k)$.
See Table~\ref{table:mimicking} for a summary of known and new bounds.
Technically, we develop two methods to decompose the \MTCs
into ``more basic'' subsets of edges,
and then represent the constraints in~\eqref{EQ:CutSparsifier}
using these subsets.
This is equivalent to reducing the number of constraints,
and leads (as we hinted above) to a smaller sparsifier size $s$.
A key difference between the methods is that the first one in effect
restricts attention to a subset of the constraints in~\eqref{EQ:CutSparsifier},
while the second method uses alternative constraints.

\paragraph{Decomposition into elementary cutsets.}
Our first decomposition method identifies (in every graph $G$, even non-planar)
a subset of \MTCs that ``generates'' all the other ones, as follows.
First,
we call a cutset \emph{elementary}
if removing its edges disconnects the graph into exactly two
connected components (Definition~\ref{DEF: elementary cut-set}).
We then show that
every \MTC in $G$ can be decomposed into a disjoint union of elementary ones (Theorem~\ref{THM: E_S disjoint union elementary cuts}),
and use this to conclude that
if all the elementary cutsets in $G$ are well-approximated by those in $H$,
then $H$ is a cut sparsifier of $G$
(Corollary~\ref{COR: elementary cuts for general graphs}).

Combining this framework with prior work on planar sparsifier \cite{KR13SODA},
we devise the following bound that depends on $\T_e(G)$,
the set of elementary cutsets in $G$.
\begin{itemize} \compactify
\item
\textbf{Generic bound:}
Every planar graph $G$ has a mimicking network of size
$s = O(k)\cdot \card{\T_e(G)}^2$;
see Theorem~\ref{THM: planar mimick network size alpha|U|^2}.
\end{itemize}
Trivially $\card{\T_e(G)}\leq 2^k$,
and we immediately achieve $s=O(k2^{2k})$ for all planar graphs
(Corollary~\ref{COR: UB 2^2k planar graphs}).
This improves over the known bound \cite{KR13SODA} slightly (by factor $k$),
and stems directly from the restriction to elementary cutsets
(which are simple cycles in the planar-dual graph).

Using the same generic bound, we further obtain mimicking networks whose size
is \emph{polynomial} in $k$ (but inevitably exponential in $\gamma$),
starting with the base case $\gamma=1$ and then building on it, as follows.

\begin{itemize} \compactify
\item
\textbf{Base case:}
If $\gamma(G)=1$, then $\card{\T_e(G)}\leq O(k^2)$
and thus $G$ has a mimicking network of size $s=O(k^4)$;%
\footnote{The generic bound implies $s=O(k^5)$,
but we can slightly improve it in this case.}
see Theorem~\ref{THM: at most k^2 elementary min cuts}
and Corollary~\ref{THM: mimick network size k^4}.
\item
\textbf{General case, first bound:}
If $\gamma(G) \geq 1$, then $\card{\T_e(G)}\leq (2k/\gamma)^{2\gamma}$
and thus $G$ has a mimicking network of size $s=O(k(2k/\gamma)^{4\gamma})$;
see Theorem~\ref{THM: naive UB elementary cycles bounded gamma}
and Corollary~\ref{THM: naive UB mimick bounded gamma}.
\end{itemize}

The last bound on $\card{\T_e(G)}$ is clearly wasteful
(for $\gamma=k$, it is roughly quadratically worse than the trivial bound).
To avoid over-counting of edges that belong to multiple elementary cutsets,
we devise a better decomposition.

\paragraph{Further decomposition of elementary cutsets.}

Our second method decomposes each elementary cutset even further,
in a special way such that we can count the underlying fragments
(special subsets of edges) without repetitions,
and this yields our main result.

\begin{itemize} \compactify
\item
\textbf{General case, second bound.}
When $\gamma(G) \geq 1$, there are $O(2^{\gamma} k^2)$ subsets of edges,
such that every elementary cutset in $G$ can be decomposed into a disjoint union of some of these subsets.
Thus, $G$ admits a mimicking network of size $O(\gamma 2^{2\gamma}k^4)$;
see Theorem~\ref{THM: elementary mincuts bounded faces}
and Corollary~\ref{THM: UB mimicking bounded faces}.
\end{itemize}

\paragraph{Additional results.}
First, all our cut sparsifiers are also approximate flow sparsifers,
by straightforward application of the known bounds on the flow-cut gap,
see Section~\ref{sec:flow}.
Second, our decompositions easily yield
a succinct data structure that stores all the \MTCs of a planar graph $G$.
Its storage requirement depends on $\card{\T_e(G)}$,
which is bounded as above,
see Section~\ref{SEC: Terminal Cut scheme} for details.

Finally,
we show a duality between cut and distance sparsifiers (for certain graphs),
and derive new relations between their bounds, as explained next.

\begin{table}[t]
\begin{center} %
\begin{tabular}{l c l l} %
\hline
\hline        %
Graphs & Size & Minor & Reference   \\
\hline         %
General & $ 2^{2^k}\approx 2^{\binom{(k-1)}{(k-1)/2}}$ & no & \cite{HKNR98,KR14} \\
Planar & $O(k^2 2^{2k})$ & yes & \cite{KR13SODA} \\
Planar & $O(k 2^{2k})$ & yes & Corollary~\ref{COR: UB 2^2k planar graphs} \\
Planar $\gamma=\gamma(G)$ & $O(\gamma 2^{2\gamma}k^4)$ & yes & Corollary~\ref{THM: UB mimicking bounded faces} \\
Planar $\gamma(G)=1$ & $O(k^4)$ & yes & Corollary~\ref{THM: mimick network size k^4} \\
Planar $\gamma(G)=1$ & $O(k^2)$ & no & \cite{GHP17} \\
\hline  %
General & $2^{\Omega(k)}$ & no & \cite{KR13SODA,KR14} lower bound \\
Planar & $2^{\Omega(k)}$ & no & \cite{KPZ17} lower bound \\
\hline  %
\end{tabular}
\caption{Known and new bounds for mimicking networks.
\label{table:mimicking} %
}
\end{center}
\end{table}

\subsection{Cuts vs. Distances}
\label{sec:IntroDuality}

Although in several known scenarios cuts and distances are closely related,
the following notion of distance sparsification was studied separately,
with no formal connections to cut sparsifiers \cite{Gupta01,CXKR06,BG08,KNZ14,KKN15,GR16,CGH16,Cheung18,Filtser18,FKT19}.

\begin{definition}
A network $H=(V_H,E_H,T,c_H)$ is called a
\emph{$(q,s)$-distance-approximating minor (abbreviated DAM) of $G=(V,E,T,c)$},
if it is a minor of $G$, its size is $\card{V_H} \le s$ and
\begin{equation}\label{EQ: DAM}
    \forall t,t'\in T,\qquad
    d_G(t,t')\leq d_{H}(t,t') \leq q\cdot d_G(t,t'),
\end{equation}
where $d_G(\cdot, \cdot)$ is the shortest-path metric in $G$
with respect to $c(\cdot)$ as edge lengths.
\end{definition}

We emphasize that the well-known planar duality between cuts and cycles
does \emph{not} directly imply a duality between cut and distance sparsifiers.
We nevertheless do use this planar-duality approach,
but we need to break ``shortest cycles'' into ``shortest paths'',
which we achieve by adding new terminals (ideally not too many).

\begin{itemize} \compactify
\item Fix $k,q,s\ge 1$.
Then all planar $k$-terminal networks with $\gamma=1$
admit a minor $(q,s)$-cut sparsifier
if and only if all these networks admit an $(q,O(s))$-DAM;
see Theorems~\ref{THM: reduction from outer face cuts to DAM}
and \ref{THM: reduction from outer face DAM to outer face cuts}.
\end{itemize}

This result yields new cut-sparsifier bounds in the special case $\gamma=1$
(see Section~\ref{SEC: Reduction Application}).
Notice that in this case of $\gamma=1$ the flow-cut gap is $1$ \cite{OS81}, hence the three problems of minor sparsification (of distances, of cuts,
and of flows), all have the same asymptotic bounds and gaps.

This duality can be extended to general $\gamma\geq 1$ (including $\gamma=k$),
essentially at the cost of increasing the number of terminals, as follows.
If for some functions $q(\cdot)$ and $s(\cdot)$,
all planar $k$-terminal networks with given $\gamma$ admit a $(q(k),s(k))$-DAM,
then all networks in this class admits also a minor $(q(\gamma2^{\gamma}k^2),s(\gamma2^{\gamma}k^2))$-cut sparsifier.
For $\gamma=k$, we can add only $k2^k$ new terminals instead of $k^32^k$.
We omit the proof of this extension,
as applying it to the known bounds for DAM yields alternative proofs
for known/our cut-sparsifier bounds, but no new results. For example, using the reduction together with the known upper bound of $(1,k^4)$-DAM, we get that every planar $k$-terminal network with $\gamma(G)=k$ admits a minor mimicking network of size $O((k2^k)^4)$.

\paragraph{Comparison with previous techniques.}
Probably the closest notion to duality between cut sparsification and distance
sparsification is \Raecke's powerful method \cite{Raecke08},
adapted to vertex sparsification as in \cite{CLLM10,EGKRTT14,MM16}.
However, in his method the cut sparsifier $H$ is inherently randomized;
this is acceptable if $H$ contains only the terminals,
because we can take its ``expectation'' $\bar H$
(a complete graph with expected edge weights),
but it is calamitous when $H$ contains non-terminals,
and then each randomized outcome has different vertices.
Another related work, by Chen and Wu \cite{ChenWu04},
reduces multiway-cut in a planar network with $\gamma(G)=1$
to a minimum Steiner tree problem in a related graph $G'$.
Their graph transformation is similar to one of our two reductions,
although they show a reduction that goes in one direction
rather than an equivalence between two problems.

\subsection{Related Work}
\label{sec:related}

Cut and distance sparsifiers were studied extensively in recent years,
in an effort to optimize their two parameters, quality $q$ and size $s$.
The foregoing discussion is arranged by the quality parameter,
starting with $q=1$, then $q=O(1)$, %
and finally quality that grows with $k$.

\paragraph{Cut Sparsification.}
Let us start with $q=1$. Apart from the already mentioned work
on a general graph $G$ \cite{HKNR98,KR14,KR13SODA},
there are also bounds for specific graph families,
like bounded-treewidth or planar graphs \cite{CSWZ00,KR13SODA,KPZ17}.
For planar $G$ with $\gamma(G)=1$, there is
a recent tight upper bound $s=O(k^2)$ \cite{GHP17} (independent of our work),
where the sparsifier is planar but is \emph{not} a minor of the original graph.

We proceed to a constant quality $q$.
Chuzhoy~\cite{Chuzhoy12} designed an $(O(1),s)$-cut sparsifier,
where $s$ is polynomial in the total capacity incident to the terminals in
the original graph,
and certain graph families (e.g., bipartite) admit sparsifiers
with $q=1+\eps$ and $s=\poly(k/\eps)$ \cite{AGK14}.

Finally, we discuss the best quality known when $s=k$,
i.e., the sparsifier has only the terminals as vertices.
In this case, it is known that $q=O(\log k /\log \log k)$ \cite{Moitra09,LM10,CLLM10,EGKRTT14,MM16},
and there is a lower bound $q=\Omega(\sqrt{\log k})$ \cite{MM16}.
For networks that exclude a fixed minor (e.g., planar)
it is known that $q=O(1)$ \cite{EGKRTT14},
and for trees $q=2$ \cite{GR16}
(where the sparsifier is \emph{not} a minor of the original tree).

\paragraph{Distance Sparsification.}
A separate line of work studied the tradeoff between the quality $q$
and the size $s$ of a distance approximation minor (DAM).
For $q=1$, every graph admits DAM of size $s=O(k^4)$ \cite{KNZ14},
and there is a lower bound of $s=\Omega(k^2)$ even for planar graphs \cite{KNZ14}.
Independently of our work, Goranci, Henzinger and Peng \cite{GHP17}
recently constructed, for planar graphs with $\gamma(G)=1$,
a $(1,O(k^2))$-distance sparsifier
that is planar but \emph{not} a minor of the original graph.
Proceeding to quality $q=O(1)$,
planar graphs admit a DAM with $q=1+\eps$ and $s=O(k\log k/ \epsilon)^2$ \cite{CGH16},
and certain graph families, such as trees and outerplanar graphs,
admit a DAM with $q=O(1)$ and $s=O(k)$ \cite{Gupta01,BG08,CXKR06,KNZ14}.
When $s=k$ (the sparsifier has only the terminals as vertices),
then known quality is $q=O(\log k)$ for every graph \cite{KKN15,Cheung18,Filtser18}.
Additional tradeoffs and lower bounds can be found in  \cite{CXKR06,KNZ14,CGH16}.

\subsection{Preliminaries}
\label{sec:prelims}

Let $G=(V,E,T,c)$ be a $k$-terminal network,
and denote its $k$ terminals by $T=\{t_1,\ldots,t_k\}$.
We assume without loss of generality that $G$ is connected,
as otherwise we can construct a sparsifier for each connected component separately.
For every $S\subset T$, let $\mincutset_G(S)$ denote the \emph{argument}
of the minimizer in $\mincut_G(S)$, i.e.,
the minimum-cost cutset that separates $S$ from $\bar{S}=T \setminus S$ in $G$.
We assume that the minimum is unique by a perturbation of the edge weights.
Throughout, when $G$ is clear from the context, we use the shorthand
\begin{equation} \label{eq:E_S}
  E_S \eqdef \mincutset_G(S).
\end{equation}
Similarly, $CC(E_S)$ is a shorthand for the set of connected components
of the graph $G\setminus E_S$.
Define the \emph{boundary} of $W\subseteq V$, denoted $\delta(W)$,
as the set of edges with exactly one end point in $W$,
and observe that for every connected component $C\in CC(E_S)$
we have $\delta(C)\subseteq E_S$.
By symmetry, $E_S = E_{\bar{S}}$.
And since $G$ is connected and $S\neq \emptyset,T$,
we have $E_S\neq \emptyset$ and $\card{CC(E_S)}\ge 2$.
In addition, by the minimality of $E_S$,
every connected component $C\in CC(E_S)$ contains at least one terminal.

\begin{lemma}[Lemma 2.2 in \cite{KR13SODA}]
\label{LMA: every C has terminal}
For every two subsets of terminals $S,S'\subset T$ and their corresponding minimum cutsets $E_S,E_{S'}$,
every connected component $C \in CC(E_S \cup E_{S'})$ contains
at least one terminal.
\end{lemma}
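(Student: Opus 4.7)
The plan is an exchange argument by contradiction. Suppose some $C \in CC(E_S \cup E_{S'})$ contains no terminals, and write $A := E_S \cap \delta(C)$ and $B := E_{S'} \cap \delta(C)$. Since $C$ is a connected component of $G\setminus(E_S\cup E_{S'})$, every edge leaving $C$ lies in $E_S\cup E_{S'}$, so $\delta(C) = A \cup B$. The goal is to produce either a cheaper $S$-separating cutset than $E_S$ or a cheaper $S'$-separating cutset than $E_{S'}$ by ``flipping'' which side of the cut $C$ sits on.

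Concretely, define
\[
E'_S := E_S \bigtriangleup \delta(C), \qquad E'_{S'} := E_{S'} \bigtriangleup \delta(C).
\]
Using the standard set-theoretic identity $\delta(W \bigtriangleup C) = \delta(W) \bigtriangleup \delta(C)$, one sees that $E'_S$ is precisely the cutset of the partition obtained from the one defining $E_S$ by moving $C$ to the opposite side, and likewise for $E'_{S'}$. Because $C$ is terminal-free, this move does not change which side any terminal sits on, so $E'_S$ still separates $S$ from $\bar S$ while $E'_{S'}$ still separates $S'$ from $\bar{S'}$.

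A direct inclusion-exclusion computation gives
\[
c(E'_S) = c(E_S) - c(A) + c(B\setminus A), \qquad c(E'_{S'}) = c(E_{S'}) - c(B) + c(A\setminus B),
\]
so $c(E'_S) + c(E'_{S'}) = c(E_S) + c(E_{S'}) - 2\,c(A \cap B)$. If $A \cap B$ carries positive weight, the sum is strictly smaller than $c(E_S)+c(E_{S'})$, forcing $c(E'_S) < c(E_S)$ or $c(E'_{S'}) < c(E_{S'})$ and contradicting minimality. Otherwise $A$ and $B$ are weight-disjoint; if $c(A) \neq c(B)$ we again get a strict decrease on one side, while if $c(A) = c(B)$ we obtain an alternative minimum-cost cut $E'_S \neq E_S$ (note $\delta(C) \neq \emptyset$ since $G$ is connected and $C \subsetneq V$), contradicting the uniqueness of the minimum terminal cut guaranteed by the tie-breaking convention in Section~\ref{sec:prelims}.

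The main obstacle is to justify rigorously that the flipped set $E'_S$ is still a valid $S$-separating cutset in every subcase; this rests on the boundary identity $\delta(W \bigtriangleup C) = \delta(W) \bigtriangleup \delta(C)$ together with the terminal-freeness of $C$. Once that step is secured, the rest is a short inclusion-exclusion calculation combined with a case analysis driven by uniqueness of the minimum cut.
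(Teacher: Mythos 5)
Your argument is correct: the component-flip $W \mapsto W \bigtriangleup C$ is legitimate via $\delta(W \bigtriangleup C)=\delta(W)\bigtriangleup\delta(C)$, terminal-freeness of $C$ keeps both flipped cuts separating, and the cost bookkeeping plus the paper's uniqueness (tie-breaking) assumption and connectivity ($\delta(C)\neq\emptyset$) close every case. Note the paper does not prove this lemma at all --- it simply cites Lemma 2.2 of \cite{KR13SODA} --- and your exchange argument is essentially the standard proof given there, so it serves as a correct self-contained substitute.
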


\section{Elementary Cutsets in General Graphs}
\label{SEC: Elementary cuts General graphs}

In this section we define a special set of cutsets
called elementary cutsets (Definition~\ref{DEF: elementary cut-set}),
and prove that these elementary cutsets generate all other relevant cutsets,
namely, the minimum terminal cutsets in the graph
(Theorem~\ref{THM: E_S disjoint union elementary cuts}).
Therefore, to produce a cut sparsifier, it is enough to preserve only these
elementary cutsets (Corollary~\ref{COR: elementary cuts for general graphs}).
In the following discussion, we fix a network $G=(V,E,T,c)$
and employ the notations $E_S$, $CC(E_S)$ and $\delta(W)$
set up in Section~\ref{sec:prelims}.

\begin{definition}[Elementary Cutset]
\label{DEF: elementary cut-set}
Fix $S\subset T$.
Its minimum cutset $E_S$ is called an \emph{elementary cutset} if $|CC(E_S)|=2$.
\end{definition}

\begin{definition}[Elementary Component]
\label{DEF: elementary connected component}
A subset $C\subseteq V$ is called an \emph{elementary component} if $\delta(C)$ is an elementary cutset for some $S\subset T$,
i.e., $\delta(C)=E_{C\cap T}$ and $|CC(\delta(C))|=2$.
\end{definition}

Although the following two lemmas are quite straightforward,
they play a central role in the proof of Theorem~\ref{THM: E_S disjoint union elementary cuts}.

\begin{lemma}\label{LMA: boundary of C is min cutset}
Fix a subset $S\subset T$ and its minimum cutset $E_S$.
The boundary of every $C\in CC(E_S)$ is itself the minimum cutset separating the terminals $T\cap C$ from $T\setminus C$ in $G$,
i.e., $\delta(C) = E_{T\cap C}$.
\end{lemma}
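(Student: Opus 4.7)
The plan is to show that $\delta(C)$ is itself a minimum $T\cap C$-separating cut, after which the uniqueness of minimum terminal cuts (guaranteed by edge-weight perturbation) forces $\delta(C)=E_{T\cap C}$. One direction is essentially free: since $C$ is a connected component of $G\setminus E_S$, all of $T\cap C$ lies in $C$ and all of $T\setminus C$ lies in $V\setminus C$, so $\delta(C)$ does separate $T\cap C$ from $T\setminus C$; in particular $c(\delta(C))\ge c(E_{T\cap C})$. The bulk of the work is the reverse inequality. I plan to do this by contradiction: writing $F:=E_{T\cap C}$, I assume $c(F)<c(\delta(C))$ and build an $S$-separating cut of cost strictly less than $c(E_S)$.

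Fix notation as follows. Let $S^*$ be the side of $E_S$ containing $S$, so $\delta(S^*)=E_S$. Since $C$ is connected in $G\setminus E_S$, it lies entirely in one side of $E_S$, and by symmetry I may assume $T\cap C\subseteq S$, so $C\subseteq S^*$. Let $W$ be the side of $F$ with $T\cap C\subseteq W$. As a short preliminary, I would apply submodularity of the cut function to $(W,S^*)$:
\[
c(\delta(W\cap S^*))+c(\delta(W\cup S^*))\;\le\;c(\delta(W))+c(\delta(S^*))\;=\;c(F)+c(E_S).
\]
Since $(W\cap S^*)\cap T=T\cap C$ and $(W\cup S^*)\cap T=S$, the two summands on the left are bounded below by $c(F)$ and $c(E_S)$ respectively, forcing equality throughout. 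By uniqueness, $\delta(W\cap S^*)=F$, and since $T\cap C\subseteq W\cap S^*\subseteq W$ we conclude $W\cap S^*=W$, i.e., $W\subseteq S^*$.

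The key structural observation is that $\delta(C)\subseteq E_S=\delta(S^*)$ together with $C\subseteq S^*$ implies that \emph{no} edge goes from $C$ to $S^*\setminus C$: every edge leaving $C$ must cross $\delta(S^*)$ and hence ends in $V\setminus S^*$. Now set
\[
U\;:=\;(S^*\setminus C)\cup W\;=\;(S^*\setminus C)\cup(W\cap C),
\]
which lies in $S^*$ by the previous step and satisfies $U\cap T=(S\setminus C)\cup(T\cap C)=S$, so $U$ is $S$-separating. A case analysis on the edges (split by which of $C$, $S^*\setminus C$, $V\setminus S^*$ each endpoint lies in), using the structural observation to rule out edges from $C$ to $S^*\setminus C$, shows that $\delta(U)$ is the disjoint union of the edges of $F$ lying entirely inside $C$ (call this $F_C$), the edges of $\delta(C)$ whose $C$-endpoint lies in $W$ (call this $\delta(C)|_W$), and the edges of $E_S\setminus\delta(C)$. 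Consequently,
\[
c(\delta(U))\;=\;c(F_C)+c(\delta(C)|_W)+c(E_S)-c(\delta(C)).
\]

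To finish, I observe that $F_C$ and $\delta(C)|_W$ are disjoint subsets of $F=\delta(W)$ (the first stays inside $C$, the second crosses out of $C$), so $c(F_C)+c(\delta(C)|_W)\le c(F)$. Combining this with the contradiction hypothesis $c(F)<c(\delta(C))$ gives $c(\delta(U))<c(E_S)$, contradicting the minimality of $E_S$. Hence $c(\delta(C))\le c(F)$, which together with the reverse inequality yields $c(\delta(C))=c(F)$, and the uniqueness of $E_{T\cap C}$ then forces $\delta(C)=F=E_{T\cap C}$. The main obstacle is the edge-by-edge bookkeeping for $\delta(U)$; the structural observation that $C$ has no neighbors in $S^*\setminus C$ is what keeps the formula clean, since it eliminates an entire class of edges that would otherwise muddy the comparison with $c(E_S)$.
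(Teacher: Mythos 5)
Your proof is correct, but it takes a genuinely different route from the paper's. The paper argues directly at the level of edge sets: assuming $\delta(C)\neq E_{T\cap C}$ (so $c(E_{T\cap C})<c(\delta(C))$ by uniqueness), it forms $E'_S=(E_S\setminus\delta(C))\cup E_{T\cap C}$ and checks that this set is still $S$-separating --- using that, by minimality of $E_S$, every neighbor component of $C$ lies on the $\bar S$ side, so $E_S\setminus\delta(C)$ separates $S\setminus(T\cap C)$ from $\bar S\cup(T\cap C)$, while $E_{T\cap C}$ separates $T\cap C$ from everything else --- giving a strictly cheaper $S$-separating cutset and a contradiction; no submodularity is needed. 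You instead work with vertex sets: an uncrossing step (submodularity applied to $W$ and $S^*$, plus uniqueness of the \MTC) establishes the nesting $W\subseteq S^*$, and then you build the explicit $S$-side $U=(S^*\setminus C)\cup(W\cap C)$ and account for $\delta(U)$ edge by edge, using the observation that $C$ has no edges to $S^*\setminus C$. Your bookkeeping checks out (in particular $\delta(U)=F_C\cupdot\delta(C)|_W\cupdot(E_S\setminus\delta(C))$, and $F_C,\delta(C)|_W$ are disjoint subsets of $F$ precisely because $W\subseteq S^*$), and your cheaper cut $\delta(U)$ is in fact a subset of the paper's swapped set $E'_S$, so your bound is if anything tighter. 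What each approach buys: the paper's swap is shorter and avoids submodularity entirely; yours yields the standard uncrossing fact that the minimum $(T\cap C)$-side nests inside the $S$-side and exhibits an explicit vertex bipartition witnessing the cheaper cut, at the price of heavier case analysis and an extra appeal to uniqueness. Two small points you should make explicit if writing this up: the cuts $\delta(W\cap S^*)$, $\delta(W\cup S^*)$ and $\delta(U)$ are proper (both sides meet $T$, using that every component of $CC(E_S)$ contains a terminal and $S\neq\emptyset,T$), and the step from $\delta(W\cap S^*)=\delta(W)$ to $W\cap S^*=W$ uses connectivity of $G$ (or directly the uniqueness of the minimum cut as a partition).
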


\begin{proof}
    Assume toward contradiction that $\delta(C) \neq E_{T\cap C}$. Since both sets of edges separate between the terminals $T\cap C$ and $T\setminus C$, then $c(E_{T\cap C})< c(\delta(C))$. Let us replace the edges $\delta(C)$ by the edges $E_{T\cap C}$ in the cutset of $E_S$ and call this new set of edges ${E'}_S$, i.e. ${E'}_S=(E_S\setminus \delta(C))\cup E_{T\cap C}$.
    It is clear that $c({E'}_S)< c(E_S)$. We will prove that ${E'}_S$ is also a cutset that separates between $S$ and $\bar{S}$ in the graph $G$, contradicting the minimality of $E_S$.

    Assume \Wlog that $T\cap C\subseteq S$, and consider $E_S\setminus \delta(C)$. By the minimality of $E_S$ all the neighbors of $C$ contain terminals of $\bar{S}$, therefore the cutset $E_S\setminus \delta(C)$ separates the terminals $S\setminus (T\cap C)$ from $\bar{S}\cup (T\cap C)$ in $G$. Now consider $E'_S=(E_S\setminus \delta(C))\cup E_{T\cap C}$ and note that the connected component $C_{C+N_S(C)}$ contains all the terminals $T\cap C$ and some terminals of $\bar S$. This cutset $E'_S$ clearly separates $T\cap C$ from all other terminals, and also separates $S\setminus (T\cap C)$ from $\bar S \cup(T\cap C)$. Altogether this cutset separates between $S$ and $\bar{S}$ in $G$, and the lemma follows.
\end{proof}

\begin{lemma}\label{LMA: exists elementary CC}
For every $S\subset T$,
at least one component in $CC(E_S)$ is elementary.
\end{lemma}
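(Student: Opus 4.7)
The plan is to argue via an auxiliary super-graph on the components of $CC(E_S)$. Write $CC(E_S)=\{C_1,\dots,C_p\}$; from Section~\ref{sec:prelims} we have $p\ge 2$ and $\delta(C_i)\subseteq E_S$ for each $i$. I would first define a simple graph $H^{*}$ whose vertex set is $\{C_1,\dots,C_p\}$, placing a super-edge between $C_i$ and $C_j$ exactly when $G$ contains at least one edge with one endpoint in $C_i$ and one in $C_j$. Since $G$ is connected, $H^{*}$ is connected, and since $p\ge 2$ it has at least two vertices.

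The next step is to apply the standard graph-theoretic fact that any connected graph on at least two vertices has a non-cut vertex (for instance, any leaf of a spanning tree). Let $C^{*}$ be such a non-cut vertex of $H^{*}$, so that $H^{*}\setminus\{C^{*}\}$ remains connected.

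I would then translate this connectivity back to $G$. In $G\setminus\delta(C^{*})$, the set $C^{*}$ forms one connected component, because $G[C^{*}]$ is connected and $\delta(C^{*})$ consists exactly of the edges leaving $C^{*}$. Every $G$-edge with both endpoints in $V\setminus C^{*}$ survives in $G\setminus\delta(C^{*})$, so the connectivity of $V\setminus C^{*}$ in $G\setminus\delta(C^{*})$ is determined exactly by which super-nodes $C_j$ are linked by the super-edges of $H^{*}\setminus\{C^{*}\}$, together with the internal connectivity of each $C_j$. Hence $V\setminus C^{*}$ is a single connected component of $G\setminus\delta(C^{*})$, giving $\card{CC(\delta(C^{*}))}=2$. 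Combined with Lemma~\ref{LMA: boundary of C is min cutset}, which guarantees $\delta(C^{*})=E_{T\cap C^{*}}$, this is precisely the statement that $C^{*}$ is elementary (Definition~\ref{DEF: elementary connected component}).

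The only step that deserves careful verification is the correspondence between components of $H^{*}\setminus\{C^{*}\}$ and components of the induced subgraph $G[V\setminus C^{*}]$. It rests on two easy observations: edges inside any single $C_j$ lie outside $E_S$ and hence outside $\delta(C^{*})$, while cross-edges between distinct $C_j,C_k$ with $j,k\ne i$ have neither endpoint in $C^{*}$ and so also avoid $\delta(C^{*})$. These are the bridge linking the combinatorial existence of a non-cut vertex in $H^{*}$ to the structural conclusion about $G$.
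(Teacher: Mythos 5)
Your proof is correct and follows essentially the same route as the paper: the paper also contracts each component of $CC(E_S)$ to a vertex of an auxiliary (there, bipartite multi-)graph, observes that a component is elementary exactly when deleting its vertex leaves the auxiliary graph connected, and takes a leaf of a spanning tree as the non-cut vertex. Your explicit verification of the correspondence between components of $H^{*}\setminus\{C^{*}\}$ and those of $G\setminus\delta(C^{*})$ just spells out what the paper dismisses as easy to see.
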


\begin{proof}
    Fix $S\subset T$. Lemma \ref{LMA: boundary of C is min cutset} yields that $\delta(C)=E_{C\cap T}$ for every $C\in CC(E_S)$, thus it left to prove that there exists $C'\in CC(E_S)$ such that $|CC(\delta(C'))|=2$. For simplicity, we shall represent our graph $G$ as a bipartite graph $\G_S$ whose its vertices and edges are $CC(E_S)$ and $E_S$ respectively, i.e. we get $\G_S$ by contracting every $C\in CC(E_S)$ in $G$ into a vertex $v_C$. Let $V_1(\G_S)=\{v_C:\ C\cap T\subseteq S\}$ and $V_2(\G_S)=\{v_C:\ C\cap T\subseteq \bar S\}$ be the partition of $V(\G_S)$ into two sets. By the minimality of $E_S$ the graph $\G_S$ is connected, and each of $V_1(\G_S)$ and $V_2(\G_S)$ is an independent set.

    For every connected component $C\in CC(E_S)$, it is easy to see that $|CC(\delta(C))|=2$ if and only if $\G_S\setminus \{v_{C}\}$ is connected. Since $\G_S$ is connected, it has a spanning tree and thus $\G_S\setminus \{v_{C'}\}$ is connected for every leaf $v_{C'}$ of that spanning tree, and the lemma follows.
\end{proof}

\begin{theorem}[Decomposition into Elementary Cutsets]
\label{THM: E_S disjoint union elementary cuts}
For every $S\subset T$, the minimum cutset $E_S$
can be decomposed into a disjoint union of elementary cutsets.
\end{theorem}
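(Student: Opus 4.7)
The plan is induction on $|CC(E_S)|$. The base case $|CC(E_S)|=2$ is immediate from Definition~\ref{DEF: elementary cut-set}: $E_S$ itself is already elementary, so the trivial one-term decomposition suffices.

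For the inductive step, assume $|CC(E_S)|\geq 3$ and that the theorem holds whenever the component count is strictly smaller. Apply Lemma~\ref{LMA: exists elementary CC} to obtain an elementary component $C^*\in CC(E_S)$; by Lemma~\ref{LMA: boundary of C is min cutset}, its boundary $\delta(C^*)=E_{T\cap C^*}$ is itself an elementary cutset. Since $E_S$ separates $S$ from $\bar S$ and $C^*$ lies entirely on one side, we may swap $S\leftrightarrow\bar S$ if necessary (using $E_S=E_{\bar S}$) and assume $T\cap C^*\subseteq S$. Set $S'\eqdef S\setminus(T\cap C^*)$, so $\bar{S'}=\bar S\cup(T\cap C^*)$.

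The pivotal step is to prove $E_{S'}=E_S\setminus\delta(C^*)$, for which I verify both that this set is a valid $S'$-separating cutset and that it has minimum cost among such cutsets. For validity, I use the observation that every edge of $\delta(C^*)\subseteq E_S$ has one endpoint in $C^*$ (on the $S$-side) and the other endpoint on the $\bar S$-side. Consequently, in $G\setminus(E_S\setminus\delta(C^*))$, the component $C^*$ merges only with $\bar S$-side components of $CC(E_S)$; the resulting super-component contains only terminals of $(T\cap C^*)\cup\bar S=\bar{S'}$, while the remaining components of $CC(E_S)$ each sit entirely inside either $S'$ or $\bar{S'}$. For optimality, I combine two-sided minimality: validity gives $c(E_{S'})\leq c(E_S\setminus\delta(C^*))=c(E_S)-c(\delta(C^*))$; conversely, since $E_{S'}\cup\delta(C^*)$ separates $S$ from $\bar S$, minimality of $E_S$ gives $c(E_S)\leq c(E_{S'}\cup\delta(C^*))\leq c(E_{S'})+c(\delta(C^*))$. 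All inequalities collapse to equalities, and the uniqueness tie-breaking assumption then forces $E_{S'}=E_S\setminus\delta(C^*)$, which is automatically disjoint from $\delta(C^*)$.

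Two auxiliary checks finish the induction. First, $S'\notin\{\emptyset,T\}$: clearly $\bar{S'}\supseteq\bar S\neq\emptyset$, and if $S'=\emptyset$ then $T\cap C^*=S$, making $\delta(C^*)$ itself an $S$-separating cutset contained in $E_S$, so uniqueness forces $\delta(C^*)=E_S$, contradicting $|CC(E_S)|\geq 3$. Second, $|CC(E_{S'})|<|CC(E_S)|$: connectedness of $G$ together with $C^*\neq V$ guarantees $\delta(C^*)\neq\emptyset$, so $C^*$ merges with at least one other component via a retained $\delta(C^*)$-edge, strictly reducing the component count. The induction hypothesis then decomposes $E_{S'}$ into a disjoint union of elementary cutsets, and appending the elementary cutset $\delta(C^*)$ yields the required decomposition $E_S=\delta(C^*)\sqcup E_{S'}$. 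I expect the main obstacle to be the pivotal equality $E_{S'}=E_S\setminus\delta(C^*)$, where the side-preservation observation (edges of $\delta(C^*)$ leave $C^*$ only toward the $\bar S$-side) is what keeps the validity argument clean, and the two-sided minimality plus uniqueness is what certifies optimality.
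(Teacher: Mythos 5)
Your proof is correct and follows essentially the same route as the paper's: peel off the boundary $\delta(C^*)$ of an elementary component (via Lemmas~\ref{LMA: exists elementary CC} and~\ref{LMA: boundary of C is min cutset}), establish $E_{S'}=E_S\setminus\delta(C^*)$ by combining validity with the minimality of $E_S$ and the uniqueness tie-breaking, and induct on $|CC(E_S)|$. Your write-up is if anything slightly more careful than the paper's on the edge cases ($S'\neq\emptyset$ and the strict decrease in the component count), but the underlying argument is the same.
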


The idea of the proof is to iteratively decrease the number of connected components in $CC(E_S)$ by uniting an elementary connected component
with all its neighbors (while recording the cutset between them),
until we are left with only one connected component --- all of $V$.

\begin{proof}[Proof of Theorem~\ref{THM: E_S disjoint union elementary cuts}]
We will need the following definition. Given $S\subset T$ and its minimum cutset $E_S$, we say that two connected components $C,C'\in CC(E_S)$ are
\emph{neighbors with respect to $E_S$}, if $E_S$ has an edge from $C$ to $C'$.
We denote by $N_S(C)\subseteq CC(E_S)$
the set of neighbors of $C$ with respect to $E_S$.
Observe that removing $\delta(C)$ from the cutset $E_S$ is equivalent to uniting the connected component $C$ with all its neighbors $N_S(C)$. Denoting this new connected component by $C_{C+N_S(C)}$ we get that $CC(E_S\setminus \delta(C))=\Big(CC(E_S)\setminus \big( \{C\}\cup N_S(C)\big) \Big)\cup \{C_{C+N_S(C)}\}$.

Let $E_S$ be a minimum cutset that separates $S$ from $\bar{S}$.
By Lemma~\ref{LMA: exists elementary CC},
there exists a component $C\in CC(E_S)$ that is elementary,
and by Lemma~\ref{LMA: boundary of C is min cutset}, $\delta(C)=E_{T\cap C}$.
Assume \Wlog that $T\cap C\subseteq S$ (rather than $\bar S$),
and unite $C$ with all its neighbors $N_S(C)$.
Now, we would like to show that this step is equivalent to ``moving''
the terminals in $C$ from $S$ to $\bar{S}$.
Clearly, the new cutset $E_S\setminus \delta(C)$
separates the terminals $S'=S\setminus (T\cap C)$
from $T\setminus {S'}=\bar{S}\cup (T\cap C)$,
but to prove that
\begin{equation} \label{eq:E_S'}
  E_{S'}=E_S\setminus \delta(C),
\end{equation}
we need to argue that this new cutset %
has \emph{minimum cost} among those separating $S'$ from $\bar{S}'$.
To this end, assume to the contrary;
then $E_{S'}$ must have a strictly smaller cost than $E_S\setminus \delta(C)$, because both cutsets separate $S'$ from $T\setminus {S'}$.
Now similarly to the proof of Lemma~\ref{LMA: boundary of C is min cutset},
it follows that $E_{S'}\cup \delta(C)$ separates $S$ from $\bar{S}$,
and has a strictly smaller cost than $E_S$,
which contradicts the minimality of $E_S$.

Using \eqref{eq:E_S'}, we can write $E_{S}=\delta(C)\cupdot E_{S'}$
and continue iteratively with $E_{S'}$ while it is non-empty
(i.e., $|CC(E_{S'})|>1$).
Formally, the theorem follows by induction on $|CC(E_S)|$.
\end{proof}

To easily examine all the elementary cutsets in a graph $G$,
we define
\begin{linenomath}
$$ \T_e(G) \eqdef \{S\subset T \ :\ |CC(E_S)|=2\}. $$
\end{linenomath}
Using Theorem~\ref{THM: E_S disjoint union elementary cuts},
the cost of every minimum terminal cut can be recovered, in a certain manner,
from the costs of the elementary cutsets of $G$,
and this yields the following corollary.

\begin{corollary}\label{COR: elementary cuts for general graphs}
Let $H$ be a $k$-terminal network with same terminals as $G$.
If %
$\T_e(G)=\T_e(H)$ and
\begin{equation} \label{EQ: elementary cut sparsifier}
  \forall S\in \T_e(G),\qquad
  \mincut_G(S) \leq \mincut_H(S)\leq q\cdot \mincut_G(S) ,
\end{equation}
then $H$ is a cut-sparsifier of $G$ of quality $q$.
\end{corollary}

\begin{proof}
Given $G$ and $H$ as above, we only need to prove~\eqref{EQ:CutSparsifier}.
To this end, fix $S\subset T$.
Observe that for every $\varphi\subseteq \T_e(G)=\T_e(H)$, the set $\mycup_{S'\in \varphi}\mincutset_G(S')$ is $S$-separating in $G$ if and only if for every $t\in S$ and $t'\in \bar S$ there exists $S'\in \varphi$ such that \Wlog $t\in S'$ and $t'\notin S'$. Thus, if $\varphi$ is a partition of $S$, i.e. $S=\cupdot\limits_{S'\in \varphi}S'$, then $\mycup_{S'\in \varphi}\mincutset_G(S')$ is $S$-separating in $G$. Since the same arguments hold also for $H$, we get the following:
\begin{linenomath}
\begin{equation*} %
  \mycup_{S'\in \varphi}\mincutset_G(S')\text{ is } S\text{-separating in } G
  \Leftrightarrow
  \mycup_{S'\in \varphi}\mincutset_H(S')\text{ is } S\text{-separating in } H.
\end{equation*}
\end{linenomath}

By Theorem~\ref{THM: E_S disjoint union elementary cuts},
there exists $\varphi_G\subseteq \T_e(G)$ such that $S=\cupdot\limits_{S'\in \varphi_G}S'$ and\\
$\mincutset_G(S)=\cupdot\limits_{S'\in \varphi_G}\mincutset_G(S')$, thus
\[
  \mincut_H(S)
  \leq\sum_{S'\in \varphi_G}\mincut_H(S')
  \leq q\cdot \sum_{S'\in \varphi_G} \mincut_G(S')
  = q\cdot \mincut_G(S).
\]
Applying Theorem~\ref{THM: E_S disjoint union elementary cuts} to $H$ together with an analogous argument yields that $\mincut_G(S) \leq \mincut_H(S)$, which proves \eqref{EQ:CutSparsifier} and the corollary follows.
\end{proof}

\section{Mimicking Networks for Planar Graphs}
\label{SEC: UB mimicking planar graphs}
We now present an application of our results in Section~\ref{SEC: Elementary cuts General graphs}.
We begin with a bound on the mimicking network size for a planar graph $G$
as a function of the number of elementary cutsets
(Theorem~\ref{THM: planar mimick network size alpha|U|^2}).
We then obtain an upper bound of $O(k2^{2k})$ for every planar network
(Corollary~\ref{COR: UB 2^2k planar graphs}),
which improves the previous work \cite[Theorem 1.1]{KR13SODA}
by a factor of $k$, thanks to the use of elementary cuts.
The underlying reason is that the previous analysis in~\cite{KR13SODA}
considers all the $2^k$ possible terminal cutsets, and each of them
is a collection of at most $k$ simple cycles in the dual graph $G^*$.
We can consider only the elementary cutsets
by Corollary~\ref{COR: elementary cuts for general graphs},
and each of them is a simple cycle in $G^*$
by Definition~\ref{DEF: elementary cut-set}.
Thus, we consider a total of $2^k$ simple cycles,
saving a factor of $k$ over the earlier naive bound of $k2^k$ simple cycles.

\begin{theorem}\label{THM: planar mimick network size alpha|U|^2}
Every planar network $G$,
in which $|CC(E_S\cup E_{S'})|\leq \alpha$ for all $S,S'\in \T_e(G)$,
admits a minor mimicking network $H$ of size $O(\alpha\cdot |\T_e(G)|^2)$.
\end{theorem}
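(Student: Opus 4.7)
The plan is to work in the planar dual $G^*$ and exploit the fact that, by Definition~\ref{DEF: elementary cut-set}, each elementary cutset $E_S$ forms a single simple closed curve $\gamma_S$ in the dual (the two components of $G\setminus E_S$ correspond to the interior and exterior of $\gamma_S$). Setting $m:=|\T_e(G)|$, I would consider the planar arrangement $\mathcal{A}$ of the $m$ closed curves $\{\gamma_S : S\in \T_e(G)\}$, and obtain $H$ by contracting, for each face $f$ of $\mathcal{A}$, the set of primal vertices of $G$ that lie inside $f$ (together with the primal edges among them, namely those not in $\bigcup_S E_S$). A standard planar-duality fact shows that each such vertex set spans a connected subgraph of $G$, so the contraction produces a bona fide minor $H$ of $G$. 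No two distinct terminals $t,t'$ get merged, since applying Theorem~\ref{THM: E_S disjoint union elementary cuts} to $S=\{t\}$ yields an elementary cutset separating $t$ from $t'$.

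To show $H$ is a mimicking network, I would invoke Theorem~\ref{THM: elementary cuts for general graphs}: for each $S\in \T_e(G)$, the two sides of $E_S$ in $G$ are unions of faces of $\mathcal{A}$, so $E_S$ survives in $H$ as a cut of cost $\mincut_G(S)$, giving $\mincut_H(S)\le \mincut_G(S)$; the reverse inequality is automatic because $H$ is a minor of $G$ (any cut in $H$ lifts to a cut in $G$ of the same cost).

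For the size bound, I would count faces of $\mathcal{A}$ by an incremental arrangement argument. When $\gamma_{S_i}$ is added to the sub-arrangement of $\gamma_{S_1},\ldots,\gamma_{S_{i-1}}$, the number of new faces is at most the number of arcs into which $\gamma_{S_i}$ is partitioned by the previous cycles, which is bounded by $\sum_{j<i} c_{ij}$, where $c_{ij}$ is the number of intersection points between $\gamma_{S_i}$ and $\gamma_{S_j}$. Planar duality applied to the pair $\{\gamma_{S_i},\gamma_{S_j}\}$ identifies the faces of their two-curve arrangement with the connected components of $G\setminus(E_{S_i}\cup E_{S_j})$, so Euler's formula combined with the hypothesis $|CC(E_{S_i}\cup E_{S_j})|\le \alpha$ yields $c_{ij}=O(\alpha)$. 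Summing over $i$ gives $|V(H)|=O(\alpha m^2)=O(\alpha\cdot |\T_e(G)|^2)$, as required.

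I expect the main technical obstacle to be the handling of pairs of cycles that share edges in $G^*$, rather than meeting only at isolated transversal intersections; in that case the naive ``intersection point'' count breaks down, and one must argue either by an infinitesimal perturbation of the embedding that makes all intersections transversal (preserving the face count) or by directly counting maximal arcs of $\gamma_{S_i}$ inside a single face of the sub-arrangement, using Euler's formula on the dual subgraph $\bigcup_{j\le i}\gamma_{S_j}$. A secondary technicality is the planar-duality fact that each face of $\mathcal{A}$ contains a connected set of primal vertices, which is needed to legitimize the contraction; this is standard but deserves explicit verification.
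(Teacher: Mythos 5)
Your proposal is correct and follows essentially the same route as the paper: the same minor $H$ (contract the regions carved out by the union of the elementary dual cycles, i.e., the connected components of $G\setminus\bigcup_{S\in\T_e(G)}E_S$), the same planar-duality identification of the faces of $G^*[E_S^*\cup E_{S'}^*]$ with the components of $G\setminus(E_S\cup E_{S'})$ combined with Euler's formula to get $O(\alpha)$ pairwise intersection complexity, and a quadratic summation — the paper merely does the summation via a global Euler/meeting-vertex count of the whole union rather than your incremental insertion of curves, which also sidesteps the shared-edge/perturbation issue you flag. One small imprecision: Theorem~\ref{THM: elementary cuts for general graphs} formally also requires $\T_e(G)=\T_e(H)$, which you do not verify, but this appeal is unnecessary, since by Theorem~\ref{THM: E_S disjoint union elementary cuts} every minimum terminal cutset satisfies $E_S\subseteq\bigcup_{S'\in\T_e(G)}E_{S'}$ and hence survives the contraction intact, giving $\mincut_H(S)\le\mincut_G(S)$ for all $S\subset T$ directly (with the reverse inequality from the minor property, as you note).
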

The proof of this theorem appears in Section~\ref{Sec: UB mimicking planar}.
It is based on applying the machinery of \cite{KR13SODA},
but restricting the analysis to elementary cutsets.
\begin{corollary}\label{COR: UB 2^2k planar graphs}
Every planar network $G$ admits a minor mimicking network of size $O(k2^{2k})$.
\end{corollary}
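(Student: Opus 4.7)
The plan is to apply Theorem~\ref{THM: planar mimick network size alpha|U|^2} after establishing routine bounds on its two parameters. The first is trivial: $|\T_e(G)| \leq 2^{k-1}$, since $E_S = E_{\bar S}$ pairs up the subsets in $\T_e(G)$ under $S \leftrightarrow \bar S$ (with no fixed points because $\emptyset, T \notin \T_e(G)$). The second and main bound is $\alpha \leq k$, which I would derive from the claim that for any two elementary cutsets $E_S$ and $E_{S'}$, every connected component of $G \setminus (E_S \cup E_{S'})$ contains at least one terminal, whence $|CC(E_S \cup E_{S'})| \leq |T| = k$.

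To prove the terminal-containment claim, I argue by contradiction. Suppose $C$ is a terminal-free component of $G \setminus (E_S \cup E_{S'})$, and let $A \in CC(E_S)$ and $B \in CC(E_{S'})$ be the unique components that contain $C$. Since the two cuts are elementary, $A$ and $B$ carry all terminals of $S$ and $S'$ respectively, so $A \setminus C$ and $B \setminus C$ are nonempty, and $\delta(A \setminus C)$ is an $S$-separating cut distinct from $E_S = \delta(A)$; likewise $\delta(B \setminus C)$ is an $S'$-separating cut distinct from $E_{S'}$. The unique-minimality assumption then yields $c(\delta(A \setminus C)) > c(E_S)$ and $c(\delta(B \setminus C)) > c(E_{S'})$. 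I would rewrite both strict inequalities in the common basis of capacities from $C$ to the four regions $A \cap B$, $A \setminus B$, $B \setminus A$, and $V \setminus (A \cup B)$. Crucially, $C$ has no edges to $A \cap B \setminus C$ (such edges would avoid both $E_S$ and $E_{S'}$, contradicting that $C$ is a component of $G \setminus (E_S \cup E_{S'})$), so the $A \cap B$ contribution vanishes. Summing the two resulting inequalities should yield the impossible $0 > 2\, c(C, V \setminus (A \cup B))$, where $c(C, X)$ denotes the total weight of edges between $C$ and $X$; this contradicts the nonnegativity of edge weights.

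Substituting $|\T_e(G)| \leq 2^{k-1}$ and $\alpha \leq k$ into Theorem~\ref{THM: planar mimick network size alpha|U|^2} then gives a minor mimicking network of size $O(k \cdot (2^{k-1})^2) = O(k \cdot 2^{2k})$, as desired. The main obstacle will be the four-region bookkeeping in the second step: the two unique-minimality inequalities must be expressed in a common basis so that summation cancels $c(C, A \setminus B)$ and $c(C, B \setminus A)$ and exposes the sign contradiction in the remaining term.
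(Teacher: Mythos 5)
Your proof is correct and follows essentially the same route as the paper: apply Theorem~\ref{THM: planar mimick network size alpha|U|^2} with the trivial bound $|\T_e(G)|\le 2^k$ and $\alpha=O(k)$, where the latter rests on the fact that every component of $G\setminus(E_S\cup E_{S'})$ contains a terminal (the paper simply cites this as \cite[Lemma 2.2]{KR13SODA}, restated as Lemma~\ref{LMA: every C has terminal}). Your from-scratch exchange argument for that fact does go through: with $c(C,(A\cap B)\setminus C)=0$, the two strict minimality inequalities for $\delta(A\setminus C)$ and $\delta(B\setminus C)$ become $c(C,A\setminus B)>c(C,B\setminus A)+c(C,V\setminus(A\cup B))$ and its symmetric counterpart, and summing yields $0>2\,c(C,V\setminus(A\cup B))$, a contradiction.
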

\begin{proof}
Apply Theorem~\ref{THM: planar mimick network size alpha|U|^2},
using an easy bound $\alpha = O(k)$ from Lemma~\ref{LMA: every C has terminal},
and a trivial bound on the number of elementary cutsets
$|\T_e(G)| \leq 2^k$.
\end{proof}

\subsection{Proof of Theorem~\ref{THM: planar mimick network size alpha|U|^2}}
\label{Sec: UB mimicking planar}

Given a $k$-terminal network $G$ and $\alpha>0$ such that $|CC(E_S\cup E_{S'})|\leq \alpha$ for every $S,S'\in \T_e(G)$, we prove that it admits a minor mimicking network $H$ of size $O(\alpha|\T_e(G)|^2)$. Let $\hat E=\cup_{S\in \T_e(G)}E_S$, and construct $H$ by contracting every connected component of $G\setminus \hat E$ into a single vertex. Notice that edge contractions can only increase the cost of any minimum terminal cut, and that in our construction edges of an elementary cutset of $G$ are never contracted.
Thus, the resulting $H$ is a minor of $G$, that maintains all the elementary cutsets of $G$, and by Corollary~\ref{COR: elementary cuts for general graphs} $H$ maintains all the terminal mincuts of $G$. We proceed to bound the number of connected components in $G\setminus \hat E$, as this will clearly be the size of our mimicking network $H$. The crucial step here is to use the planarity of $G$ by employing the dual graph of $G$ denoted by $G^*$ (for basic notions of planar duality see Appendix~\ref{app: Planar Duality}).

    Loosely speaking, the elementary cutsets in $G$ correspond to cycles in the dual graph $G^*$,
    and thus we consider the dual edges of $\hat E$,
    which may be viewed as a subgraph of $G^*$ comprising of (many) cycles.
    We then use Euler's formula and the special structure of this subgraph of cycles;
    more specifically, we count its meeting vertices, which turns out to
    require the aforementioned bound of $\alpha$ for two sets of terminals $S,S'$.
    This gives us a bound on the number of faces in this subgraph,
    which in turn is exactly the number of connected components
    in the primal graph (Lemma \ref{LMA: bound number of faces in dual graph using meeting points}). Observe that removing edges from a graph $G$ can disconnect it into
    (one or more) connected components.
    The next lemma characterizes this behavior in terms
    of the dual graph $G^*$. Let $V_m(G)$ be all the vertices in the graph $G$ with degree $\geq 3$, and call them \emph{meeting vertices} of $G$. The following lemma bounds the number of meeting vertices in two elementary cuts by $O(\alpha)$.

    \begin{lemma}\label{LMA: meeting vertices in two elementary cycles}
    For every two subsets of terminals $S,S'\in \T_e(G)$, the dual graph $G^*[E_S^*\cup E_{S'}^*]$ has at most $2\alpha$ meeting vertices.
    \end{lemma}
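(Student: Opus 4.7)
The plan is to combine planar duality with Euler's formula applied to the dual subgraph $H \eqdef G^*[E_S^*\cup E_{S'}^*]$, viewed with the planar embedding inherited from $G^*$. A preliminary observation is that since $S,S'\in \T_e(G)$, both $E_S^*$ and $E_{S'}^*$ are simple cycles in $G^*$ (by Definition~\ref{DEF: elementary cut-set} together with the planar duality between cutsets and circuits from Theorem~\ref{THM: duality cuts and circuits}). Consequently every vertex of $H$ has degree in $\{2,3,4\}$: each simple cycle contributes $0$ or $2$ to the degree of any vertex, so the total is at most $4$, and a vertex lying on only one cycle has degree exactly $2$.

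The first main step is a degree count. Let $V_i$ denote the number of vertices of degree $i$ in $H$, and let $m \eqdef V_3+V_4$ be the number of meeting vertices. Summing degrees gives $2\card{E(H)}=2V_2+3V_3+4V_4$, which rearranges to $\card{E(H)}-\card{V(H)}=V_3/2+V_4\geq m/2$. The second main step is to control the number of faces of $H$ via primal-dual correspondence: by Lemma~\ref{LMA: connected components vs regions} applied to $M=E_S\cup E_{S'}$, the faces of $H$ are in bijection with the connected components of $G\setminus(E_S\cup E_{S'})$, so $\card{F(H)}=\card{CC(E_S\cup E_{S'})}\leq \alpha$ by the standing hypothesis on $\alpha$.

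The plan then concludes by invoking Euler's formula for embedded planar multigraphs with $c\geq 1$ connected components, namely $\card{V(H)}-\card{E(H)}+\card{F(H)}=1+c$. Combining with the two inequalities above yields
\[
    m/2 \;\leq\; \card{E(H)}-\card{V(H)} \;=\; \card{F(H)}-1-c \;\leq\; \alpha-2,
\]
hence $m\leq 2\alpha-4\leq 2\alpha$, as required.

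I expect the main (minor) obstacle to be pinning down the vertex-degree bound cleanly — in particular ruling out degrees $\geq 5$ and verifying that the $\{2,3,4\}$ classification is exhaustive in the presence of possible shared edges between the two dual cycles — together with the bookkeeping for Euler's formula in the multigraph and possibly-disconnected setting. Corner cases such as $E_S=E_{S'}$, where the two cycles coincide and $m=0$ trivially, or cases where the two cycles share edges so that some shared vertices have degree $2$ rather than $3$ or $4$, do not affect the inequality.
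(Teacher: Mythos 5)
Your proposal is correct and follows essentially the same route as the paper's proof: faces of $G^*[E_S^*\cup E_{S'}^*]$ are identified with connected components of $G\setminus(E_S\cup E_{S'})$ via Lemma~\ref{LMA: connected components vs regions}, the meeting vertices are counted through a handshaking/degree argument, and Euler's formula ties the two together to give $\card{V_m}\le 2\alpha$. The only differences are cosmetic refinements (the explicit $\{2,3,4\}$ degree classification and tracking the component count $c$, yielding the marginally sharper $2\alpha-4$), which do not change the argument.
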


    \begin{proof}[Proof Sketch]
        For simplicity denote by $G^*_{SS'}$ the graph $G^*[E_S^*\cup E_{S'}^*]$.
        By our assumption, the graph $G\setminus (E_S\cup E_{S'})$ has at most $\alpha$ connected components. By Lemma \ref{LMA: connected components vs regions} every connected component in $G\setminus(E_S \cup E_{S'})$ corresponds to a face in $G^*_{SS'}$. Therefore, $G^*_{SS'}$ has at most $\alpha$ faces. Let $V_{SS'},E_{SS'}$ and $F_{SS'}$ be the vertices, edges and faces of the graph $G^*_{SS'}$. Note that the degree of every vertex in that graph is at least 2. Thus, by the degree-sum formula (the total degree of all vertices equals to twice the number of edges), $2|E_{SS'}|\geq 2|V_{SS'}\setminus V_m( G^*_{SS'})|+3|V_m( G^*_{SS'})|$ and so $|E_{SS'}|\geq |V_{SS'}|+\frac{1}{2}V_m( G^*_{SS'})$. Together with Euler formula we get that $\alpha\geq |F_{SS'}|\geq |E_{SS'}|-|V_{SS'}|\geq \frac{1}{2}|V_m(G^*_{SS'})|$, and the lemma follows.
    \end{proof}

\begin{lemma}\label{LMA: bound number of faces in dual graph using meeting points}
The dual graph $G^*[\hat{E}^*]$ has at most $O(\alpha|\T_e(G)|^2)$ faces. Thus, $G\setminus \hat{E}$ has at most $O(\alpha|\T_e(G)|^2)$ connected components.
\end{lemma}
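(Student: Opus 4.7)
The plan is to invoke Euler's formula for the plane graph $G^*[\hat E^*]$ and control $\card{E}-\card{V}$ by a careful accounting of ``meeting multiplicity'' across cycles, built on top of Lemma~\ref{LMA: meeting vertices in two elementary cycles}. Throughout, let $V$, $E$, $F$, $c$ denote the vertex set, edge set, face set, and number of connected components of $G^*[\hat E^*]$. Since $\hat E^*=\bigcup_{S\in \T_e(G)}E_S^*$ is a union of simple cycles, every $v\in V$ has $\deg(v)\ge 2$. Let $V_m\subseteq V$ be the meeting vertices (those with $\deg(v)\ge 3$), let $N_v \eqdef \card{\set{S\in\T_e(G):v\in V(E_S^*)}}$, and let $m_S\eqdef \card{V_m\cap V(E_S^*)}$.

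\textbf{Step 1: a per-cycle meeting count.} I would prove that $m_S \le 2\alpha(\card{\T_e(G)}-1)$ for every $S$. Fix $v\in V_m\cap V(E_S^*)$. Since $\deg_{\hat E^*}(v)\ge 3$ but $E_S^*$ contributes only $2$ edges at $v$, there exists some other cycle $E_{S'}^*$ with $S'\ne S$ containing an edge at $v$ that is \emph{not} in $E_S^*$. This forces $v$ to have degree $\ge 3$ in $G^*[E_S^*\cup E_{S'}^*]$ as well, hence $v\in V_m(G^*[E_S^*\cup E_{S'}^*])$. Charging each such $v$ to one such witness $S'$ and invoking Lemma~\ref{LMA: meeting vertices in two elementary cycles} yields
\[
   m_S\;\le\;\sum_{S'\ne S}\card{V_m(G^*[E_S^*\cup E_{S'}^*])}\;\le\;2\alpha(\card{\T_e(G)}-1).
\]
Summing over $S\in\T_e(G)$ gives $\sum_S m_S = \sum_{v\in V_m}N_v \le O(\alpha\,\card{\T_e(G)}^2)$.

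\textbf{Step 2: Euler plus degree counting.} Since $\deg(v)=2$ for $v\notin V_m$, and since each cycle through $v$ uses at most two of its incident edges (so $\deg(v)\le 2N_v$), I would write
\[
   \card{E}-\card{V}\;=\;\tfrac{1}{2}\sum_{v\in V_m}(\deg(v)-2)\;\le\;\tfrac{1}{2}\sum_{v\in V_m}\deg(v)\;\le\;\sum_{v\in V_m}N_v\;=\;O(\alpha\,\card{\T_e(G)}^2).
\]
Euler's formula $\card{F}=\card{E}-\card{V}+1+c$, combined with the trivial bound $c\le \card{\T_e(G)}$ (each connected component of $G^*[\hat E^*]$ contains at least one of the $\card{\T_e(G)}$ cycles, since every vertex of $G^*[\hat E^*]$ lies on some cycle), then delivers $\card{F}=O(\alpha\,\card{\T_e(G)}^2)$. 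The bound on connected components of $G\setminus \hat E$ follows from Lemma~\ref{LMA: connected components vs regions}, which puts them in bijection with faces of $G^*[\hat E^*]$.

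The main obstacle is Step~1: Lemma~\ref{LMA: meeting vertices in two elementary cycles} only controls meeting vertices pairwise, yet a vertex $v\in V_m$ of the full union need \emph{not} be a meeting vertex of every pair of cycles that contains it (two cycles might share both of their edges at $v$, masking the high degree). The saving observation is that once we additionally fix a cycle $E_S^*$ through $v$, the very edge responsible for $\deg_{\hat E^*}(v)\ge 3$ automatically exhibits a witness $E_{S'}^*$ for which the pair $(S,S')$ does see $v$ as a meeting vertex; this per-cycle witness is what makes the charging work and keeps the final count quadratic in $\card{\T_e(G)}$ rather than cubic.
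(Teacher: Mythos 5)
Your proof is correct and follows essentially the same route as the paper: each meeting vertex of the full union $G^*[\hat E^*]$ lying on a cycle $E_S^*$ is charged to a pairwise union $G^*[E_S^*\cup E_{S'}^*]$, Lemma~\ref{LMA: meeting vertices in two elementary cycles} caps each pairwise count at $2\alpha$, and Euler's formula plus Lemma~\ref{LMA: connected components vs regions} then converts the resulting $O(\alpha\card{\T_e(G)}^2)$ excess into the face/component bound. Your bookkeeping differs only cosmetically from the paper's sketch (you bound $\card{E}-\card{V}$ directly via the handshake identity rather than counting edges incident to meeting vertices, and bound the number of components by $\card{\T_e(G)}$ instead of $k$), which is if anything slightly cleaner.
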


\begin{proof}[Proof Sketch]
For simplicity denote by $\hat G^*$ the graph $G^*[\hat E^*]$, and let $E_m(\hat G^*)$ be all the edges in $\hat G^*$ that are incident to meeting vertices. Fix an elementary subset of terminals $S\in \T_e(G)$. By Lemma \ref{LMA: meeting vertices in two elementary cycles} there are at most $2\alpha$ meeting vertices in $G^*[E_S^*\cup E_{S'}^*]$, for every $S'\in \T_e(G)$. Summing over all the different $S'$ in $\T_e(G)$ we get that there are at most $2\alpha|\T_e(G)|$ meeting vertices on the cycle $E^*_S$ in the graph $\hat G^*$. Since the degree of every vertex in $G^*(E_S^*)$ is 2, we get that
        \begin{linenomath}
        $$|E_S^*\cap E_m(\hat G^*)|\leq 2|V(G^*(E_S^*))\cap V_m(\hat G^*)|\leq 4\alpha |\T_e(G)|.$$
        \end{linenomath}
        Again summing over at most $|\T_e(G)|$ different elementary subsets $S$ we get that $|E_m(\hat G^*)|\leq 4\alpha|\T_e(G)|^2$. Plugging it into Euler formula for the graph $G^*[\hat E^*]$, together with the inequality $|E(\hat G^*)\setminus E_m(\hat G^*)|\leq |V(\hat G^*)\setminus V_m( \hat G^*)|$ by the fact that the two sides represent the edges and vertices of a graph consisting of vertex-disjoint paths (because its maximum degree is at most $2$), we get the following

\begin{linenomath}
        \begin{align*}
            |F(\hat G^*)|&=|E(\hat G^*)|-|V(\hat G^*)|+1+|CC(\hat G^*)|\\
            &\leq |E_m(\hat G^*)|-|V_m(\hat G^*)|+1+|CC(\hat G^*)|\\
            &\leq 4\alpha |\T_e(G)|^2+1+|CC(\hat G^*)|.
        \end{align*}
\end{linenomath}

        Since $|\T_e(G)|\geq k$ it left to bound $|CC(\hat G^*)|$ by $k$. Assume towards contradiction that $|CC(\hat G^*)|\geq k+1$, thus there exists a connected component $W$ in $\hat G^*$ that does not contains a terminal face of $G^*$. By the construction of $\hat E^*$, $W$ contains at least one elementary shortest cycle that separates between terminal faces of $G^*$ in contradiction. Finally, Lemma \ref{LMA: connected components vs regions} with $M=\hat{E}$ yields that $|CC(G\setminus \hat{E})|=|F(G^*[\hat E^*])|=O(\alpha|\T_e(G)|^2)$ and the lemma follows.
    \end{proof}

    Recall that we construct our mimicking network $H$ by contracting every connected component of $G\setminus \hat{E}$ into a single vertex. By Lemma \ref{LMA: bound number of faces in dual graph using meeting points} we get that $H$ is a minor of $G$ of size $O(\alpha|\T_e(G)|^2)$ and Theorem~\ref{THM: planar mimick network size alpha|U|^2} follows.

\section{Mimicking Networks for Planar Graphs with Bounded $\gamma(G)$}
\label{SEC: UB mimicking planar graphs bounded gamma}

In this section,
the setup is that $G=(V,E,T,c)$ is a planar $k$-terminal network with terminal face cover $\gamma=\gamma(G)$.
Let $f_1,\ldots, f_{\gamma}$ be faces
that are incident to all the terminals,
and let $k_i$ denote the number of terminals incident to face $f_i$.
We can in effect assume that $\sum_{i=1}^{\gamma} k_i = k$,
because we can count each terminal as incident to only one face, and ``ignore'' its incidence to the other $\gamma-1$ faces (if any). %

Our goal is to construct for $G$ a mimicking network $H$,
and bound its size as a function of $k$ and $\gamma(G)$.
Our construction of $H$ is the same as
in Theorem~\ref{THM: planar mimick network size alpha|U|^2},
and the challenge is to bound its size. The implications to flow sparsifiers are discussed in Section~\ref{sec:flow}.

\paragraph{All terminals are on one face.}
We start with the basic case $\gamma=1$,
i.e., all the terminals are on the same face,
which we can assume to be the outerface.
The idea is to apply Theorem~\ref{THM: planar mimick network size alpha|U|^2}.
The first step is to characterize all the elementary cutsets,
which yields immediately an upper bound on their number.
The second step is to analyze the interaction between any two elementary cutsets.

\begin{theorem}\label{THM: at most k^2 elementary min cuts}
In every planar $k$-terminal network $G$ with $\gamma(G)=1$,
the number of elementary cutsets is
$\card{\T_e(G)} \leq \binom{k}{2}$.
\end{theorem}

The proof,
appearing in Section~\ref{SEC: Elementary cuts planar graphs 1 face},
is based on two observations that view the outerface as a cycle of vertices:
(1) every elementary cutset disconnects the outerface's cycle into two paths,
which we call intervals (see Definition~\ref{Def: Interval}); and
(2) every such interval can be identified by the terminals it contains.
It then follows that every elementary cutset $E_S$ is uniquely determined
by two terminals, leading to the required bound.

The next lemma bounds the interaction between any two elementary cutsets. Its proof appears at the end of Section~\ref{SEC: Elementary cuts planar graphs 1 face}.

\begin{lemma}\label{LMA: Connected components two min-cuts}
For every planar $k$-terminal network $G$ with $\gamma(G)=1$, and for every $S,S'\in \T_e(G)$, there are at most $4$ connected components in $G \setminus (E_S \cup E_{S'})$.
\end{lemma}

\begin{corollary}\label{THM: mimick network size k^4}
Every planar $k$-terminal network $G$ with $\gamma(G)=1$
admits a minor mimicking network of size $s=O(k^4)$.
\end{corollary}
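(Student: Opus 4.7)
The plan is to invoke the generic bound of Theorem~\ref{THM: planar mimick network size alpha|U|^2} with the two refined estimates we have just established for the special case $\gamma(G)=1$. That theorem produces a minor mimicking network whose size is $O(\alpha\cdot|\T_e(G)|^2)$, where $\alpha$ bounds $|CC(E_S\cup E_{S'})|$ over all pairs $S,S'\in \T_e(G)$, so it remains only to plug in good bounds for both quantities.

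First, I would bound $|\T_e(G)|$. By Theorem~\ref{LMA: at most k^2 elementary min cuts}, under the hypothesis $\gamma(G)=1$ we have $|\T_e(G)|\le \binom{k}{2}=O(k^2)$. This already improves on the trivial $2^k$ bound that was used in Corollary~\ref{COR: UB 2^2k planar graphs}; squaring it contributes $O(k^4)$ to the overall estimate.

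Next I would bound $\alpha$. The easy bound $\alpha=O(k)$ quoted from \cite[Lemma 2.2]{KR13SODA} would only yield $s=O(k^5)$, which is why the footnote remarks that the generic bound must be sharpened. Instead, I would use Lemma~\ref{LMA: Connected components two min-cuts}: when all terminals lie on one face, $|CC(E_S\cup E_{S'})|\le 4$ for any pair $S,S'\in \T_e(G)$, so we may take $\alpha=4=O(1)$. This is the only nontrivial input, and its proof (deferred to the appendix) is really the backbone of the improvement from $O(k^5)$ to $O(k^4)$.

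Substituting $\alpha=O(1)$ and $|\T_e(G)|^2=O(k^4)$ into Theorem~\ref{THM: planar mimick network size alpha|U|^2} immediately yields a minor mimicking network of size $O(k^4)$, as desired. The main obstacle in this chain is not the present corollary, which is just a one-line combination, but rather the constant upper bound on $|CC(E_S\cup E_{S'})|$ supplied by Lemma~\ref{LMA: Connected components two min-cuts}; without that constant, the argument collapses back to the generic $O(k^5)$ estimate.
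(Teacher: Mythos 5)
Your proposal is correct and matches the paper's own proof exactly: both invoke Theorem~\ref{THM: planar mimick network size alpha|U|^2} with $\alpha=4$ from Lemma~\ref{LMA: Connected components two min-cuts} and $|\T_e(G)|\leq\binom{k}{2}$ from Theorem~\ref{LMA: at most k^2 elementary min cuts}, yielding size $O(k^4)$. Your remark that the constant bound $\alpha=4$ (rather than the generic $\alpha=O(k)$) is what improves $O(k^5)$ to $O(k^4)$ is precisely the point of the paper's footnote.
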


\begin{proof}
Apply Theorem~\ref{THM: planar mimick network size alpha|U|^2},
with $\alpha = 4$ from Lemma~\ref{LMA: Connected components two min-cuts}
and $|\T_e(G)|\leq k^2$
from Theorem~\ref{THM: at most k^2 elementary min cuts}.
\end{proof}

\paragraph{All terminals are on $\gamma$ faces, first bound.}

Our first (and weaker) bound for the general case $\gamma \geq 1$
follows by applying Theorem~\ref{THM: planar mimick network size alpha|U|^2}.
To this end, we bound the number of elementary cutsets by $(2k/\gamma)^{2\gamma}$
in Theorem~\ref{THM: naive UB elementary cycles bounded gamma},
whose proof is in Section~\ref{SEC: naiv UB elementary cuts bounded gamma},
and then conclude a mimicking network size of $O(k(2k/\gamma)^{4\gamma})$
in Corollary~\ref{THM: naive UB mimick bounded gamma}.

\begin{theorem}\label{THM: naive UB elementary cycles bounded gamma}
In every planar $k$-terminal network $G$ with $\gamma=\gamma(G)$,
the number of different elementary cutsets is
$\card{\T_e(G)}
  \leq 2^{2\gamma} \left(\Pi_{i=1}^{\gamma}k_i^2 \right)
  \leq (2k/\gamma)^{2\gamma}$.
\end{theorem}

\begin{corollary}\label{THM: naive UB mimick bounded gamma}
Every planar $k$-terminal network $G$ with $\gamma=\gamma(G)$ admits a mimicking network of size $s= O\left(k(2k/\gamma)^{4\gamma}\right)$.
\end{corollary}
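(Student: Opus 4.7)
The plan is to obtain this corollary as an immediate combination of the two preceding results: the generic bound from Theorem~\ref{THM: planar mimick network size alpha|U|^2}, which says that $G$ admits a minor mimicking network of size $O(\alpha\cdot |\T_e(G)|^2)$, together with the counting estimate from Theorem~\ref{THM: naive UB elementary cycles bounded gamma}. So I only need to produce an appropriate value of $\alpha$ (the bound on $|CC(E_S\cup E_{S'})|$ across pairs $S,S'\in\T_e(G)$), and then substitute.

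For the parameter $\alpha$, I would appeal to the same estimate used in the proof of Corollary~\ref{COR: UB 2^2k planar graphs}, namely the easy planar bound $|CC(E_S\cup E_{S'})|=O(k)$ from \cite[Lemma 2.2]{KR13SODA}. This bound is stated for arbitrary planar $k$-terminal networks and depends only on the number of terminals $k$; it does not assume anything about $\gamma(G)$, so it is safe to use here without modification. Thus we may take $\alpha = O(k)$.

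For the count $|\T_e(G)|$, Theorem~\ref{THM: naive UB elementary cycles bounded gamma} gives $|\T_e(G)| \leq (2k/\gamma)^{2\gamma}$ directly. Substituting both estimates into Theorem~\ref{THM: planar mimick network size alpha|U|^2} yields a mimicking network of size
\[
   O\bigl(\alpha\cdot |\T_e(G)|^2\bigr)
   = O\Bigl(k \cdot \bigl((2k/\gamma)^{2\gamma}\bigr)^2\Bigr)
   = O\bigl(k(2k/\gamma)^{4\gamma}\bigr),
\]
as required. Since Theorem~\ref{THM: planar mimick network size alpha|U|^2} produces a \emph{minor} mimicking network, so does the resulting construction.

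There is no real obstacle here: the whole point is that Theorems~\ref{THM: planar mimick network size alpha|U|^2} and \ref{THM: naive UB elementary cycles bounded gamma} have been stated in exactly the form needed for this black-box composition, with $\gamma$-dependence isolated in the count of elementary cutsets. The only minor thing to be careful about is that the $\alpha=O(k)$ bound from \cite{KR13SODA} is indeed applicable in our setting, but this is automatic since that bound makes no assumption on $\gamma(G)$.
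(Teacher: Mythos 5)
Your proposal is correct and follows exactly the same route as the paper's proof: applying Theorem~\ref{THM: planar mimick network size alpha|U|^2} with $\alpha=O(k)$ from \cite[Lemma 2.2]{KR13SODA} and $|\T_e(G)|\leq(2k/\gamma)^{2\gamma}$ from Theorem~\ref{THM: naive UB elementary cycles bounded gamma}, then substituting. The observation that the $\alpha=O(k)$ bound needs no assumption on $\gamma(G)$ is right, and the arithmetic matches the claimed bound.
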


\begin{proof}
Apply Theorem~\ref{THM: planar mimick network size alpha|U|^2},
using $\alpha = O(k)$ from Lemma~\ref{LMA: every C has terminal}
and $|\T_e(G)| \leq (2k/\gamma)^{2\gamma}$
from Theorem~\ref{THM: naive UB elementary cycles bounded gamma}.
\end{proof}

\paragraph{All terminals are on $\gamma$ faces, second bound.}

Our second (and improved) result for the general case $\gamma \geq 1$
follows by a refined analysis of the elementary cutsets.
While our bound of $(2k)^{2\gamma}$ on the number of elementary cutsets is tight,
it leads to a wasteful mimicking network size
(for example, plugging the worst-case $\gamma=k$
into Corollary~\ref{THM: naive UB mimick bounded gamma}
is inferior to the bound in Corollary~\ref{COR: UB 2^2k planar graphs}).
The reason is that this approach over-counts edges of the mimicking network,
and we therefore devise a new proof strategy that decomposes each elementary cutset even further, in a special way that lets us to count the underlying fragments (special subsets of edges) without repetitions.
We remark that the actual proof works in the dual graph $G^*$,
and decomposes a simple cycle into (special) paths.

\begin{theorem}[Further Decomposition of Elementary Cutsets]
\label{THM: elementary mincuts bounded faces}
Every planar $k$-terminal network $G$ with ${k_1,\ldots,k_\gamma}$ as above,
has
$p = 2^{\gamma}\big( 1+\sum_{i,j=1}^{\gamma} k_i k_j \big)
   \leq O(2^{\gamma}k^2)$
subsets of edges $E_1,\ldots,E_p\subset E$,
such that every elementary cutset in $G$
can be decomposed into a disjoint union of some of these $E_i$'s, and each of $E_i$ contains exactly 2 edges from the boundaries of the faces $f_1,\ldots,f_{\gamma}$.
\end{theorem}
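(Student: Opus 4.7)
My approach is to pass to the planar dual $G^*$, where by planar duality for cuts every elementary cutset $E_S$ corresponds to a simple cycle $C_S$ in $G^*$. Writing $F^*=\{f_1^*,\ldots,f_\gamma^*\}$ for the dual vertices of the $\gamma$ terminal-bearing faces, the cycle $C_S$ must visit a nonempty subset $I_S\subseteq\{1,\ldots,\gamma\}$ of $F^*$, since otherwise it could not separate any pair of terminals. At each visited $f_i^*$, the cycle uses exactly two edges of $\partial f_i$, which I will call the \emph{boundary edges} of $C_S$ at $f_i^*$.

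The first step is a decomposition. Removing the visited face vertices $\{f_i^*:i\in I_S\}$ from $C_S$ breaks the cycle into $|I_S|$ pairwise vertex-disjoint arcs in the punctured dual $G^*\setminus F^*$ (when $|I_S|=1$ this produces a single arc from the unique face vertex's neighbourhood back to itself). Reattaching to each arc its two flanking boundary edges produces a subset of edges of $G$ containing exactly two boundary edges, one on $\partial f_i$ and one on $\partial f_j$ for some $i,j\in I_S$ (with $i=j$ possible only when $|I_S|=1$). By construction these $|I_S|$ subsets disjointly partition $E_S$, which is exactly the decomposition property required by the theorem.

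The second, main step is the enumeration: I need to show that only $p=2^\gamma\bigl(1+\sum_{i,j=1}^{\gamma}k_ik_j\bigr)$ distinct such subsets can arise as $S$ ranges over all elementary cutsets. My plan is to parameterize each candidate subset by (i) the ordered pair of face indices $(i,j)$ carrying its two boundary edges; (ii) for each of these two faces, the \emph{gap} between consecutive terminals on the cyclic boundary of the face in which the boundary edge lies---yielding $k_i\cdot k_j$ choices summing to $\sum_{i,j=1}^{\gamma}k_ik_j$, with an extra $+1$ absorbing degenerate arcs of zero internal length; and (iii) a binary ``side'' bit for each of the $\gamma$ faces, recording how the arc is routed with respect to the corresponding removed face vertex $f_l^*$ in the punctured dual, which contributes the multiplicative factor $2^\gamma$.

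The hard part, and the main obstacle, is the uniqueness claim underlying the enumeration: two different elementary cutsets sharing the same gap-pair and the same $\gamma$ side-bits could \emph{a priori} route their arcs through different edges of $G^*\setminus F^*$. I would handle this by a planar non-crossing exchange argument: were two such arcs to differ on some edges, the planarity of $G^*\setminus F^*$ would allow swapping sub-arcs between the two parent cutsets without increasing total cost, contradicting the uniqueness of each minimum terminal cutset guaranteed by the tie-breaking convention of Section~\ref{sec:prelims}. Once uniqueness is established, the number of distinct subsets is at most $p=O(2^\gamma k^2)$, and Theorem~\ref{THM: E_S disjoint union elementary cuts} together with the first step ensures that every elementary cutset is a disjoint union of some of these subsets, completing the proof.
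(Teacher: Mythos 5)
Your outline follows the same route as the paper: pass to the dual, cut each elementary cycle at the visited face-vertices of $W=\{w_1,\dots,w_\gamma\}$ into arcs carrying two boundary edges, label each arc by (endpoint faces, terminal gaps, $\gamma$ bits), and prove uniqueness of labels by an exchange argument against minimality. However, the two load-bearing steps are left as sketches whose stated justifications would not go through. First, your $\gamma$ ``side bits'' are not well-defined for an open arc: ``which side of $f_l^*$ the arc is routed on'' has no meaning until the arc is closed into a cycle against a fixed reference. The paper fixes, for each (endpoint-pair, gap-pair) class, one canonical path $\Pi_{ij}$ avoiding $W$, and takes as the third label coordinate the set $T_{odd}(P\cup\Pi_{ij})$, defined by a Jordan-curve parity count shown to be independent of the auxiliary path (Claim~\ref{CLM: same parity for every trail from f_v to outerface}); it is then a separate, needed claim (Claim~\ref{CLM: all terminals of same w have same parity}) that all terminals of one face $T_r$ get the same parity, which is exactly why this coordinate contributes $2^\gamma$ rather than $2^k$. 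Second, and more importantly, in the uniqueness step ``planarity lets you swap sub-arcs without increasing cost'' is not the mechanism: swapping $P$ for a cheaper $P'$ contradicts minimality only if the new circuit $(E^*_S\setminus P)\cup P'$ is still $S$-separating, and non-crossing considerations alone do not give this. That is precisely what the parity label buys: from $T_{odd}(P\cup\Pi_{ij})=T_{odd}(P'\cup\Pi_{ij})$ one gets, by a symmetric-difference computation (Claims~\ref{CLM: parity of A B C} and~\ref{CLM: odd terminals switch parity in A B C}, Lemma~\ref{LMA: Switch subpaths in E^*_S}), that $T_{odd}((E^*_S\setminus P)\cup P')=T_{odd}(E^*_S)=S$, so the cheaper circuit still separates $S$ and minimality is violated. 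You correctly flag this as the hard part, but the fix you propose is not the one that works, so the core of the theorem remains unproved in your plan.

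There is also a concrete error in your first step: the claim that $I_S\neq\emptyset$, i.e.\ that every elementary dual cycle must visit some vertex of $W$, is false. An elementary cutset need not contain any edge on the boundary of any face $f_i$ --- for instance, a minimum cut that severs the few edges attaching the entire face $f_1$ (with all its terminals) to the rest of the graph --- and then the dual cycle avoids $W$ entirely, produces no arcs under your decomposition, and is not captured by your ``$+1$ for degenerate arcs of zero internal length.'' The paper counts these cycles separately: each is determined by which groups $T_r$ lie in its odd region, giving at most $2^\gamma$ additional edge sets, which is the ``$1$'' inside $p=2^\gamma\big(1+\sum_{i,j=1}^{\gamma}k_ik_j\big)$. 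With these repairs --- a reference-path-based parity label, the parity-preservation exchange lemma, and the $W$-avoiding cycles handled separately --- your outline coincides with the paper's argument.
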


We prove this theorem in Section~\ref{SEC: proof elementary mincuts bounded gamma}.
The main difficulty is to define subsets of edges
that are contained in elementary cutsets and are also easy to identify.
We implement this identification by attaching to every such subset
a three-part label.
We prove that each label is unique,
and count the number of different possible labels,
which obviously bounds the number of such ``special'' subsets of edges.

\begin{corollary} \label{THM: UB mimicking bounded faces}
Every planar $k$-terminal network $G$ with $\gamma=\gamma(G)$
admits a minor mimicking network of size $s=O(\gamma 2^{2\gamma}k^4)$.
\end{corollary}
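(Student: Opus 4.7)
The plan is to follow the same recipe as Theorem~\ref{THM: planar mimick network size alpha|U|^2}, but to drive the construction by the finer family of $p=O(2^{\gamma}k^2)$ edge subsets $E_1,\dots,E_p$ furnished by Theorem~\ref{THM: elementary mincuts bounded faces}, rather than by the family $\T_e(G)$ of elementary cutsets itself. By Theorem~\ref{THM: elementary cuts for general graphs}, in order for $H$ to be a mimicking network of $G$ it suffices to preserve the cost of every elementary cutset; and by Theorem~\ref{THM: elementary mincuts bounded faces} every elementary cutset is a disjoint union of some of the $E_i$'s, so it is enough for $H$ to preserve $\sum_{e\in E_i}c(e)$ for each $i$.

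Concretely, I would apply the planar contraction procedure underlying Theorem~\ref{THM: planar mimick network size alpha|U|^2} (taken from~\cite{KR13SODA}), but with $\{E_1,\dots,E_p\}$ playing the role of $\{E_S : S\in \T_e(G)\}$: two vertices of $G$ get identified whenever they lie in the same connected component of $G \setminus (E_i \cup E_j)$ for every pair $i,j$. This produces a minor $H$ of $G$ whose size is bounded by $O(\alpha'\cdot p^2)$, where $\alpha' \eqdef \max_{i,j}\card{CC(E_i\cup E_j)}$.

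The hard part will be to bound $\alpha' = O(\gamma)$, since the naive $\alpha'=O(k)$ from \cite[Lemma 2.2]{KR13SODA} would only yield the weaker size $O(2^{2\gamma}k^5)$. I plan to work in the planar dual $G^*$: because each $E_i$ meets $\partial f_1,\dots,\partial f_\gamma$ in exactly two edges, it corresponds in $G^*$ to a simple path whose two endpoints lie among the $\gamma$ dual vertices $f_1^*,\dots,f_\gamma^*$. The union $E_i^* \cup E_j^*$ is thus a planar graph with at most four endpoints on $\{f_1^*,\dots,f_\gamma^*\}$; contracting each constituent path to a single edge and invoking Euler's formula on the resulting multigraph on $\leq\gamma+O(1)$ vertices bounds the number of faces, and hence $\card{CC(E_i\cup E_j)}$, by $O(\gamma)$. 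Combining $p=O(2^\gamma k^2)$ with $\alpha'=O(\gamma)$ yields $\card{V_H}=O(\gamma\cdot 2^{2\gamma}k^4)$, as claimed.
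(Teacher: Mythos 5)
Your high-level plan is the same as the paper's: keep the contraction-based construction of Theorem~\ref{THM: planar mimick network size alpha|U|^2} and redo the counting with the $p=O(2^{\gamma}k^2)$ fragments of Theorem~\ref{THM: elementary mincuts bounded faces} in place of the elementary cutsets, so the entire content of the corollary is the claim $\alpha'=\max_{i,j}\card{CC(E_i\cup E_j)}=O(\gamma)$. (One side remark: the contraction rule is not the pairwise one you describe. The graph $H$ is obtained by contracting the connected components of $G\setminus\bigcup_i E_i=G\setminus\bigcup_{S\in\T_e(G)}E_S$, which guarantees that no edge of any minimum terminal cutset is contracted; a single fragment $E_i$ need not disconnect anything, so identifying vertices that lie in the same component of every $G\setminus(E_i\cup E_j)$ could contract edges of some $E_S$ and destroy the mimicking property. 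The $E_i$'s should enter only the counting.)

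The genuine gap is in your proof of $\alpha'=O(\gamma)$. The union $E_i^*\cup E_j^*$ is not ``two edges between at most four endpoints'': the two dual paths may intersect each other many times, and every meeting vertex of degree at least $3$ increases $|E|-|V|$ and hence, by Euler's formula, the number of faces of $E_i^*\cup E_j^*$, which by Lemma~\ref{LMA: connected components vs regions} is exactly $\card{CC(E_i\cup E_j)}$. Contracting each constituent path to a single edge is not a face-preserving operation once the paths share vertices, and two simple plane paths with four endpoints can cross $\Omega(n)$ times; so planarity alone gives no bound of the form $O(\gamma)$ (nor even $O(k)$). The missing ingredient is cut minimality. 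Since $E_i\subseteq E_S$ and $E_j\subseteq E_{S'}$ for elementary cutsets $E_S,E_{S'}$, every component of $G\setminus(E_i\cup E_j)$ is a union of components of $G\setminus(E_S\cup E_{S'})$, each of which contains a terminal by Lemma~\ref{LMA: every C has terminal} (Lemma 2.2 of \cite{KR13SODA}); hence every component of $G\setminus(E_i\cup E_j)$ contains a terminal. Now combine this with the property you did use correctly: $E_i\cup E_j$ has at most four edges on the boundaries of $f_1,\ldots,f_\gamma$, so these $\gamma$ boundary cycles break into at most $\gamma+4$ arcs, each arc lies inside a single component, and all terminals lie on these arcs; therefore at most $\gamma+4$ components contain terminals. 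Together these give $\alpha'\le\gamma+4=O(\gamma)$, which is exactly how the paper argues, and then $O(\gamma p^2)=O(\gamma 2^{2\gamma}k^4)$ follows as you state.
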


A slightly weaker bound of $O(2^{2\gamma}k^5)$ on the mimicking network size
follows easily from Theorem~\ref{THM: planar mimick network size alpha|U|^2}
by replacing elementary cutsets with our ``special'' subsets of edges.
To this end, it is easy to verify that all arguments about elementary cutsets
hold also for the ``special'' subsets of edges.
This includes the bound $\alpha=O(k)$,
because if every two elementary cutsets intersect at most $O(k)$ times,
then certainly every two ``special'' subsets (which are subsets of elementary cutsets) intersect at most $O(k)$ times.
We can thus apply Theorem~\ref{THM: planar mimick network size alpha|U|^2}
with $\alpha=O(k)$ and ``replacing'' $|\T_e(G)|$ with $O(2^{\gamma}k^2)$
that we have by Theorem~\ref{THM: elementary mincuts bounded faces}.
The stronger bound in Corollary~\ref{THM: UB mimicking bounded faces}
follows by showing that $\alpha=O(\gamma)$ for ``special'' subsets of edges.

\begin{proof}[Proof of Corollary~\ref{THM: UB mimicking bounded faces}]
Given a planar $k$-terminal network $G$ with $\gamma(G)=\gamma$,
use Theorem~\ref{THM: elementary mincuts bounded faces} to decompose
the elementary terminal cutsets of $G$ into $p=O(2^{\gamma}k^2)$ subsets of edges $E_1,\ldots, E_p$ as stated above. Since each of $E_i$ has exactly two edges from the boundaries of the faces $f_1,\ldots, f_{\gamma}$, then for every $E_i$ and $E_j$ there are at most $O(\gamma)$ connected components in $G\setminus(E_i\cup E_j)$ that contain terminals. Let $E_S$ and $E_{S'}$ be elementary cutsets of $G$ such that $E_i\subseteq E_S$ and $E_j\subseteq E_{S'}$. By Lemma~\ref{LMA: every C has terminal},
each connected component in $G\setminus(E_S\cup E_{S'})$ must contain
at least one terminal. Thus, each connected component in $G\setminus(E_i\cup E_{j})$ must contain
at least one terminal, which bound the number of its connected components by $O(\gamma)$. Apply Theorem~\ref{THM: planar mimick network size alpha|U|^2} with $|\T_e(G)|=O(2^{\gamma}k^2)$ and $\alpha=O(\gamma)$ and the corollary follows.
\end{proof}

\subsection{Proof of Theorem \ref{THM: at most k^2 elementary min cuts} and Lemma~\ref{LMA: Connected components two min-cuts}}
\label{SEC: Elementary cuts planar graphs 1 face}

In this section we prove Theorem~\ref{THM: at most k^2 elementary min cuts},
which bounds the number of elementary cutsets when $\gamma=1$. We start with a few definitions and lemmas. Let $G=(V,E,T,c)$ be a connected planar $k$-terminal network, such that the terminals $t_1,\ldots,t_k$ are all on the same face in that order. Assume \Wlog that this special face is the outerface $f_\infty$. We refer to this outerface as a clockwise-ordered cycle $\langle v^\infty_1,v^\infty_2, \ldots, v^\infty_{l} \rangle$, such that for every two terminals $t_i,t_j$ if $v^\infty_x=t_i$ and $v^\infty_y=t_j$ then $i<j$ if and only if $x<y$.

\begin{definition}\label{Def: Interval}
An \emph{interval} of $f_\infty$ is a subpath $\hat I = \langle v^\infty_{i}, v^\infty_{i+1}, \ldots, v^\infty_{j} \rangle$ if $i\leq j$ and in the case where $i>j$ $\hat I = \langle v^\infty_{i}, \ldots,v^\infty_{l},v^\infty_{1}, \ldots, v^\infty_{j}\rangle$. Denote its vertices by $V(\hat I)$ or, slightly abusing notation, simply by $\hat I$.
\end{definition}

Two trivial cases are a single vertex $\langle v^\infty_i \rangle$ if $i=j$, and the entire outerface cycle $f_\infty$ if $i=j-1$.

\begin{definition}\label{DEF: Maximal Order Partition Interval wrt W}
Given $W\subseteq V$, an interval $\hat I=\langle v^\infty_{i}, v^\infty_{i+1}, \ldots, v^\infty_{j}\rangle$ is called \emph{maximal with respect to $W$}, if $V(\hat I)\subseteq W$ and no interval in $W$ strictly contains $\hat I$, i.e. $v^\infty_{i-1},v^\infty_{j+1}\notin W$. Let $\I(W)$ be the set of all maximal intervals with respect to $W$, and let the \emph{order} of $W\subseteq V$ be $|\I(W)|$.
\end{definition}

Observe that $\I(W)$ is a unique partition of $W\cap V(f_\infty)$, hence the order of $W$ is well defined. Later on, we apply Definition \ref{DEF: Maximal Order Partition Interval wrt W} to connected components $C\in CC(E_S)$, instead of arbitrary subsets $W\subseteq V$.
For example, in Figure \ref{FGR: General E_S and CC(E_S) in G}, $\I(C_3)=\{I_3,I_5,I_9\}$, and the order of $C_3$ is $|\I(C_3)|=3$.

\begin{lemma}\label{LMA: E_S partition outerface to intervals}
For every subset $S\subset T$, $\cupdot_{C\in CC(E_S)}\I(C)$ is a partition of $V(f_\infty)$.
\end{lemma}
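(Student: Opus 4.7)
The plan is to reduce the lemma to two verifications about the collection $\mathcal{J} := \bigcup_{C \in CC(E_S)} \I(C)$: its intervals cover all of $V(f_\infty)$, and they are pairwise disjoint. Both follow by combining the fact that $CC(E_S)$ partitions $V$ with the fact that $\I(C)$ partitions $C \cap V(f_\infty)$ for each fixed component $C$.

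For coverage, I would fix any $v \in V(f_\infty)$ and use that $CC(E_S)$ is a partition of $V$ to obtain a unique $C_v \in CC(E_S)$ with $v \in C_v$. The singleton $\langle v \rangle$ is an interval in the sense of Definition~\ref{Def: Interval} and is entirely contained in $C_v$, so it is contained in some maximal interval $\hat I \in \I(C_v)$; thus $v \in V(\hat I) \subseteq \bigcup_{\hat I' \in \mathcal{J}} V(\hat I')$.

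For disjointness I would split into two cases. Intervals $\hat I \in \I(C)$ and $\hat I' \in \I(C')$ with $C \neq C'$ are trivially disjoint since $V(\hat I) \subseteq C$, $V(\hat I') \subseteq C'$, and $C \cap C' = \emptyset$. Within the same component $C$, I would argue that any two distinct maximal intervals $\hat I \neq \hat I'$ in $\I(C)$ have disjoint vertex sets: if they share a vertex then (by Definition~\ref{Def: Interval} and the cyclic order on $V(f_\infty)$) either $\hat I \subseteq \hat I'$ or $\hat I' \subseteq \hat I$ (contradicting distinctness together with maximality), or their union along the cycle is itself an interval strictly larger than $\hat I$ and entirely contained in $C$, contradicting maximality of $\hat I$ with respect to $C$. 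Combined with the coverage step, this shows that within each $C$ the family $\I(C)$ partitions $C \cap V(f_\infty)$, and summing over $C \in CC(E_S)$ gives the claimed partition of $V(f_\infty)$.

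The only mildly delicate step is the intra-component disjointness, because of the cyclic nature of $f_\infty$: one must handle the corner case where $V(f_\infty) \subseteq C$, in which case $\I(C) = \{f_\infty\}$ is a single ``interval'' consisting of the full cycle, and also the case $V(f_\infty) \not\subseteq C$, where at least one vertex outside $C$ ``breaks'' the cycle into a path and maximal intervals become maximal runs of consecutive vertices of $C$ along that path, which are manifestly pairwise disjoint. Making this case split explicit is the main (and only) place where care is needed.
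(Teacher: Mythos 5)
Your proof is correct and follows essentially the same route as the paper: combine the partition of $V$ into the components $CC(E_S)$ with the fact that, for each component $C$, the maximal intervals $\I(C)$ partition $C\cap V(f_\infty)$. The only difference is that you spell out the intra-component disjointness and the cyclic corner cases explicitly, whereas the paper takes this as the observation stated right after Definition~\ref{DEF: Maximal Order Partition Interval wrt W}.
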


\begin{proof}
Fix $S\subset T$ and its minimum cutset $E_S$. Then $CC(E_S)$ is a partition of the vertices $V$ into connected components. It induces a partition also of $V(f_\infty)$, i.e $V(f_\infty)=\cupdot_{C\in CC(E_S)} \big( C\cap V(f_\infty) \big)$. By Definition \ref{DEF: Maximal Order Partition Interval wrt W}, each $C\cap V(f_\infty)$ can be further partitioned into maximal intervals, given by $\I(C)$. Combining all these partitions, and the lemma follows.
See Figure \ref{FGR: General E_S and CC(E_S) in G} for illustration.
\end{proof}

\begin{figure}
  \centering
  \includegraphics[angle=0,width=0.6\textwidth]{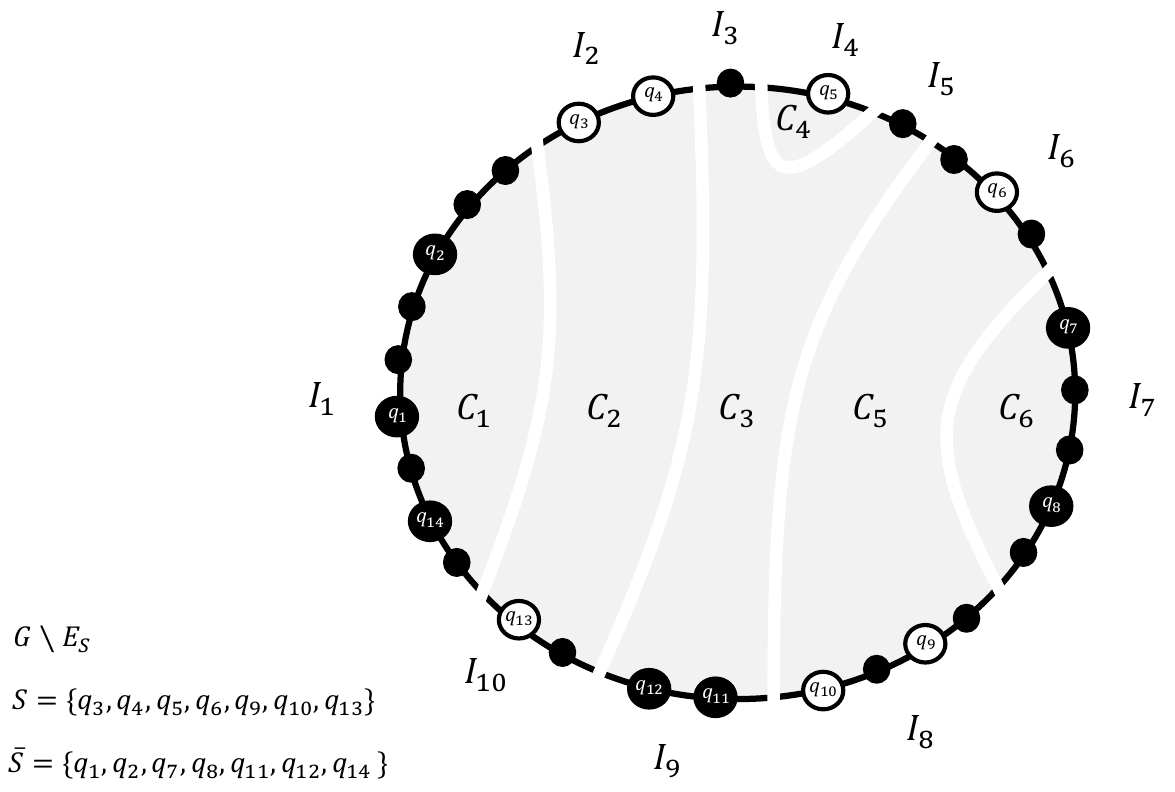}
  \caption{A minimum cutset $E_S$ that partitions the graph $G$ into $6$ connected components $CC(E_S)=\{C_1,\ldots, C_6\}$, and partitions the outerface vertices into $10$ intervals $\I(CC(E_S))=\{I_1, \ldots, I_{10}\}$.
Large nodes represent terminals,
colored according to whether they lie in $S$ or $\bar{S}$.}
  \label{FGR: General E_S and CC(E_S) in G}
  \mbox{}\hrule
\end{figure}

\begin{lemma}\label{LMA: if C elementary then I(C)=1}
For every $S\subset T$ and its minimum cutset $E_S$,
if $C\in CC(E_S)$ is an elementary connected component in $G$, then $|I(C)|=1$.
\end{lemma}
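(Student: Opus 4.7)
The plan is to argue by contradiction: assume $|\I(C)|\geq 2$ and derive that $G[V\setminus C]$ is disconnected, contradicting the hypothesis that $C$ is elementary. First I would unpack the elementary hypothesis. Because $C$ is a connected component of $G\setminus E_S$ and $\delta(C)\subseteq E_S$, deleting $\delta(C)$ from $G$ cannot break $C$, so $C$ is always one component of $G\setminus \delta(C)$. Hence the requirement $|CC(\delta(C))|=2$ is equivalent to saying that $G[V\setminus C]$ is connected. It therefore suffices to show that $|\I(C)|\geq 2$ forces $G[V\setminus C]$ to be disconnected.

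Next I would set up the combinatorics on the outer face. Suppose $\I(C)$ contains two distinct intervals $I_1,I_2$. They split the cyclic outer-face cycle into two arcs; using the maximality in Definition~\ref{DEF: Maximal Order Partition Interval wrt W}, each of these arcs contains at least one vertex of $V(f_\infty)\setminus C$ (namely, the immediate outer-face neighbors of the endpoints of $I_1$ and $I_2$ do not belong to $C$, since otherwise $I_1$ or $I_2$ would not be maximal). Denote by $J$ and $J'$ these two nonempty sets of non-$C$ vertices lying in the two arcs.

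The main step is a Jordan-curve argument. Since $C$ is connected, pick a simple path $Q\subseteq C$ from some vertex $a_1\in I_1$ to some vertex $a_2\in I_2$ (take it minimal so that $Q$ touches $V(f_\infty)$ only at its endpoints). In the fixed planar drawing of $G$, $Q$ lies inside the closed disk bounded by the outer face cycle; together with the outer-face arc from $a_1$ to $a_2$ passing through $J$, it forms a simple closed curve $\gamma$. By the Jordan curve theorem, $\gamma$ partitions the interior of the disk into two open regions $R,R'$, with $J$ lying on $\partial R$ and $J'$ lying on $\partial R'$. Because edges of $G$ do not cross in the planar drawing, every edge of $G[V\setminus C]$ must lie in the closure of exactly one of $R,R'$, and any path in $G[V\setminus C]$ from $J$ to $J'$ would have to pass through a vertex on $\gamma\setminus (A\cup\{a_1,a_2\})$, i.e., through an interior vertex of $Q$; but those vertices lie in $C$ and are therefore forbidden. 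Hence $J$ and $J'$ belong to different components of $G[V\setminus C]$.

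The hard part is the planarity step, specifically the claim that no edge of $G[V\setminus C]$ crosses $\gamma$. The ingredients are standard but must be invoked carefully: $\gamma$ uses only edges of $G$ and an arc of the outer face boundary, so in the planar embedding $\gamma$ coincides with part of the drawing of $G$ plus the outer face boundary. Any edge of $G$ with both endpoints outside $C$ cannot share an interior point with an edge of $Q\subseteq C$ (planarity) and cannot cross the outer face boundary (it is unbounded on the outside), so such an edge lies in exactly one region. With this in hand, the disconnection of $G[V\setminus C]$ follows, contradicting $|CC(\delta(C))|=2$ and completing the proof.
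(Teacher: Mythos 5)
Your proof is correct and follows essentially the same route as the paper: both rest on the alternation of maximal intervals of $C$ and of $V\setminus C$ around the outer face plus a Jordan-curve/planarity step, your formulation (a path $Q\subseteq C$ joining two intervals of $C$ separates the two arcs, so $G[V\setminus C]$ is disconnected, contradicting $|CC(\delta(C))|=2$) being the contrapositive of the paper's (a path $P_{24}\subseteq C'$ joining the alternating intervals would have to cross $P_{13}\subseteq C$), with the minor difference that you obtain non-$C$ vertices in both arcs directly from interval maximality where the paper invokes Lemma~\ref{LMA: every C has terminal}. One small imprecision: you cannot always choose $Q$ so that it meets $V(f_\infty)$ only at its endpoints (other outer-face vertices of $C$ may be unavoidable on any $I_1$--$I_2$ path in $C$), but this is harmless, since any extra boundary contacts of $Q$ are vertices of $C$, hence distinct from $J\cup J'$, and the separation argument still goes through.
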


\begin{proof}
Since $C$ is elementary, there are exactly two connected components $C$ and $C'$ in $CC(\delta(C))$.
By Lemma \ref{LMA: every C has terminal} each of $C$ and $C'$ contains
at least one terminal. Since all the terminals are on the outerface,
each of $C$ and $C'$ contains at least one interval. Assume toward
contradiction that $C$ contains at least two maximal intervals $I_1$
and $I_3$, then there must be at least two intervals $I_2$ and $I_4$
in $C'$ that appear on the outerface in an alternating order,
i.e. $I_1,I_2,I_3,I_4$. Let $v_i$ be a vertex in the interval $I_i$,
and denote by $P_{13}$ and $P_{24}$ a path that connects between
$v_1,v_3$ and between $v_2,v_4$ correspondingly. Note that $P_{13}$ is
contained in $C$ and $P_{24}$ is contained in $C'$. Moreover note that
these two paths must intersect each other, giving a contradiction, and the
lemma follows.
\end{proof}

We are ready to prove Theorem~\ref{THM: at most k^2 elementary min cuts}. Recall that for every $S\subset T$, if $E_S$ is an elementary cutset then by Lemma \ref{LMA: if C elementary then I(C)=1} each of $S$ and $\bar S$ must be a single interval.
Hence they must be of the form $\{{t_i}, {t_{i+1}},\ldots, {t_j}\}$ and $\{{t_{j+1}},\ldots, {t_{i-1}}\}$ allowing wrap-around.
Thus, we can characterize $S$ and $\bar S$ by the pairs $(i,j)$ and $(j+1,i-1)$ respectively. There are at most $k(k-1)$ such different pairs, since $S\neq T$ and thus $j\neq i-1$. By the symmetry between $S$ and $\bar S$,
we should divide that number by $2$ and Theorem~\ref{THM: at most k^2 elementary min cuts} follows.

\begin{proof}[Proof of Lemma~\ref{LMA: Connected components two min-cuts}]
    Let $E_S$ and $E_{S'}$ be two elementary minimum cutsets, and let $C_S$ and $C_{\bar{S}}$ be the two elementary connected components in $CC(E_S)$. By Lemma \ref{LMA: if C elementary then I(C)=1}, each of $C_S$ and $C_{\bar S}$ contains exactly one maximal interval denoted by $I_S$ and $I_{\bar{S}}$ respectively, and similarly denote $C_{S'}, C_{\bar S'}, I_{S'}$ and $I_{\bar S'}$ for $E_{S'}$. Since each of the cutsets $E_S$ and $E_{S'}$ intersect the cycle of the outerface in exactly two edges, the cutset $E_S \cup E_{S'}$ intersects the cycle of the outerface in at most $4$ edges. Therefore the graph $G\setminus (E_S \cup E_{S'})$ has at most $4$ maximal intervals. By Lemma \ref{LMA: every C has terminal}, every connected component in $CC(E_S \cup E_{S'})$ must contains at least one terminal. Since all the terminals lie on the outerface, any connected component that contains terminal must contains also an interval. Every interval is contained in exactly one connected component.
    Thus, there are at most $4$ connected components in $CC(E_S \cup E_{S'})$, and the lemma follows.
    See Figure~\ref{FGR: PartitionTwoIntervals} for illustration.
\end{proof}

\begin{figure}
        \centering
        \includegraphics[angle=0,width=0.9\textwidth]{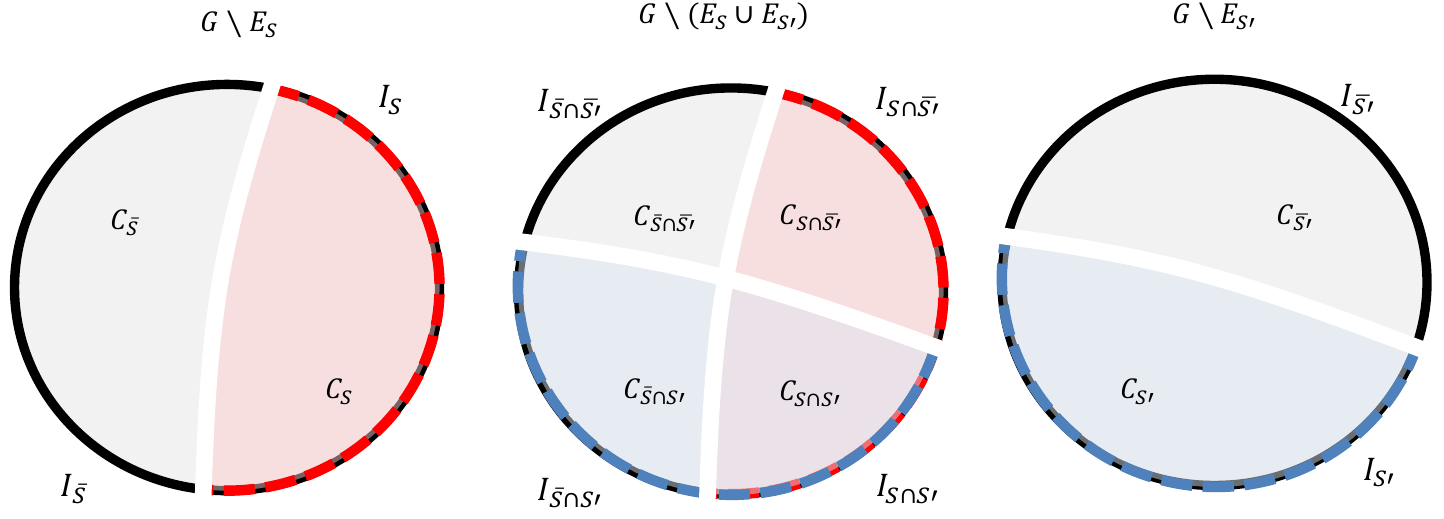}
        \caption{The union of two elementary cutsets, $E_S\cup E_{S'}$, disconnects $G$ into (at most) $4$ connected components,
and the outerface into 4 intervals.}
        \label{FGR: PartitionTwoIntervals}
        \mbox{}\hrule
\end{figure}

\subsection{Proof of Theorem~\ref{THM: naive UB elementary cycles bounded gamma}}
\label{SEC: naiv UB elementary cuts bounded gamma}

In this section we prove Theorem~\ref{THM: naive UB elementary cycles bounded gamma},
which bounds the number of elementary cutsets when $\gamma>1$. Since we assume (by perturbation) that there is a one-to-one correspondence between $S\subseteq T$ and $E_S$, it suffices to bound the number of different ways that an elementary cutset can partition the terminals into $S$ and $\bar S$. We achieve the latter by two observations, which are extensions of the ideas in Theorem~\ref{THM: at most k^2 elementary min cuts}. First, an elementary cutset can break each of the $\gamma$ faces into at most two paths, which overall splits the terminals into at most $2\gamma$ subsets. As each subset (path) can lie either in $S$ or in $\bar S$, there are at most $2^{2\gamma}$ different ways to partition $T$ into $S$ and $\bar S$ (this bound includes cases where two paths from the same face lie both in $S$ or both in $\bar S$, which is equivalent to not breaking the face into two paths). Second, there are $\binom{k_i}{2}$ ways that the face $f_i$ can be broken into 2 paths by elementary cutsets, which gives overall $\Pi_{i=1}^{\gamma}\binom{k_i}{2}$ ways to break all the $\gamma$ faces simultaneously. Combining these two observations leads to the required bound.

We start with a few definitions. Let $G=(V,E,F,T,c)$ be a $k$-terminal network with $\gamma$ faces $f_1,\ldots, f_{\gamma}$, where each $f_i$ contains the $k_i$ terminals $T_i$ (breaking ties arbitrarily),
where $T=\cupdot_{i=1}^{\gamma} T_i$ and thus $k=\sum_{i=1}^{\gamma}k_i$.
Denote the terminals in $T_i$ by $t^i_{1},\ldots,t^i_{k_i}$,
where the order is by a clockwise order around the boundary of $f_i$,
starting with an arbitrary terminal; for simplicity, we shall write $t_j$ instead of $t^i_j$ when the face $f_i$ is clear from the context.
Let $G^*$ be the dual graph of $G$.
The graph $G^*$ has $k$ terminal faces $\{f_{t^i_j}\}$
that are dual to the terminals $\{t^i_j\}$ of $G$,
and has $\gamma$ special vertices $W=\{w_1, \ldots, w_{\gamma}\}$
that are dual to the faces $f_1,\ldots,f_{\gamma}$ of $G$ (see Appendix~\ref{app: Planar Duality} for basic notions of planar duality).

We label each $S\in \T_e(G)$ (and its elementary cycle $E^*_S$)
by two vectors $\bar x,\bar y$, as follows.
Since $E^*_S$ is a simple cycle, it visits every vertex $w_i\in W$ at most once.
If it does visit $w_i$, then exactly two cycle edges are incident to $w_i$.
and these two edges naturally partition the faces around $w_i$ into two subsets.
Moreover, each subset appears as a contiguous subsequence
if the faces around $w_i$ are scanned in a clockwise order.
In particular, the terminal faces $f_{t_1},\ldots,f_{t_{k_i}}$
are partitioned into two subsets, whose indices can be written as
$\{t_{x_i},\ldots,t_{y_i-1}\}$ and $\{t_{y_i},\ldots,t_{x_i-1}\}$,
for some $x_i,y_i\in[k_i]$, under the two conventions:
(i) we allow wraparound, i.e., $t_{k_i+1}=t_1$ and so forth;
(ii) if $x_i=y_i$, then we have a trivial partition of $T_i$,
where one subset is $T_i$ and the other is $\emptyset$.
Observe that one of these subsets is contained in $S$ and the other in $\bar S$,
thus we can assume that $\{t_{x_i},\ldots,t_{y_i-1}\}\subseteq S$
and $\{t_{y_i},\ldots,t_{x_i-1}\}\subset \bar S$.
If the cycle $E^*_S$ does not visit $w_i$, then we simply define $x_i=y_i=1$,
which represents a trivial partitioning of $T_i$.
The labels are now defined as
$\bar x=(x_1,\ldots,x_{\gamma})$ and $\bar y=(y_1,\ldots,y_{\gamma})$.

We now claim that $G^*$ has at most $2^{2\gamma}$ elementary cycles
with the same label $(\bar x, \bar y)$.
To see this, fix $\bar x,\bar y \in [k_1]\times\cdots\times[k_{\gamma}]$
and modify $G^*$ into a plane graph $G^*_{\bar x,\bar y}$
with at most $2\gamma$ terminal faces, as follows.
For every $w_i$, create a single terminal face $f^i_{x_i}$
by ``merging'' faces around $w_i$, starting from $f_{t_{x_i}}$
and going in a clockwise order until $f_{t_{y_i-1}}$ (inclusive).
Then merge similarly the faces from $f_{t_{y_i}}$ and until $f_{t_{x_i-1}}$
into a single terminal face $f^i_{y_i}$.
If $x_i=y_i$, then the two merging operations above are identical,
and thus (as an exception) create only one terminal face denoted $f^i_{x_i}$.
Formally, a merge of two faces is implemented by removing
the edge incident to $w_i$ that goes between the relevant faces.
Observe that removing these edges in $G^*$ can be described in $G$
as contracting the path around the boundary of the face $f_i$
from the terminal $t_{x_i}$ to $t_{y_i-1}$,
and similarly from the terminal $t_{y_i}$ to $t_{x_i-1}$,
see Figure~\ref{FGR: Merge faces in the graph GXY naive bounded gamma}.
It is easy to verify that the modified graph $G^*_{\bar x,\bar y}$ is planar,
and that every elementary cycle $E^*_S$ in $G^*$ with this label
$(\bar x,\bar y)$ is also an elementary cycle in $G^*_{\bar x,\bar y}$
that separates the new terminal faces in a certain way.
Usually, the new terminal faces are separated into
$\{f^i_{x_i}\}_{i=1}^{\gamma}$ and $\{f^i_{y_i}\}_{i=1}^{\gamma}$,
except that when $x_i=y_i$, we have only one new terminal face $f^i_{x_i}$,
which should possibly be included with the $y_i$'s instead of with the $x_i'$.
Since $G^*_{\bar x,\bar y}$ has at most $2\gamma$ terminal faces,
it can have at most $2^{2\gamma}$ elementary cycles (one for each subset).
This shows that for every label $(\bar x,\bar y)$,
there are at most $2^{2\gamma}$ different elementary cycles in $G^*$,
as claimed.

Finally, the number of distinct labels $(\bar x,\bar y)$ is clearly bounded by
$\Pi_{i=1}^{\gamma}k_i^2$ and the above claim applies to each of them. By the inequality of arithmetic and geometric means $ \Pi_{i=1}^{\gamma}k_i^2\leq  (k/\gamma)^{2\gamma}$.
Therefore, the total number of different elementary cycles in $G^*$
is at most $(2k/\gamma)^{2\gamma}$, and Theorem~\ref{THM: naive UB elementary cycles bounded gamma} follows.

\begin{figure}
  \centering
  \includegraphics[angle=0,width=0.5\textwidth]{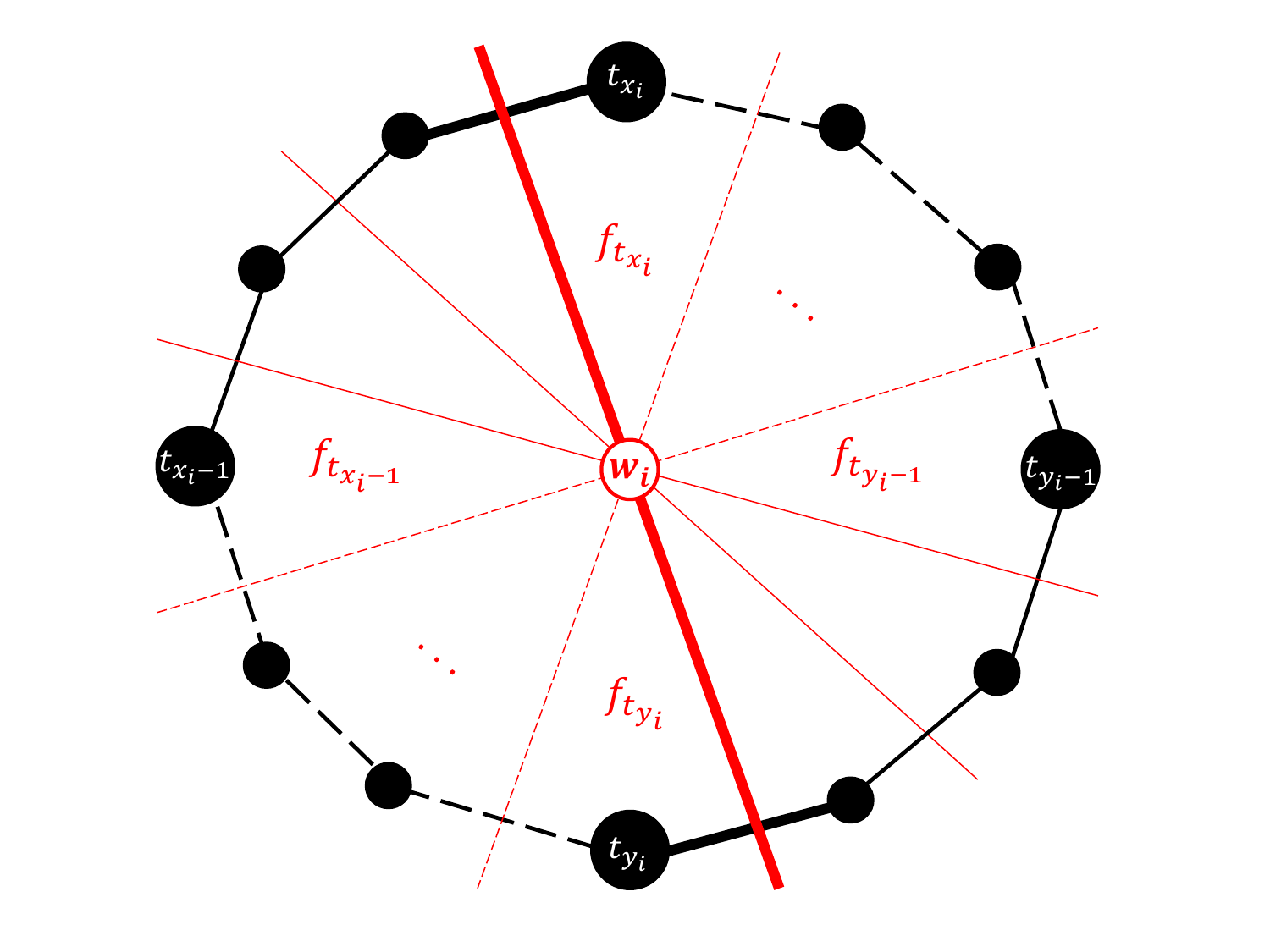}
  \caption{A simple cycle $E^*_S$ separates the faces around $w_i$
    into two subsets.
The primal graph $G$ is shown in black, and its dual $G^*$ in red.
Thicker lines are used for edges of the elementary cutset $E_S$
and of the cycle $E^*_S$.
Dashed lines represent dual edges that are removed by merging faces,
and primal edges that are contracted.
}
  \label{FGR: Merge faces in the graph GXY naive bounded gamma}
  \mbox{}\hrule
\end{figure}

\subsection{Proof of Theorem~\ref{THM: elementary mincuts bounded faces}}
\label{SEC: proof elementary mincuts bounded gamma}

In this section we prove Theorem~\ref{THM: elementary mincuts bounded faces},
which actually decompose the elementary cutsets in a ``bounded manner'' when $\gamma>1$. The idea is to consider the dual graph, which has $\gamma$ special vertices, and elementary cycles. Since every elementary cycle is a simple cycle, it visits each of the $\gamma$ vertices at most once, and thus we can decompose the elementary cycles into paths, such that the two endpoints of every path belong to the $\gamma$ vertices. The challenging part is to count how many distinct paths are there.

We shall use the notation introduced in the beginning of Section~\ref{SEC: naiv UB elementary cuts bounded gamma}. In particular, the graph $G$ has terminals $T=\cupdot_{i=1}^{\gamma} T_i$, where $T_i=\{t^i_{1},\ldots,t^i_{k_i}\}$ are the terminals on the boundary of special face $f_i$, and for simplicity we omit $i$ when it is clear from the context. The dual graph, denoted $G^*$ has terminal faces $\{f_{t^i_j}\}$ and special vertices $W=\{w_1, \ldots, w_{\gamma}\}$. Let $v_{\infty}\in V$ be the vertex whose dual face $f_{v_{\infty}}$ is the outerface of $G^*$.

Informally, the next definition determines whether $f_v$,
the face dual to a vertex $v\in V$,
lies ``inside'' or ``outside'' a circuit $M^*$ in $G^*$.
It works by counting how many times a path from $v$ to $v_\infty$
``crosses'' $M^*$ and evaluating it modulo 2 (i.e., its parity).
The formal definition is more technical because it involves fixing a path,
but the ensuing claim shows the value is actually independent of the path.
Moreover, we need to properly define a ``crossing'' between a path $\Phi$
in $G$ and a circuit in $G^*$;
to this end, we view the path $\Phi$ as a sequence of faces in $G^*$,
that goes from $f_v$ to $f_{v_\infty}$ and at each step ``crosses'' an edge of $G^*$.

\begin{definition}[Parity of a dual face]
Let $f_v$ be the dual face to a vertex $v\in V$,
and fix a simple path in $G$ between $v$ and $v_{\infty}$, denoted $\Phi$.
Let $M^*$ be a circuit in $G^*$,
and observe that its edges $E(M^*)$ form a multiset.
Define the \emph{parity} of $f_v$ with respect to $M^*$ to be
\begin{linenomath}
$$
  \parity(f_v,M^*):=\big(\sum_{e\in E(\Phi)}\Count(e^*,E(M^*))\big) \bmod 2  ,
$$
\end{linenomath}
where $\Count(a,A)$ is the number of times an element $a$ appears
in a multiset $A$.
\end{definition}

The next claim justifies the omission of the path $\Phi$ in the notation $\parity(f_v,M^*)$.

\begin{claim}\label{CLM: same parity for every trail from f_v to outerface}
Fix $v\in V$,
and let $\Phi$ and $\Phi'$ be two paths in $G$ between $v$ and ${v_{\infty}}$.
Then for every circuit $M^*$ in $G^*$,
\begin{linenomath}
$$
  \sum_{e\in E(\Phi)}\Count(e^*,E(M^*))=\sum_{e\in E(\Phi')}\Count(e^*,E(M^*)) \pmod 2.
$$
\end{linenomath}
\end{claim}

\begin{proof}
Fix a vertex $v\in V$ and its dual face $f_v$.
Fix also a circuit $M^*$, and a decomposition of it into simple cycles.
We say that a simple cycle in $G^*$ (like one from the decomposition of $M^*$) \emph{contains} the face $f_v$
if that cycle separates $f_v$ from the outerface $f_{v_\infty}$.
Let $\Phi$ be a path between $v$ and ${v_{\infty}}$.
By the Jordan Curve Theorem, the path's dual edges $\set{e^*: e\in E(\Phi)}$
intersect a simple cycle in $G^*$ an odd number of times
if and only if that simple cycle contains the dual face $f_v$.
By summing this quantity over the simple cycles in the decomposition of $M^*$,
we get that
\begin{linenomath}
$$
  \sum_{e\in E(\Phi)}\Count(e^*,E(M^*)) = 1 \pmod 2
$$
\end{linenomath}
if and only if $f_v$ is contained in an odd number of these simple cycles.
The latter is clearly independent of the path $\Phi$, which proves the claim.
\end{proof}

Given a circuit $M^*$ in $G^*$,
we use the above definition to partition the terminals $T$
into two sets according to their parity, namely,
\begin{linenomath}
\begin{align*}
 T_{odd}(M^*) &\eqdef \{t\in T:\ \parity(f_t,M^*)=1\},
 \\
 T_{even}(M^*) &\eqdef \{t\in T:\ \parity(f_t,M^*)=0\} .
\end{align*}
\end{linenomath}
Given $S\in \T_e(G)$,
recall that $E^*_S$ is the shortest cycle which is $S$-separating in $G^*$ (i.e. it separates between the terminal faces $S^*$ and $\bar{S}^*$).
Since $E^*_S$ is an elementary cycle, it separates the plane into exactly two regions, which implies, \Wlog, $T_{odd}(E^*_S)=S$ and $T_{even}(E^*_S)=\bar S$.
Moreover, $E^*_S$ is a simple cycle and thus goes through every vertex of $W$ at most once.
We decompose $E^*_S$ into $|W\cap V(E^*_S)|$ paths in the obvious way,
where the two endpoints of each path, and only them, are in $W$,
and we let $\Pi_S$ denote this collection of paths in $G^*$.
There are two exceptional cases here;
first, if $|W\cap V(E^*_S)|=1$ then we let $\Pi_S$ contain one path
whose two endpoints are the same vertex (so actually a simple cycle).
second, if $|W\cap V(E^*_S)|=0$ then we let $\Pi_S=\emptyset$
(we will deal with this case separately later).
Now define the set
\begin{linenomath}
$$\Pi\eqdef\bigcup_{S\in \T_e(G)} \Pi_S$$
\end{linenomath}
be the collection of all the paths
that are obtained in this way over all possible $S\in \T_e(G)$.
Notice that if the same path is contained in multiple sets $\Pi_S$,
then it is included in the set $\Pi$ only once
(in fact, this ``overlap'' is what we are trying to leverage).

Now give to each path $P \in \Pi$ a label that consists of three parts:
(1) the two endpoints of $P$, say $w_i,w_j\in W$;
(2) the two successive terminals on each of the faces $f_i$ and $f_j$,
which describe where the path $P$ enters vertices $w_i$ and $w_j$,
say between $t^i_{x-1},t^i_{x}$ and between $t^j_y,t^j_{y+1}$; and
(3) the set $T_{odd}(P\cup \Pi_{ij})$,
where $\Pi_{ij}$ is the shortest path (or any other fixed path)
that agrees with parts (1) and (2) of the label and does not go through $W$,
i.e., the shortest path between $w_i$ and $w_j$
that enters them between $t^i_{x-1},t^i_{x}$ and $t^j_y,t^j_{y+1}$
and does not go through any other vertex in $W$.
This includes the exceptional case $i=j$,
in which $P$ is actually a simple cycle.

We proceed to show that each label is given to at most one path in $\Pi$
(which will be used to bound $\card{\Pi}$).
Assume toward contradiction that two different paths $P,P'\in \Pi$ get the same label, and suppose $c(P')<c(P)$.
Suppose $P$ is the path between $w_i$ to $w_j$ in $E^*_S$ for $S\in T_e(G)$,
and $P'$ is the path between the same endpoints (because of the same label)
in $E^*_{S'}$ for another $S'\in T_e(G)$.
By construction, the paths $P$ and $P'$ are simple,
because $E^*_S$ and $E^*_{S'}$ are elementary cycles,
and only their endpoint vertices are from $W$.

The key to arriving at a contradiction is the next lemma.
In these proofs, a path $P$ is viewed as a multiset of edges $E(P)$,
and the union and subtraction operations are applied to multisets.
In particular, the union of two paths with the same endpoints gives a circuit.

\begin{lemma}\label{LMA: Switch subpaths in E^*_S}
The circuit $(E^*_S\setminus P)\cup P'$ is $S$-separating.
\end{lemma}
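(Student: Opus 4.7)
My plan is to exploit the linearity (modulo $2$) of $\parity(f_v,\cdot)$ with respect to multiset union and multiset difference in its circuit argument, together with part~(3) of the common label of $P$ and $P'$, to show that $(E^*_S\setminus P)\cup P'$ induces on $T$ exactly the same parity partition as $E^*_S$ does. Since $E^*_S$ is $S$-separating, this equality of parity partitions will translate, via the same Jordan-curve reasoning already used in Claim~\ref{CLM: same parity for every trail from f_v to outerface}, into the circuit $(E^*_S\setminus P)\cup P'$ being $S$-separating.

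First I would check the structural claim that $(E^*_S\setminus P)\cup P'$ really is a circuit. If $i\neq j$, both $E^*_S\setminus P$ and $P'$ are paths in $G^*$ from $w_i$ to $w_j$, so their multiset union has degree $2$ at $w_i,w_j$ and degree $2$ or $4$ at every other vertex. In the boundary case $i=j$, the set $\Pi_S$ contains a single ``path'' with identical endpoints, meaning $P=E^*_S$ is the whole simple cycle, $E^*_S\setminus P=\emptyset$, and the circuit is just the simple cycle $P'$. Note that $P'$ need not be edge-disjoint from $E^*_S\setminus P$, so the union must be taken in the multiset sense already flagged before the lemma.

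Next I would write down the two algebraic identities that follow at once from the definition of $\parity$ and the additivity of $\Count(\cdot,\cdot)$ in its multiset argument: for every $v\in V$,
\[
  \parity\bigl(f_v,(E^*_S\setminus P)\cup P'\bigr)
  \;\equiv\;
  \parity(f_v,E^*_S)+\parity(f_v,P)+\parity(f_v,P')\pmod{2},
\]
and $\parity(f_v,P\cup\Pi_{ij})\equiv\parity(f_v,P)+\parity(f_v,\Pi_{ij})\pmod{2}$, together with the analogous identity for $P'$ in place of $P$. Combining the last two with the label equality $T_{odd}(P\cup\Pi_{ij})=T_{odd}(P'\cup\Pi_{ij})$, the $\Pi_{ij}$-contribution cancels and I obtain $\parity(f_t,P)\equiv\parity(f_t,P')\pmod{2}$ for every terminal $t\in T$. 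Plugging this into the displayed identity yields $\parity(f_t,(E^*_S\setminus P)\cup P')\equiv\parity(f_t,E^*_S)\pmod{2}$ for every $t\in T$, hence $T_{odd}\bigl((E^*_S\setminus P)\cup P'\bigr)=T_{odd}(E^*_S)=S$, which is the desired $S$-separation.

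The step I expect to be the subtle one is not the algebra but the bookkeeping: $P$ and $P'$ can share edges with $E^*_S\setminus P$, so $(E^*_S\setminus P)\cup P'$ is in general a multi-subgraph, not a simple cycle, and duplicated edges must contribute $0\pmod{2}$ to every parity. This is exactly what the $\Count(\cdot,\cdot)$ formulation was designed for, but I would spell it out once at the start to make the linearity steps unambiguous. Once that is done, the proof is a short $\F_2$-linear computation.
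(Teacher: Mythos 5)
Your proof is correct and follows essentially the same route as the paper's: both swap $P$ for $P'$ through the common reference path $\Pi_{ij}$ and exploit the label equality $T_{odd}(P\cup\Pi_{ij})=T_{odd}(P'\cup\Pi_{ij})$ in a mod-2 parity computation (the paper packages this as Claims~\ref{CLM: parity of A B C} and~\ref{CLM: odd terminals switch parity in A B C}, applied twice to $Q=E^*_S\setminus P$). The only point to make explicit is that $\parity(f_t,\cdot)$ applied to a single open path is not independent of the chosen curve $\Phi$, so you should fix one $\Phi$ per terminal and carry out the whole calculation with crossing counts relative to it, exactly as the paper's claims do --- your closing remark about the $\Count$ bookkeeping addresses precisely this.
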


To prove this lemma, we will need the following two claims.

\begin{claim}\label{CLM: parity of A B C}
Let $A,B$ and $C$ be (the edge sets of) simple paths in $G^*$
between the same $w_i,w_j\in W$. Then
\begin{linenomath}
$$
  \forall t\in T,\qquad
  \parity(f_t, A\cup C)=\parity(f_t, A\cup B) + \parity(f_t, B\cup C) \pmod 2.
$$
\end{linenomath}
\end{claim}

\begin{proof}
Fix $t\in T$ and a path $\Phi$ between $t$ and ${v_{\infty}}$.
Since $A,B$ and $C$ are simple paths,
\begin{linenomath}
\begin{align*}
  \sum_{e\in E(\Phi)}\Count(e^*,A\cup C)
  &= |E^*(\Phi)\cap A|+|E^*(\Phi)\cap C|,
  \\
  \sum_{e\in E(\Phi)}\Count(e^*,A\cup B)
  &=|E^*(\Phi)\cap A|+|E^*(\Phi)\cap B|,
  \\
  \sum_{e\in E(\Phi)}\Count(e^*,B\cup C)
  &=|E^*(\Phi)\cap B|+|E^*(\Phi)\cap C|.
\end{align*}
\end{linenomath}
Summing the three equations above modulo $2$ yields
\begin{linenomath}
\begin{align*}
  \parity(f_t, A\cup C)
  + \parity(f_t, A\cup B) + \parity(f_t, B\cup C)
  = 0  \pmod 2,
\end{align*}
\end{linenomath}
which proves the claim.
\end{proof}

\begin{claim}\label{CLM: odd terminals switch parity in A B C}
  Let $\bigtriangleup$ be the symmetric difference between two sets.
  For every 3 simple paths $A,B$ and $C$ between $w_i,w_j\in W$,
  \begin{linenomath}
  $$T_{odd}(A\cup C)=T_{odd}(A\cup B)\bigtriangleup T_{odd}(B\cup C).$$
  \end{linenomath}
\end{claim}

\begin{proof}
Observe that $T_{odd}(A\cup B)\bigtriangleup T_{odd}(B\cup C)$
contains all $t\in T$ for which exactly one of
$\parity(f_t, A\cup B)$ and $\parity(f_t, B\cup C)$ is equal to $1$,
which by Claim~\ref{CLM: parity of A B C} is equivalent to having
$\parity(f_t, A\cup C)=1$.
\end{proof}

\begin{proof}[Proof of Lemma \ref{LMA: Switch subpaths in E^*_S}]
To set up some notation,
let $Q\eqdef E^*_S\setminus P$ be a simple path between $w_i$ and $w_j$.
Since $E^*_S$ is a simple cycle that contains $P$, we can write $E^*_S=Q\cup P$.

The idea is to swap the path $P$ in $E^*_S$ with the other path $P'$,
which for sake of analysis is implemented in two steps.
The first step replace $P$ (in $E^*_S$) with $\Pi_{ij}$,
which gives the circuit $(E^*_S\setminus P)\cup \Pi_{ij} = Q\cup \Pi_{ij}$.
The second step replaces $\Pi_{ij}$ with $P'$,
which results with the circuit $Q\cup P'=(E^*_S\setminus P)\cup P'$.
Now apply Claim~\ref{CLM: odd terminals switch parity in A B C} twice,
once to the simple paths $A=Q$, $B=P$ and $C=\Pi_{ij}$,
and once to the simple paths $A=Q$, $B=\Pi_{ij}$ and $C=P'$,
we get that
\begin{linenomath}
\begin{align*}
  T_{odd}(Q\cup \Pi_{ij})
  &= T_{odd}(E^*_S)\bigtriangleup T_{odd}(P\cup \Pi_{ij}) ,
  \\
  T_{odd}(Q\cup P')
  &= T_{odd}(Q\cup \Pi_{ij}) \bigtriangleup T_{odd}(\Pi_{ij}\cup P').
\end{align*}
\end{linenomath}
By plugging the first equality above into the second one,
and observing that $T_{odd}(\Pi_{ij}\cup P) = T_{odd}(\Pi_{ij}\cup P')$
because $P$ and $P'$ have the same label,
we obtain that
\begin{linenomath}
\begin{equation}\label{EQ: same odd terminal faces}
  T_{odd}(Q\cup P')
    =T_{odd}(E^*_S) .
\end{equation}
\end{linenomath}

Finally, it is easy to verify that the circuit $Q\cup P'$
must separate between $T_{odd}(Q\cup P')$ and $T_{even}(Q\cup P')$.
Using~\eqref{EQ: same odd terminal faces}
and the fact that $E^*_S$ is an elementary cycle,
we know that $T_{odd}(Q\cup P')=T_{odd}(E^*_S)=S$,
and thus $T_{even}(Q\cup P')=T\setminus S$.
It follows that $Q\cup P'$ is $S$-separating, as required.
\end{proof}

Lemma~\ref{LMA: Switch subpaths in E^*_S} shows that
the circuit $(E^*_S\setminus P)\cup P'$ is $S$-separating,
while also having lower cost than $E^*_S$.
This contradicts the minimality of $E^*_S$,
and shows that the paths in $\Pi$ have distinct labels.
Thus, $\card{\Pi}$ is at most the number of distinct labels,
and we will bound the latter using the following claim.

\begin{claim}\label{CLM: all terminals of same w have same parity}
Let $P\in \Pi$ be a path between $w_i$ and $w_j$, and let $r\in [\gamma]$.
Then
\begin{linenomath}
$$
  \forall t,t'\in T_r, \qquad
  \parity(f_t, P\cup \Pi_{ij})=\parity(f_{t'}, P\cup \Pi_{ij}),
$$
\end{linenomath}
where $\Pi_{ij}$ is the shortest path with the same parts (1) and (2) of the label as $P$, and does not go through any other vertices of $W$.
\end{claim}

\begin{proof}
Since $t,t'\in T_r$,
their dual faces $f_t$ and $f_{t'}$ share $w_r$ on their boundary.
$P$ and $\Pi_{ij}$ are simple paths in $G^*$ with the same endpoints,
and thus $P\cup \Pi_{ij}$ is a circuit in $G^*$,
which by construction does not go through any vertex $w_r$ with $r\neq i,j$. Fix a path $\Phi$ in $G$ between $t$ and $v_\infty$.
We can extend it into a path $\Phi'$ between $t'$ and $v_\infty$, by taking a path $A_{t't}$ in $G$ that goes around the face $f_r$
between $t'$ and $t$ (both are on the face $f_r$, because $t,t'\in T_r$),
and letting $\Phi' \eqdef A_{t't}\cup\Phi$.

Since $P$ and $\Pi_{ij}$ agree on the same parts (1) and (2) of the label, then $P\cup \Pi_{ij}$ have exactly two edges between some two successive terminals on each of the faces $f_i$ and $f_j$. Thus, if $r\neq i,j$ then $\card{A_{t't}\cap (P\cup \Pi_{ij})}=0$. If $r=i$ or $r=j$ but $i\neq j$ then $\card{A_{t't}\cap (P\cup \Pi_{ij})}$ is either 0 or 2. And if $r=i=j$ then $\card{A_{t't}\cap (P\cup \Pi_{ij})}$ is either 0, 2 or 4. Therefore, if we examine the parities of $f_t$ and $f_{t'}$ with respect to $P\cup \Pi_{ij}$ using the paths $\Phi$ and $\Phi'=A_{t't}\cup\Phi$, respectively,
we conclude that these parities are equal, as required.
\end{proof}

We can now bound the number of possible labels of a path $P\in\Pi$.
There are $\gamma^2$ possibilities for part 1 of the label,
i.e., the endpoints $w_i,w_j\in W$ of $P$ (note that we may have $i=j$).
Given this data, there are $k_ik_j$ possibilities for part 2,
i.e., between which two terminals the path $P$ exits $w_i$ and enters $w_j$.
Furthermore, the number of possibilities for part 3 is the number of different subsets $T_{odd}(P\cup\Pi_{ij})$. By Claim~\ref{CLM: all terminals of same w have same parity}
for every $r\in[\gamma]$ either $T_r\subseteq T_{odd}(P\cup\Pi_{ij})$ or $T_r\cap T_{odd}(P\cup\Pi_{ij})=\emptyset$. Thus, the number of different subsets
$T_{odd}(P\cup\Pi_{ij})$ %
is the number of different subsets of $\{T_1,\ldots,T_{\gamma}\}$, which is at most $2^{\gamma}$.
Altogether we get that there are at most $2^{\gamma}\sum_{i,j=1}^{\gamma}k_i\cdot k_j$ different labels.

Finally, there are also cycles $E^*_S$ for $S\in \T_e(G)$ that do not go through any vertices of $W$, i.e. $W\cap V(E^*_S)=\emptyset$. Thus, they are not include in $\Pi$, so we count them now separately. Recall that \Wlog $T_{odd}(E^*_S)=S$, i.e every such cycle $E^*_S$ is identified uniquely by a different subset $T_{odd}(\cdot)$. Since by Claim~\ref{CLM: all terminals of same w have same parity} there are at most $2^{\gamma}$ such subsets, we get that there are at most $2^{\gamma}$ such cycles. Adding them to our calculation, and Theorem~\ref{THM: elementary mincuts bounded faces} follows.

\begin{figure}[t]
        \centering
        \includegraphics[angle=0,width=0.8\textwidth]{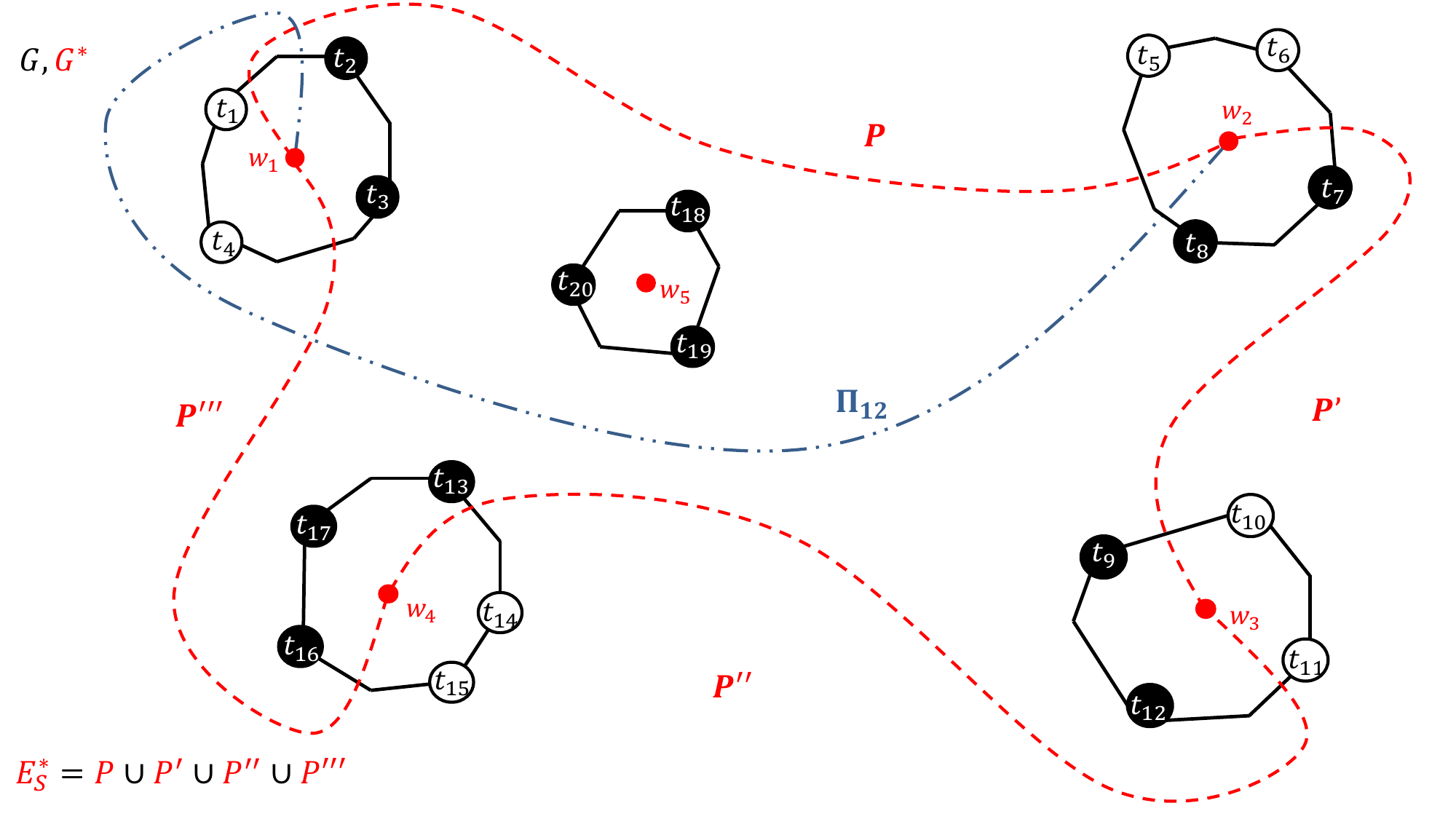}
        \caption{A planar 20-terminal network with $\gamma=5$.
          Let $S\subset T$ be all the black terminals,
          then $E^*_S$ (red dashed line)
          is split into $4$ paths $P, P', P'', P'''$.
          The label of $P$, for example, is
          (1) $w_1$ and $w_2$; (2) $t_1,t_2$ and $t_5,t_8$; (3) $T_{odd}(P\cup \Pi_{12})=\{t_1,t_2,t_3,t_4,t_{18},t_{19},t_{20}\}$,
          and is computed using $\Pi_{12}$ (blue dashed line).
        }
        \label{FGR: PlanarGraphBoundedFaces}
        \mbox{}\hrule
\end{figure}

\subsection{Flow Sparsifiers}
\label{sec:flow}

Okamura and Seymour \cite{OS81} proved that
in every planar network with $\gamma(G)=1$, the flow-cut gap is $1$
(as usual, flow refers here to multicommodity flow between terminals).
It follows immediately, see e.g.~\cite{AGK14}, that for such a graph $G$,
every $(q,s)$-cut-sparsifier is itself also a $(q,s)$-flow-sparsifier of $G$.
Thus, Corollary~\ref{THM: mimick network size k^4} implies the following.

\begin{corollary}\label{COR: flow sparsifier size k^4}
Every planar $k$-terminal network $G$ with $\gamma=1$ admits
a minor $(1,O(k^4))$-flow-sparsifier.
\end{corollary}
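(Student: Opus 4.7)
The plan is immediate: combine Corollary~\ref{THM: mimick network size k^4} with the standard observation (e.g., \cite{AGK14}) that a cut sparsifier becomes a flow sparsifier whenever the underlying flow-cut gap is $1$.

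First, I would apply Corollary~\ref{THM: mimick network size k^4} to the given planar $k$-terminal network $G$ with $\gamma(G)=1$, obtaining a minor mimicking network $H$ with $\card{V_H}=O(k^4)$. By definition, $H$ is a $(1,O(k^4))$-cut-sparsifier of $G$ and, crucially, a minor of $G$.

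The next step is to verify that $H$ itself belongs to the same restricted class as $G$, namely planar networks with $\gamma=1$. Since $H$ is a minor of a planar graph, it is planar. Moreover, starting from a fixed plane drawing of $G$ in which all terminals lie on the outer face, each minor operation used to produce $H$ (deletion or contraction of a non-terminal edge) keeps every terminal on the same face of the resulting drawing. Consequently, there is a plane embedding of $H$ in which all $k$ terminals lie on a single face, so $\gamma(H)=1$.

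Finally, I would invoke the Okamura--Seymour theorem \cite{OS81} for both $G$ and $H$: since both are planar with $\gamma=1$, their multicommodity flow-cut gaps are both equal to $1$. Thus, for every terminal demand $d$, the maximum multicommodity flow value in $G$ (resp.\ $H$) equals the minimum over terminal cuts of the corresponding cut/demand ratio in $G$ (resp.\ $H$). Because $H$ preserves every terminal-cut cost exactly (quality $q=1$), these two extremal ratios coincide, and hence so do the maximum flow values; in other words, $H$ is also a $(1,O(k^4))$-flow-sparsifier of $G$. The only substantive step is the $\gamma(H)=1$ inheritance, and even this is routine given a fixed drawing of $G$; no new combinatorial work is required beyond what is already established.
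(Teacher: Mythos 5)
Your proof is correct and follows essentially the same route as the paper: apply Corollary~\ref{THM: mimick network size k^4} and then use the Okamura--Seymour flow-cut gap of $1$ to upgrade the quality-$1$ cut sparsifier to a quality-$1$ flow sparsifier (an implication the paper delegates to \cite{AGK14}). Your additional verification that $\gamma(H)=1$ --- so that the gap-$1$ bound applies to $H$ as well, which is what the direction ``flows in $G$ are routable in $H$'' genuinely requires --- is a detail left implicit in the paper, and it goes through exactly as you argue because $H$ is obtained from $G$ by contractions that keep all terminals on the outer face.
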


Chekuri, Shepherd, and Weibel~\cite[Theorem 4.13]{CSW13} proved that
in every planar network $G$, the flow-cut gap is at most $3\gamma(G)$,
and thus Corollary~\ref{THM: UB mimicking bounded faces} implies the following.

\begin{corollary}
Every planar $k$-terminal network $G$ with $\gamma=\gamma(G)$
admits a minor $(3\gamma, O(\gamma 2^{2\gamma}k^4))$-flow-sparsifier.
\end{corollary}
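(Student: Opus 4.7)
The plan is to mirror the proof of Corollary~\ref{COR: flow sparsifier size k^4}, using the general planar flow-cut gap bound of \cite[Theorem 4.13]{CSW13} in place of the Okamura--Seymour $\gamma{=}1$ bound, and the mimicking network of Corollary~\ref{THM: UB mimicking bounded faces} in place of the $\gamma{=}1$ one.

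First, I would apply Corollary~\ref{THM: UB mimicking bounded faces} to obtain a minor mimicking network $H$ of $G$ with $\card{V_H} = O(\gamma 2^{2\gamma}k^4)$. Since $H$ is a minor of the planar graph $G$, it is itself planar, and crucially $\gamma(H) \le \gamma(G) = \gamma$: any set of $\gamma$ faces of $G$ covering all terminals projects, after the edge deletions and contractions that define $H$ (operations that can only merge faces and never split them), to a set of at most $\gamma$ faces of $H$ still covering all terminals. Consequently, applying \cite[Theorem 4.13]{CSW13} to both $G$ and $H$ bounds the multicommodity flow-cut gap by $3\gamma$ on each side.

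Second, I would apply the same cut-to-flow reduction of \cite{AGK14} invoked in the proof of Corollary~\ref{COR: flow sparsifier size k^4}. Because $H$ preserves every terminal cut exactly (the mimicking property) and in both $G$ and $H$ every terminal multicommodity flow lies within a factor of $3\gamma$ of its sparsest terminal cut, one obtains, for every terminal demand $D$, both $f_H(D) \le 3\gamma\cdot f_G(D)$ and $f_G(D) \le 3\gamma\cdot f_H(D)$, which is precisely the required $(3\gamma, O(\gamma 2^{2\gamma}k^4))$-flow-sparsifier guarantee.

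The only mildly non-trivial point, and arguably the ``main obstacle'', is the chain $f_H(D) \le \mathrm{cut}_H(D) = \mathrm{cut}_G(D) \le 3\gamma\cdot f_G(D)$ (and its symmetric counterpart), where $\mathrm{cut}(D)$ denotes the sparsest terminal cut serving $D$. The middle equality is valid because this sparsest-cut value depends only on the quantities $\{\mincut(S)\}_{S\subset T}$, which a mimicking network preserves by definition; the outer inequalities are exactly the flow-cut gap bounds in $H$ and $G$. Thus the obstacle is conceptual rather than technical, and no new planar-duality or combinatorial argument beyond those already developed in the paper is required.
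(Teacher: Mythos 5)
Your proposal is correct and follows essentially the same route as the paper, which derives the corollary in one step from the mimicking network of Corollary~\ref{THM: UB mimicking bounded faces} together with the flow-cut gap bound of $3\gamma$ from \cite[Theorem 4.13]{CSW13} (just as Corollary~\ref{COR: flow sparsifier size k^4} uses Okamura--Seymour for $\gamma=1$). Your additional observation that $H$, being a minor of $G$, is planar with $\gamma(H)\le\gamma(G)$ is a valid and welcome detail that the paper leaves implicit.
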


\section{Terminal-Cuts Scheme}
\label{SEC: Terminal Cut scheme}
In this section we present applications
of our results in Section~\ref{SEC: Elementary cuts General graphs}
to data structures that store all the minimum terminal cuts in a graph $G$.
As our focus is on the data structure's memory requirement,
we do not discuss its query time.
We start with a formal definition of such a data structure,
and then provide our bounds
of $\tilde{O}(|\T_e|)$ bits for general graphs
(Theorem~\ref{THM: UB data Structure}),
and $\tilde{O}(2^{\gamma}k^2)$ bits for planar graphs
(Corollaries~\ref{COR: UB k^2 data Structure}
and~\ref{THM: UB data structure gamma(G)}). In comparison, a trivial data structure for general graphs uses $\tilde{O}(2^k)$ bits, by storing the cost of all the terminal mincuts explicitly.

\begin{definition}\label{DEF:TCscheme}
A \emph{terminal-cuts scheme (\TCs)} is a data structure
that uses a storage (memory) $M$ to support the following two operations on
a $k$-terminal network $G=(V,E,T,c)$,
where $n=|V|$ and $c:E \to \{1,\ldots,n^{O(1)}\}$.
\begin{enumerate}
\item Preprocessing, denoted $P$, which gets as input the network $G$ and builds $M$.
\item Query, denoted $R$, which gets as input a subset of terminals $S$,
and uses $M$ (without access to $G$) to output the cost of the minimum cutset $E_S$.
\end{enumerate}
\end{definition}

We usually assume a machine word size of $O(\log n)$ bits,
because even if $G$ has only unit-weight edges,
the cost of a cut might be $O(n^2)$, which is not bounded in terms of $k$.

\begin{theorem}%
\label{THM: UB data Structure}
Every $k$-terminal network $G=(V,E,T,c)$
admits a \TCs with storage size of $O(|\T_e(G)|(k+\log n))$ bits,
where $\T_e(G)$ is the set of elementary cutsets in $G$.
\end{theorem}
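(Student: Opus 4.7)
The plan is to leverage Theorem~\ref{THM: E_S disjoint union elementary cuts}, which guarantees that every minimum terminal cut $E_S$ decomposes into a disjoint union of elementary cutsets; this reduces the scheme's task to storing data only about the elementary cutsets and recovering all other min-cut values from them at query time. Concretely, the preprocessing enumerates all subsets $S\subset T$ (up to the symmetry $S\equiv \bar S$), computes $E_S$ together with its components $CC(E_S)$, and retains only those $S'$ with $|CC(E_{S'})|=2$, i.e.\ $S'\in\T_e(G)$. For each retained $S'$ the memory $M$ stores (i) the characteristic vector $\1_{S'}\in\{0,1\}^T$, requiring $k$ bits; and (ii) the integer $\mincut_G(S')$, which fits in $O(\log n)$ bits because $c:E\to\{1,\ldots,n^{O(1)}\}$ and $|E|\le n^{O(1)}$ imply that every cut has cost at most $n^{O(1)}$. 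Summing over the $|\T_e(G)|$ records gives total storage $O(|\T_e(G)|(k+\log n))$, as claimed.

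On query $S\subset T$, the scheme outputs
$$
  R(S)\;=\;\min\Big\{ \sum_{S'\in\varphi}\mincut_G(S')\;:\;\varphi\subseteq\T_e(G),\ S=\bigcupdot_{S'\in\varphi}S' \Big\},
$$
enumerating partitions of $S$ into elementary subsets using only the stored characteristic vectors and costs (no access to $G$). Correctness follows from a two-sided argument that mirrors the proof of Theorem~\ref{THM: elementary cuts for general graphs}: Theorem~\ref{THM: E_S disjoint union elementary cuts} supplies at least one partition $\varphi_G$ achieving $\sum_{S'\in\varphi_G}\mincut_G(S')=\mincut_G(S)$, while for any partition $\varphi$ the edge-union $\bigcup_{S'\in\varphi}E_{S'}$ is $S$-separating in $G$, giving $\mincut_G(S)\le \sum_{S'\in\varphi}\mincut_G(S')$; hence the minimum is exactly $\mincut_G(S)$.

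The main subtlety I expect to highlight in the write-up is that a partition of $S$ into elementary subsets does \emph{not}, in general, induce an edge-disjoint decomposition of $E_S$: only the specific partition guaranteed by Theorem~\ref{THM: E_S disjoint union elementary cuts} is tight, while a generic partition yields an $S$-separating edge-union of strictly larger weight. This is precisely why the query takes a minimum over partitions rather than reading off an answer from a single one. Query time is exponential in $k$ in the worst case, but the theorem bounds only storage, so this does not affect the stated result.
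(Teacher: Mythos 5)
Your preprocessing and storage analysis coincide with the paper's (store $\langle \1_{S'},\mincut_G(S')\rangle$ for each $S'\in\T_e(G)$, $k+O(\log n)$ bits per record), but your query rule has a genuine gap. You minimize only over collections $\varphi\subseteq\T_e(G)$ that \emph{partition $S$}, and you justify tightness by asserting that Theorem~\ref{THM: E_S disjoint union elementary cuts} supplies such a partition $\varphi_G$ of $S$ with $\sum_{S'\in\varphi_G}\mincut_G(S')=\mincut_G(S)$. That theorem decomposes the \emph{edge set} $E_S$ into disjoint elementary cutsets, but the terminal subsets defining those cutsets need not be subsets of $S$: the peeling argument may remove elementary components lying on the $\bar S$ side (or merged components). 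A concrete failure: take the unit-weight path $b_1-s-b_2$ with $T=\{b_1,s,b_2\}$ and $S=\{s\}$. Then $E_S$ consists of both edges, $|CC(E_S)|=3$, so $\{s\}\notin\T_e(G)$, and no nonempty subset of $S$ is elementary; your feasible set of partitions is empty, so $R(S)$ is undefined (or $+\infty$), while $\mincut_G(S)=2$. The decomposition theorem here writes $E_S=E_{\{b_1\}}\cupdot E_{\{b_2\}}$, a partition of $\bar S$, not of $S$. (Your other direction -- that any partition of $S$ yields an $S$-separating edge union, hence a value at least $\mincut_G(S)$ -- is fine; it is the attainment that breaks.)

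The paper's query is formulated precisely to avoid this: it minimizes $\sum_{S'\in\varphi}\mincut_G(S')$ over \emph{all} $\varphi\subseteq\T_e(G)$ for which $\mycup_{S'\in\varphi}\mincutset_G(S')$ is $S$-separating, a condition it checks from the stored data via the inequality $|\1_S(t_i)-\1_S(t_j)|\le\sum_{S'\in\varphi}|\1_{S'}(t_i)-\1_{S'}(t_j)|$ for all $i,j\in[k]$. This condition is invariant under replacing any $S'$ by its complement (only the bipartition matters), so the collection coming from Theorem~\ref{THM: E_S disjoint union elementary cuts} is feasible no matter which side each elementary cutset is labeled by, and edge-disjointness gives $R(S)\le\mincut_G(S)$; conversely any feasible $\varphi$ gives an $S$-separating edge set of cost at most the sum, so $R(S)\ge\mincut_G(S)$. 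Note that even the milder repair of allowing partitions of $S$ \emph{or} of $\bar S$ is not obviously sufficient, since the peeling process can record merged components containing terminals from both original sides; the separating-union formulation sidesteps this entirely. The ``subtlety'' you highlight at the end (that generic partitions are not edge-disjoint and hence not tight) is correct but orthogonal to the actual missing point, which is sidedness/complementation of the elementary subsets.
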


\begin{proof}%
We construct a \TCs as follows.
In the preprocessing stage, given $G$,
the \TCs stores $\langle S, c(E_S) \rangle$ for every $S\in \T_e(G)$,
where $S$ is written using $k$ bits.
The cost of every cutset is at most $|E|\cdot n^{O(1)}=\poly(n)$,
and thus the storage size of the \TCs is $O(|\T_e(G)|(k+\log n))$ bits,
as required.
Now given a subset $S\subset T$, the query operation $R(S;P(G))$
outputs
\begin{linenomath}
\begin{equation} \label{EQ: TCs Query}
  \min \Big\{\sum_{S'\in \varphi } \mincut_G(S') :\
  \varphi \subseteq \T_e(G)
  \text{ s.t. $\mycup_{S'\in \varphi}\mincutset_G(S')$ is
    $S$-separating in $G$} \Big\} .
\end{equation}
\end{linenomath}
Since for every $\varphi\subseteq 2^T$,
the cutset $\cup_{S'\in \varphi}\mincutset_G(S')$ is $S$-separating in $G$
if and only if
$|\1_S(t_i)-\1_S(t_j)|\leq\sum_{S'\in \varphi}|\1_{S'}(t_i)-\1_{S'}(t_j)|$
for all $i,j\in[k]$,
the calculation in \eqref{EQ: TCs Query} can be done with no access to $G$.
Clearly, $\mincut_G(S)\leq R(S;P(G))$.
By Theorem \ref{THM: E_S disjoint union elementary cuts},
there is $\varphi\subseteq {\T_e(G)}$ such that
$\mincutset_G(S)=\cupdot_{S'\in \varphi}\mincutset_G(S')$
and $\mincut_G(S)=\sum_{S'\in \varphi} \mincut_G(S')$.
Thus, $R(S';P(G))=\mincut_G(S')$. %
\end{proof}

\begin{corollary} \label{COR: UB k^2 data Structure}
Every planar $k$-terminal network $G$ with $\gamma=1$
admits a \TCs with storage size of $O(k^2 \log n)$ bits,
i.e., $O(k^2)$ words.
\end{corollary}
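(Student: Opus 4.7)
The plan is to combine Theorem~\ref{LMA: at most k^2 elementary min cuts}, which gives $|\T_e(G)| \le \binom{k}{2}$ when $\gamma=1$, with the scheme from Theorem~\ref{THM: UB data Structure}, but with a more compact encoding of each elementary cutset than the $k$-bit characteristic vector used there. A direct application would yield $O(k^2(k+\log n))$ bits, which is a factor of $k$ too many, so the whole point of the corollary is to save that factor by exploiting the structural description of elementary cutsets available in the $\gamma=1$ case.

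First I would recall from the proof of Theorem~\ref{LMA: at most k^2 elementary min cuts} that, when all $k$ terminals lie on a single (outer) face, every elementary cutset $E_S$ corresponds to a contiguous interval of terminals along the outer face's cycle, so the subset $S$ is uniquely determined by an ordered pair of terminal indices (say, the first and last terminal of the interval along the clockwise boundary). During preprocessing, the scheme fixes the clockwise order of terminals around the outer face once and for all (storing this permutation takes $O(k\log k)$ bits, which is a lower-order term), and then, for each of the $O(k^2)$ elementary cutsets, stores the ordered pair of endpoint terminal indices ($O(\log k)$ bits) together with the cost $c(E_S)$ ($O(\log n)$ bits, since $c(E)\leq\poly(n)$). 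This totals $O(\log n)$ bits per elementary cutset and $O(k^2\log n)$ bits overall, as required.

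Second, I would implement the query $R(S;P(G))$ exactly as in the proof of Theorem~\ref{THM: UB data Structure}: enumerate subfamilies $\varphi\subseteq\T_e(G)$ and return the minimum of $\sum_{S'\in\varphi}\mincut_G(S')$ over those $\varphi$ for which $\bigcup_{S'\in\varphi}\mincutset_G(S')$ is $S$-separating, using the coordinate-wise criterion
\[
  \card{\1_S(t_i)-\1_S(t_j)}\leq\sum_{S'\in\varphi}\card{\1_{S'}(t_i)-\1_{S'}(t_j)},\qquad\forall i,j\in[k].
\]
The only modification is that the characteristic function $\1_{S'}$ of each elementary $S'$ is not stored explicitly; instead, given the stored pair of terminal indices representing $S'$ and the stored clockwise order of terminals, $\1_{S'}$ is reconstructed on the fly in $O(k)$ time. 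This is permissible because Definition~\ref{DEF:TCscheme} bounds only the storage $M$ and places no restriction on query time. Correctness of the returned value follows verbatim from the proof of Theorem~\ref{THM: UB data Structure}, relying on Theorem~\ref{THM: E_S disjoint union elementary cuts} for the lower-bound direction.

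I do not expect a genuine obstacle here: the only point worth verifying is that the two-terminal encoding of $S$ is well-defined and invertible without accessing $G$, which is exactly the interval characterization underlying Theorem~\ref{LMA: at most k^2 elementary min cuts}. Once that is noted, the storage bound $O(k^2\log n)$ bits $= O(k^2)$ words is immediate from $|\T_e(G)|=O(k^2)$.
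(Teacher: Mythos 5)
Your proposal is correct and matches the paper's own proof: both exploit the interval characterization from Theorem~\ref{LMA: at most k^2 elementary min cuts} to encode each elementary $S$ by two terminal indices ($O(\log k)\le O(\log n)$ bits) instead of a $k$-bit vector, and otherwise reuse the scheme of Theorem~\ref{THM: UB data Structure}. Your extra remarks about storing the clockwise terminal order and reconstructing $\1_{S'}$ at query time are just details the paper leaves implicit.
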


\begin{proof}
If $G$ is a planar $k$-terminal network with $\gamma=1$,
then by Theorem~\ref{THM: at most k^2 elementary min cuts} every $S\in \T_e(G)$ is equal to $\{{t_i},{t_{i+1}}, \ldots, {t_j}\}$ for some $i,j\in [k]$ and $|\T_e(G)|=\binom{k}{2}$ (recall that all the terminals $t_1,\ldots,t_k$ are on the outerfaces of $G$ in order).
Thus, we can specify $S$ via these two indices $i$ and $j$,
using only $O(\log k) \leq O(\log n)$ bits (instead of $k$).
The storage bound follows.
\end{proof}

\begin{theorem} \label{THM: UB data structure gamma(G)}
Every planar $k$-terminal network $G$ with $\gamma=\gamma(G)$ admits
a \TCs with storage size of
$O\Big(2^{\gamma}\big( 1+\sum_{i,j=1}^{\gamma} k_i\cdot k_j \big)\big(\gamma+\log n\big)\Big)
  \leq O(2^{\gamma}k^2(\gamma+\log n))$
bits.
\end{theorem}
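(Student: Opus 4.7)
The plan is to follow the template of the proof of Theorem~\ref{THM: UB data Structure}, but replace the primitive units that are stored---the elementary cutsets in $\T_e(G)$---by the finer ``special'' edge-subsets $E_1,\ldots,E_p$ supplied by Theorem~\ref{THM: elementary mincuts bounded faces}, where $p = 2^{\gamma}\bigl(1+\sum_{i,j=1}^{\gamma} k_i k_j\bigr) = O(2^{\gamma} k^2)$. In the preprocessing stage, given $G$, the TCs computes these $p$ subsets together with their three-part labels (as defined in Section~\ref{SEC: proof elementary mincuts bounded gamma}) and their costs $c(E_i)$. Each label uses $O(\log\gamma)$ bits for the endpoint pair $(w_a,w_b)\in W\times W$, $O(\log k)$ bits for the two entry/exit positions on the faces $f_a$ and $f_b$, and exactly $\gamma$ bits for the $T_{odd}(P\cup\Pi_{ij})$-component, since by Claim~\ref{CLM: all terminals of same w have same parity} this set is a union of some of the $T_r$'s and hence identifies a subset of $[\gamma]$. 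Combined with the $O(\log n)$ bits needed to write the cost $c(E_i)$, each of the $p$ entries uses $O(\gamma+\log n)$ bits, giving a total storage of $O\!\bigl(p(\gamma+\log n)\bigr) = O(2^{\gamma} k^2(\gamma+\log n))$.

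For the query operation $R(S; P(G))$, I would output
\[
  \min\Bigl\{ \textstyle\sum_{i\in\sigma} c(E_i) \;:\; \sigma\subseteq[p] \text{ and } \bigcup_{i\in\sigma} E_i \text{ is } S\text{-separating in } G \Bigr\} ,
\]
where the predicate is to be tested purely from the labels, without touching $G$. The endpoint data in the labels (parts~1 and~2) suffice to decide whether $\{E_i^*:i\in\sigma\}$ glue into a disjoint union of simple cycles in $G^*$: we need each $w\in W$ to appear an even number of times as an endpoint and the entry/exit positions to match up. The $T_{odd}$-components (part~3) together with Claims~\ref{CLM: parity of A B C} and~\ref{CLM: odd terminals switch parity in A B C} let us compute, via a symmetric-difference calculation modulo the reference paths $\Pi_{ij}$, the terminal subset $S'$ separated by each resulting elementary cycle; checking $S$-separation of the full union then reduces to the pairwise separation characterization $\lvert\1_S(t)-\1_S(t')\rvert \le \sum_{S'}\lvert\1_{S'}(t)-\1_{S'}(t')\rvert$ used in the proof of Theorem~\ref{THM: UB data Structure}. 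Correctness of the output, namely $R(S;P(G)) = \mincut_G(S)$, then follows by combining Theorem~\ref{THM: E_S disjoint union elementary cuts} (every \MTC decomposes into a disjoint union of elementary cutsets) with Theorem~\ref{THM: elementary mincuts bounded faces} (every elementary cutset decomposes into such $E_i$'s).

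The main obstacle I foresee is verifying that the three-part label---originally designed only for the counting argument in Section~\ref{SEC: proof elementary mincuts bounded gamma}---carries sufficient information to support both of the combinatorial tests above. The delicate point is that the stored $T_{odd}$ sets are parities relative to the reference paths $\Pi_{ij}$ hard-wired into each label, so combining several special subsets requires carefully canceling these references out via repeated applications of Claim~\ref{CLM: odd terminals switch parity in A B C}; once that bookkeeping is shown to be well-defined and computable from the labels alone, the storage bound is immediate from the count in Theorem~\ref{THM: elementary mincuts bounded faces}.
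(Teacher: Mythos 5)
Your proposal takes essentially the same route as the paper: the paper's (very brief) proof sketch likewise stores the $p=O(2^{\gamma}k^2)$ special edge-subsets of Theorem~\ref{THM: elementary mincuts bounded faces} via their three-part labels and costs, notes that each label needs only $O(\gamma)$ extra bits, and asserts that all elementary cuts (hence, by Theorem~\ref{THM: E_S disjoint union elementary cuts}, all minimum terminal cuts) can be restored from the labels alone, after which the query works as in Theorem~\ref{THM: UB data Structure}. Your storage accounting matches, and the parity/gluing bookkeeping you flag as delicate is exactly the step the paper leaves implicit in its sketch, so there is nothing substantively different to reconcile.
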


\begin{proof}[Proof sketch]
If $G$ is a planar $k$-terminal network with bounded $\gamma$, then Theorem \ref{THM: elementary mincuts bounded faces} characterize $2^{\gamma}k^2$ special subsets of edges together with some small addition information for each such subset that denote by \emph{label}. It further prove that all the elementary cuts can be restored using only the special subsets and their labels. As each label can be stored by at most $O(\gamma)$ bits, the storage bound follows.
\end{proof}

\section{Cut-Sparsifier vs. DAM in planar networks}
\label{app: Duality cuts and distances}
In this section we prove the duality between cuts and distances in planar graphs with all terminals on the outerface. Although the duality between shortest cycles and minimum cuts in planar graphs is known, the main difficulty is to transform all the shortest cycles into shortest paths without blowing up the number of terminals in the graph. We prove this duality using the following two theorems, and applications of them can be found in Section \ref{SEC: Reduction Application}.

\begin{theorem}\label{THM: reduction from outer face cuts to DAM}
    Let $G=(V,E,T,c)$ be a planar $k$-terminal network with all its terminals $T$ on the outerface. One can construct a planar $k$-terminal network $G'=(V',E',T',c')$ with all its terminals $T'$ on the outerface, such that if $G'$ admits a $(q, s)$-DAM then $G$ admits a minor $(q,O(s))$-cut-sparsifier.
\end{theorem}

\begin{theorem}\label{THM: reduction from outer face DAM to outer face cuts}
Let $G=(V,E,T,c)$ be a planar $k$-terminal network with all its terminals $T$ on the outerface. One can construct a planar $k$-terminal network $G'=(V',E',T',c')$ with all its terminals $T'$ on the outerface, such that if $G'$ admits minor $(q,s)$-cut-sparsifier then $G$ admits a $(q,O(s))$-DAM.
\end{theorem}

\subsection{Proof of Theorem~\ref{THM: reduction from outer face cuts to DAM}}
\label{app: Reduction CAM to DAM}

\paragraph{Construction of the Reduction.}
The idea is to first use the duality of planar graphs in order to convert every minimum terminal cut into a shortest cycle, and then ``open'' every shortest cycle into a shortest path between two terminals, which in turn are preserved by a $(q,s)$-DAM. More formally, given a plane $k$-terminal network $G=(V,E,F,T,c)$ with all its terminals $T=\{t_1,\ldots,t_k\}$ on the outerface in a clockwise order, we firstly construct its dual graph $G_1$ where the boundaries of all its $k$ terminal faces $T(G_1)=\{f_{t_{1}},\ldots,f_{t_{k}}\}$ share the same vertex $v_{f_\infty}$, and secondly we construct $G_2$ by the graph $G_1$ where the vertex $v_{f_\infty}$ is split into $k$ different vertices $v^{i,i+1}_{f_\infty}$, and every edge $(v_{f_\infty},v^*)$ that embedded between (or on) the two terminal faces $f_{t_{i}}$ and $f_{t_{i+1}}$ in $G_1$ correspond to a new edge $(v^{i,i+1}_{f_\infty},v^*)$ in $G_2$ with the same length. See Figure \ref{FGR: Reduction Outer Mimicking To Outer DPM} from left to right for illustration, and see Appendix~\ref{app: Planar Duality} for basic notions of planar duality.
In the following, $f+f'$ denotes a new face that is the union of two faces $f$ and $f'$.

\begin{linenomath}
    \begin{align*}
      V(G_2) := & \big(V(G_1)\setminus \{v_{f_\infty}\}\big) \cup \{v^{1,2}_{f_\infty},\ldots, v^{k-1,k}_{f_\infty},v^{k,1}_{f_\infty}\}
      \\
      E(G_2) := & \big(E(G_1) \setminus \{(v_{f_\infty},v^*)\ :\ v^*\in V(G_1)\}\big)
      \\
      & \cup \{(v^{i,i+1}_{f\infty},v^*)\ :\ i\in[k], (v_{f_\infty},v^*)\in E(G_1), v^*\text{ between } f_{t_{i}}, f_{t_{i+1}}\}\text{\footnotemark}
      \\
      F(G_2) := & \big(F(G_1)\setminus \{f_\infty,f_{t_1},\ldots, f_{t_k}\}\big) \cup \{f_\infty+ f_{t_1} + \ldots + f_{t_k}\}
      \\
      T(G_2) := & \{v^{1,2}_{f_\infty},\ldots, v^{k-1,k}_{f_\infty},v^{k,1}_{f_\infty}\}
    \end{align*}
\end{linenomath}

\footnotetext{We allow wraparound, i.e., $v^{k,k+1}=v^{k,1}$.}

    Let $H_2$ be an $(q, s)$-DAM of $G_2$. Since it is a minor of $G_2$, both are planar $k$-terminals network such that all their terminals are on their outerface in the same clockwise order. Hence, we can use $H_2$ and the same reduction above, but in reverse operations, in order to construct a $(q,O(s))$-cut-sparsifier $H$ for $G$. First, we ``close'' all the shortest paths in $H_2$ into cycles by merging its $k$ terminals $v^{i-1,i}_{f_\infty}$ into one vertex called $v_{f_\infty}$, and denote this new graph by $H_1$. Note that $H_1$ has $k$ new faces $f_{t_1},\ldots, f_{t_k}$, where each face $f_{t_i}$ was created by uniting the two terminals $v^{i-1,i}_{f_\infty}, v^{i,i+1}_{f_\infty}$ of $H_2$. These $k$ new faces of $H_1$ will be its $k$ terminal faces. Secondly, we argue that the dual graph of $H_1$ is our requested cut-sparsifier of $G$, which we denote by $H$. See Figure \ref{FGR: Reduction Outer Mimicking To Outer DPM} from right to left for illustration.

\begin{figure}
  \centering
  \includegraphics[angle=0,width=0.9\textwidth] {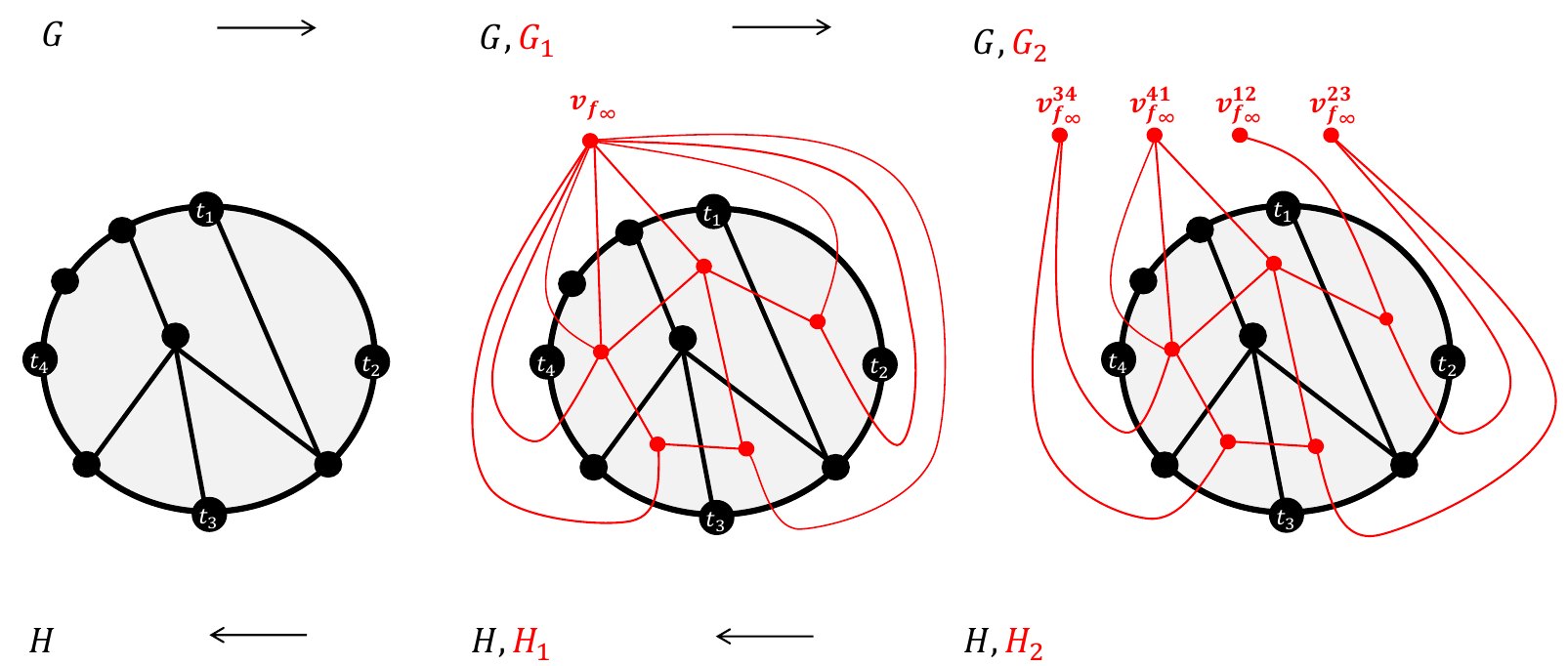}
  \caption{The first graph (in black) is the original graph $G$. The second is its dual graph $G_1$ colored in red. The third graph $H$ and its new terminals colored in red. We get that graph by ``splitting'' the outerface vertex $v_{f_\infty}$ to $k$ new vertices, which are the new terminals.}
  \label{FGR: Reduction Outer Mimicking To Outer DPM}
  \mbox{}\hrule
\end{figure}

\paragraph{Analysis of the Reduction.}
The key element of the reduction's proof is the duality between every shortest cycle in $G_1$ to a shortest path in $G_2$, which we formally stated in the following lemma. Given $G$ and its dual graph $G_1=G^*$ as stated above, for every subsets of terminals $S\subset T(G)$ we denote by $S^*\subset T(G^*)$ the corresponding set of terminal faces, i.e. $S^*=\{f_{t_i}:\ \forall i\in[k] \text{ s.t. } t_i\in S\}$.

\begin{lemma}\label{LMA: equivalence minimum circuits and shortest paths}
Every shortest circuit that separates between the terminal faces $S^*_{i,j}$ and $\bar S^*_{i,j}$ in $G_1$, corresponds to a shortest path between the two terminals $v^{i-1,i}_{f_\infty}$ and $v^{j,j+1}_{f_\infty}$ in $G_2$, and vise versa.
\end{lemma}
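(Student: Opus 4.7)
The plan is to establish the correspondence by exploiting the fact that, in $G_1$, all $k$ terminal faces share the single vertex $v_{f_\infty}$, so any circuit separating a non-trivial subset of terminal faces must pass through $v_{f_\infty}$. The construction of $G_2$ from $G_1$ is essentially a vertex-split operation at $v_{f_\infty}$ that preserves every other edge and its length; hence a cycle through $v_{f_\infty}$ in $G_1$ naturally becomes a path between two of the split-copies in $G_2$. I will verify the length and separation properties are preserved in both directions.

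In more detail, first I would set up a length-preserving bijection between (i) simple cycles in $G_1$ that pass through $v_{f_\infty}$ exactly once, and (ii) simple paths in $G_2$ whose endpoints lie in the set $\{v^{1,2}_{f_\infty},\ldots,v^{k,1}_{f_\infty}\}$. Given such a cycle $C$ in $G_1$, let $e_1=(v_{f_\infty},u_1)$ and $e_2=(v_{f_\infty},u_2)$ be its two edges at $v_{f_\infty}$. By the fixed planar embedding, $e_1$ is embedded between some consecutive pair of terminal faces $f_{t_{a}},f_{t_{a+1}}$, and similarly $e_2$ lies between $f_{t_{b}},f_{t_{b+1}}$. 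Under the splitting operation, $e_1$ becomes an edge at $v^{a,a+1}_{f_\infty}$ and $e_2$ becomes an edge at $v^{b,b+1}_{f_\infty}$ in $G_2$, while every other edge of $C$ is left untouched. Thus $C$ maps to a path $P$ in $G_2$ with endpoints $v^{a,a+1}_{f_\infty}$ and $v^{b,b+1}_{f_\infty}$, and $c'(P)=c(C)$. This map is clearly invertible by merging the split copies back into $v_{f_\infty}$.

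Next I would match up separation with endpoints. A simple cycle $C$ through $v_{f_\infty}$ in $G_1$ partitions the plane into an ``inside'' and ``outside''; since all terminal faces meet at $v_{f_\infty}$, the cycle separates exactly those terminal faces $f_{t_\ell}$ whose unique corner at $v_{f_\infty}$ falls inside $C$. With $e_1,e_2$ as above, this inside set is precisely $\{f_{t_{a+1}},\ldots,f_{t_b}\}$ (reading clockwise around $v_{f_\infty}$, which is the same order used to index the split vertices). Setting $a=i-1$ and $b=j$ shows that cycles separating $S^*_{i,j}$ from $\bar S^*_{i,j}$ correspond exactly to paths in $G_2$ between $v^{i-1,i}_{f_\infty}$ and $v^{j,j+1}_{f_\infty}$.

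Finally I would argue the ``shortest'' half by contradiction in both directions: if the shortest separating circuit $C$ in $G_1$ yielded a non-shortest path in $G_2$, replacing the corresponding path by a shorter one and re-closing at $v_{f_\infty}$ would give a strictly shorter separating circuit in $G_1$; the converse is symmetric. The main obstacle is justifying the reduction to circuits that pass through $v_{f_\infty}$ exactly once: a shortest separating circuit may a priori be a union of cycles or revisit $v_{f_\infty}$, but any such configuration can be pruned to a simple cycle that still separates $S^*_{i,j}$ without increasing length, because any sub-cycle that does not pass through $v_{f_\infty}$ encloses no terminal face (all terminal faces share $v_{f_\infty}$) and is therefore removable without affecting the separation, while any pair of visits to $v_{f_\infty}$ can be short-cut to one visit by swapping subpaths. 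This reduction, together with the length-preserving bijection, yields the stated equivalence.
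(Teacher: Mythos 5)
Your core construction---splitting $v_{f_\infty}$ is length-preserving away from $v_{f_\infty}$, and a simple cycle through $v_{f_\infty}$ exactly once becomes a simple path between two split copies whose endpoints are determined by which consecutive terminal-face pairs the two incident cycle edges lie between---is correct and matches the base case of the paper's argument. The gap is in your final ``reduction'' step, which is exactly where the real content of the lemma lies. A shortest circuit in $G_1$ separating $S^*_{i,j}$ from $\bar S^*_{i,j}$ can genuinely consist of \emph{several} cycles, each passing through $v_{f_\infty}$ (dually: the minimum cut in $G$ separating $\{t_i,\ldots,t_j\}$ can have more than two connected components), and your claim that ``any pair of visits to $v_{f_\infty}$ can be short-cut to one visit by swapping subpaths'' without increasing length and while preserving separation is false. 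Concretely, take $G$ with outer cycle $t_1,c,t_2,t_3,t_4$ and chords $ct_3,ct_4$, with weights $1$ on $t_4t_1,\,t_1c,\,ct_2,\,t_2t_3$ and weight $100$ on $ct_3,\,ct_4,\,t_3t_4$. The minimum cut separating $\{t_1,t_2\}$ costs $4$ and has three components $\{t_1\},\{t_2\},\{c,t_3,t_4\}$, so the shortest separating circuit in $G_1$ is a union of two cycles through $v_{f_\infty}$; any single simple cycle through $v_{f_\infty}$ separating these terminal faces corresponds to a two-sided cut and costs at least $200$. Correspondingly, the shortest path in $G_2$ from $v^{4,1}_{f_\infty}$ to $v^{2,3}_{f_\infty}$ must pass through the \emph{intermediate terminal} $v^{1,2}_{f_\infty}$, a configuration your claimed bijection (simple cycles through $v_{f_\infty}$ once $\leftrightarrow$ simple paths between split copies) never produces.

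This is precisely what the paper's proof is organized around: it decomposes the shortest separating circuit into disjoint cycles via Theorem~\ref{THM: E_S disjoint union elementary cuts} and runs an induction on the number of cycles (distinguishing nested from non-nested cycles), showing that an $l$-cycle circuit corresponds to a path visiting $l+1$ terminals of $G_2$, obtained by concatenating sub-paths at intermediate split vertices $v^{x,x+1}_{f_\infty}$; the reverse direction is an induction on the number of terminals on the path, re-merging them into $v_{f_\infty}$ to obtain a multi-cycle circuit and verifying which terminal faces it separates. Your observation that a cycle avoiding $v_{f_\infty}$ separates no pair of terminal faces (all terminal faces meet at $v_{f_\infty}$) is fine and is implicitly used by the paper too, but it does not rescue the pruning step: to fix your proof you would need to replace the ``prune to a single cycle'' argument with a decomposition/concatenation argument of the above kind, allowing paths through intermediate terminals and circuits with several cycles, and then check the bookkeeping of which interval $S^*_{x,y}$ each piece separates.
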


\begin{linenomath}
\vspace{-.1in}
\begin{align*}
        V(H_1) & :=\big(V(H_2)\setminus \{v^{1,2}_{f_\infty},\ldots, v^{k-1,k}_{f_\infty},v^{k,1}_{f_\infty}\}\big)\cup\{v_{f_\infty}\} \\
        E(H_1) & :=\big(E(H_2) \setminus\{(v^{i,i+1}_{f\infty},v^*)\ :\ i\in[k]\}\big) \cup \{(v_{f_\infty},v)\ :\ (v^{i,i+1}_{f\infty},v^*)\in E(H_2)\} \\
        F(H_1) & :=\big(F(H_2)\setminus \{f_\infty+ f_{t_1} + \ldots + f_{t_k}\}\big)\cup \{f_\infty,f_{t_1},\ldots, f_{t_k}\} \\
        T(H_1) & :=\{f_{t_1},\ldots, f_{t_k}\}
\end{align*}
\vspace{-.1in}
\end{linenomath}

    \begin{proof}%

        \textbf{First direction - circuits to distances.}

        Let $\mathcal{C}$ be a minimum circuit that separates between the terminal faces $S^*_{ij}$ and ${\bar S^*}_{ij}$ in $G_1$ (assume \Wlog $i\leq j$). By Theorem \ref{THM: E_S disjoint union elementary cuts} that circuit is a union of a disjoint shortest $l$ cycles for some $l\geq 1$. We prove that this circuit corresponds to a simple path in $G_2$ between the terminals $v^{i-1,i}_{f_\infty}$ and $v^{j,j+1}_{f_\infty}$ with the same weight using an induction on $l$.

        Induction base: $l=1$. The circuit $\mathcal{C}$ contains exactly one simple cycle $C$ that separates between the terminal faces $S^*_{ij}$ and ${\bar S^*}_{ij}=S^*_{(j+1)(i-1)}$ in $G_1$. So the vertex $v_{f_\infty}$ appear in $C$ exactly once, i.e. $C=\langle v_{f_\infty},v_1,v_2,\ldots, v_x,v_{f_\infty}\rangle$. According to our construction, the graph $G_2$ contains the same vertices and edges as $G_1$, except of the vertex $v_{f_\infty}$ and all the edges incident to it. Therefore, $\langle v_1,v_2,\ldots, v_x\rangle$ is a simple path in $G_2$. Moreover, since \Wlog the vertex $v_1$ embedded between the terminal faces $f_{t_{i-1}},f_{t_i}$ and the vertex $v_x$ embedded between the terminal faces $f_{t_{j}},f_{t_{j+1}}$, we get that $(v^{i-1,i}_{f_\infty},v_1),\ (v_x,v^{j,j+1}_{f_\infty})\in E(G_2)$. Thus, $\langle v^{i-1,i}_{f_\infty},v_1,v_2,\ldots, v_x,v^{j,j+1}_{f_\infty}\rangle$ is a simple path in $G_2$ with the same weight as $C$.

        Induction step: assume that if $\mathcal{C}$ has $l'<l$ cycles, then it corresponds to a simple path in $G_2$ between the terminals $v^{i-1,i}_{f_\infty}$ and $v^{j,j+1}_{f_\infty}$ with the same weight, and prove it for $l$. There are two cases:

        \begin{itemize}

          \item If neither of the cycles in the circuit is nested. Then \Wlog all the cycles $C\in \mathcal{C}$ bound terminal faces of $S^*_{ij}$. Let $C\in \mathcal{C}$ be the cycle that bound the terminal faces $f_{t_i}, \ldots, f_{t_x}$ were $i<x<j$. Thus $\{C\}$ is a simple circuit that separates between the terminal faces $S^*_{ix}$ to $\bar S^*_{ix}=S^*_{(x+1)(i-1)}$, and $\mathcal{C}\setminus \{C\}$ is a simple circuit that separates between the terminal faces $S^*_{(x+1)j}$ to $\bar S^*_{(x+1)j}=S^*_{(j+1)x}$ in $G_1$. By the inductive assumption these two circuits correspond to two simple paths in $G_2$ with the same weights. The first path is between the two terminals $v^{i-1,i}_{f_\infty}$ and $v^{x,x+1}_{f_\infty}$, and the second is between the two terminals $v^{x,x+1}_{f_\infty}$ and $v^{j,j+1}_{f_\infty}$, which form a simple path from $v^{i-1,i}_{f_\infty}$ to $v^{j,j+1}_{f_\infty}$ in $G_2$ with the same weight as $\mathcal{C}$.

          \item There are nested cycles in the circuit. Let $C\in \mathcal{C}$ be a simple cycle that separates between $S^*_{xy}$ and $\bar S^*_{xy}$ in $G_2$, and contains at least one cycle of $\mathcal{C}\setminus \{C\}$. If $i< x\leq y<j$ or $x<i\leq j <y$ then $\mathcal{C}\setminus \{C\}$ separates between $S^*_{ij}$ to $\bar{S^*_{ij}}$ in contradiction to the minimality of $\mathcal{C}$. Therefore either $i= x\leq j<y$ or $i< x\leq j=y$. Assume \Wlog that the first case holds, i.e. $C$ is a minimum circuit that separates between $S^*_{iy}$ and $\bar S^*_{iy}$, and $\mathcal{C}\setminus \{C\}$ is a minimum circuit that separates between the terminal faces $S^*_{( y+1)j}$ to $\bar S^*_{( y+1)j}$ in $G_1$. By the inductive assumption these two circuits correspond to two simple paths in $G_2$ with the same weights. The first simple path is between the two terminals $v^{i-1,i}_{f_\infty}$ and $v^{ y, y+1}_{f_\infty}$, and the second simple path is between the two terminals $v^{ y, y+1}_{f_\infty}$ and $v^{j,j+1}_{f_\infty}$. Uniting these two paths forms a simple path between $v^{i-1,i}_{f_\infty}$ to $v^{j,j+1}_{f_\infty}$ in $G_2$ with the same weight as $\mathcal{C}$ as we required.
        \end{itemize}

        \textbf{Second direction - distances to cuts.} Let $P$ be a shortest path between the terminals $v^{i-1,i}_{f_\infty}$ and $v^{j,j+1}_{f_\infty}$ in $G_1$ (assume $i\leq j$), and let $l$ be the number of terminals in that path (including the two terminals in its endpoints). It is easy to verify that replacing each terminal $v^{x,x+1}_{f_\infty}$ in $P$ with the vertex $v_{f_\infty}$ transform it to a circuit in $G_1$ with $l-1$ disjoint simple cycles and with the same weight of $P$. We prove that this circuit separates between the terminal faces $S^*_{ij}=\{f_{t_i}, \ldots, f_{t_j}\}$ and $\bar S^*_{ij}$ in $G_1$ by an induction on $l$.

        Induction base: $l=2$, i.e. the only terminals on the path $P$ are those on the endpoints.
        Thus, all the inner vertices on that path are non terminal vertices, i.e. $P=\langle v^{i-1,i}_{f_\infty},v_1,v_2,\ldots, v_x,v^{j,j+1}_{f_\infty}\rangle$. Substitute the terminals $v^{i-1,i}_{f_\infty}$ and $v^{j,j+1}_{f_\infty}$ of $G_2$ with the vertex $v_{f_\infty}$ of $G_1$ and get $C=\langle v_{f_\infty},v_1,v_2,\ldots, v_x,v_{f_\infty}\rangle$. According to our construction, $\langle v_1,v_2,\ldots, v_x\rangle$ is a simple path in $G_1$, and $(v^{i-1,i}_{f_\infty},v_1),(v_l,v^{j,j+1}_{f_\infty})\in E(G_2)$ if and only if $(v_{f_\infty},v_1),(v_x,v_{f_\infty})\in E(G_1)$. Therefore, $C$ is a simple cycle in $G_1$ and the two edges that incident to the vertex $v_{f_\infty}$ are embedded between the terminal faces $f_{t_{i-1}}$ to $f_{t_i}$ and $f_{t_{j}}$ to $f_{t_{j+1}}$ in $G_1$. Thus, $C$ separates between $S^*_{ij}$ to $\bar S^*_{ij}$, and has the same weight as $P$.

        Induction step: assume that if $P$ has $l'<l$ inner terminals then it corresponds to a simple circuit with $l'$ cycles that separates between the terminal faces $S^*_{ij}$ and $\bar S^*_{ij}$ in $G_1$, and prove it for $l'=l$. Let $v^{x,x+1}_{f_\infty}$ be some inner terminal in the path $P$ that brake it into two simple sub-paths $P_1$ and $P_2$, i.e. $P_1$ is a simple path between $v^{i-1,i}_{f_\infty}$ to $v^{x,x+1}_{f_\infty}$ and $P_2$ is a simple path between $v^{x,x+1}_{f_\infty}$ to $v^{j,j+1}_{f_\infty}$ in $G_2$. Since both of these paths have less than $l$ terminals we can use the inductive assumption and get that $P_1$ corresponds to a circuit $\mathcal{C}_1$ in $G_1$ with the same weight that separates between the terminals $S^*_{ix}$and $\bar S^*_{ix}$ and $\bar S^*_{ix}$, and $P_2$ corresponds to a circuit $\mathcal{C}_2$ in $G_1$ with the same weight that separates between the terminals $S^*_{(x+1)j}$ and $\bar S^*_{(x+1)j}$. If $i\leq x \leq j$, then $S^*_{ij}=S^*_{ix}\cup S^*_{(x+1)j}$. And if $i\leq j<x$ (symmetric to the case were $x < i \leq j$), then $S^*_{(x+1)j}=S^*_{(j+1)x}$ and so $S^*_{ij}=S^*_{ix}\setminus S^*_{(j+1)x}$. In both cases we get that $\mathcal{C}_1 \cup \mathcal{C}_2$ is a simple circuit in $G_1$ with the same weight as $P$ that separates between the terminal faces $S^*_{ij}$ and ${\bar S^*}_{ij}$ in $G_1$, and the Lemma follows.
    \end{proof}

    \begin{lemma}\label{LMA: reduction H is alpha cut sparsifier of G}
      The elementary cuts $\T_e(G)$ and $\T_e(H)$ are equal, and $\mincut_G(S) \leq \mincut_H(S) \leq q \cdot \mincut_G(S)$ for every $S\in \T_e(G)$.
    \end{lemma}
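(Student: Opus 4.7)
The plan is to chain three weight-preserving correspondences: the planar cut/circuit duality between $G$ and $G_1 = G^*$ (Theorem~\ref{THM: duality cuts and circuits}), the circuit/path equivalence between $G_1$ and $G_2$ (Lemma~\ref{LMA: equivalence minimum circuits and shortest paths}), and the symmetric pair between $H$, $H_1$ and $H_2$ obtained from the reverse construction. The central identity I aim to establish is that, for every interval $S_{ij} = \{t_i, t_{i+1}, \ldots, t_j\}$ of terminals,
\[
  \mincut_G(S_{ij}) = d_{G_2}(v^{i-1,i}_{f_\infty}, v^{j,j+1}_{f_\infty})
  \quad\text{and}\quad
  \mincut_H(S_{ij}) = d_{H_2}(v^{i-1,i}_{f_\infty}, v^{j,j+1}_{f_\infty}).
\]
Both claims of the lemma will then follow from this identity together with the DAM guarantee on $H_2$.

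First I would prove the approximation inequality. Fix $S \in \T_e(G)$; since all terminals of $G$ lie on the outer face, Lemma~\ref{LMA: if C elementary then I(C)=1} forces $S = S_{ij}$ for some interval. Planar cut/circuit duality in $G/G_1$ combined with Lemma~\ref{LMA: equivalence minimum circuits and shortest paths} gives the first identity above, and the symmetric chain $H_2 \to H_1 \to H$ gives the second. Applying the DAM guarantee $d_{G_2}(t,t') \leq d_{H_2}(t,t') \leq q \cdot d_{G_2}(t,t')$ on the terminal pair $v^{i-1,i}_{f_\infty}, v^{j,j+1}_{f_\infty}$ immediately yields $\mincut_G(S) \leq \mincut_H(S) \leq q \cdot \mincut_G(S)$.

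Next I would establish $\T_e(G) = \T_e(H)$. Both networks have the same $k$ terminals on their outer face in the same cyclic order, so Lemma~\ref{LMA: if C elementary then I(C)=1} pins elementary cuts in either to the family of intervals $\{S_{ij}\}$. Inspecting the inductive proof of Lemma~\ref{LMA: equivalence minimum circuits and shortest paths} shows that the shortest circuit in $G_1$ separating $S^*_{ij}$ from $\bar S^*_{ij}$ is a single simple cycle (equivalently, $S_{ij} \in \T_e(G)$) exactly when the shortest $v^{i-1,i}_{f_\infty}$-to-$v^{j,j+1}_{f_\infty}$ path in $G_2$ contains no intermediate terminal of $T(G_2)$; the analogous statement holds in $H$ via $H_2$. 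I would conclude $\T_e(G) = \T_e(H)$ by arguing that this ``no-intermediate-terminal'' property for the shortest path transfers between $G_2$ and its DAM $H_2$, using that $H_2$ is a minor of $G_2$ and invoking the edge-weight perturbation that makes minimum cuts and shortest paths unique.

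The main obstacle is this last step: reconciling the combinatorial condition $|CC(E_S)| = 2$ in $G$ and $H$ with the minor-based DAM reduction, so that elementariness of a cut is faithfully preserved when moving from $G_2$ through $H_2$ back to $H$. The rest of the argument is essentially bookkeeping of weights along the two chains $G \to G_1 \to G_2$ and $H_2 \to H_1 \to H$; what takes care is showing that a shortest path in $H_2$ between two consecutive outer-face terminals cannot sneak through an intermediate terminal unless the corresponding path in $G_2$ already does, which I expect to handle by exploiting contractions in the minor together with uniqueness from perturbation.
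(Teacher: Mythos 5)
Your second part (the approximation inequalities) is essentially the paper's own argument: the identities $\mincut_G(S_{ij})=d_{G_2}(v^{i-1,i}_{f_\infty},v^{j,j+1}_{f_\infty})$ and $\mincut_H(S_{ij})=d_{H_2}(v^{i-1,i}_{f_\infty},v^{j,j+1}_{f_\infty})$ follow from cut/circuit duality together with Lemma~\ref{LMA: equivalence minimum circuits and shortest paths} applied along the chains $G\to G_1\to G_2$ and $H_2\to H_1\to H$, and the DAM guarantee on the terminal pair then gives $\mincut_G(S)\leq\mincut_H(S)\leq q\cdot\mincut_G(S)$. That portion of your proposal is fine and matches the paper.

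The genuine gap is in the first assertion, $\T_e(G)=\T_e(H)$. You correctly reduce it (as the paper does) to the claim that a pair of new terminals has a shortest path with no internal terminal in $G_2$ if and only if the same holds in $H_2$ --- the paper writes this as $D_e(G_2)=D_e(H_2)$, where $D_e$ denotes the terminal pairs whose shortest path has only Steiner internal vertices, and chains $\T_e(G)=\T_e(G_1)=D_e(G_2)$ and $D_e(H_2)=\T_e(H_1)=\T_e(H)$. But at exactly this point you stop: you label the transfer between $G_2$ and $H_2$ ``the main obstacle'' and only gesture at ``exploiting contractions in the minor together with uniqueness from perturbation,'' which is not an argument, and the mechanism you name is not the one that does the work. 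The minor relation by itself gives no correspondence between shortest paths of $G_2$ and $H_2$ (contractions can merge Steiner vertices into terminals, and edges of $H_2$ need not represent shortest paths of $G_2$ in any controlled way), and perturbation-uniqueness inside each graph says nothing about how the two graphs relate. The paper instead closes this step metrically: whether the (unique) shortest $t$--$t'$ path passes through another terminal $t''$ is equivalent, via the triangle inequality, to the equality $d(t,t')=d(t,t'')+d(t'',t')$, i.e., it is determined by the terminal distance matrix, which is exactly what the DAM controls; this is immediate when $q=1$ and is the intended justification of $D_e(G_2)=D_e(H_2)$ (for $q>1$ even this needs some care, but your proposal supplies no argument at all for the transfer). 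As written, the first half of the lemma is therefore not proved.
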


    \begin{proof}
        Let us call a shortest path between two terminals \emph{elementary} if all the internal vertices on the path are Steiner, and denote by $D_e$ all the terminal pairs that the shortest path between them is elementary. Moreover, recall that every elementary subset $S\in \T_e(G)$ is of the form $\{t_i, t_{i+1}, \ldots, t_j\}$, and denote it $S_{ij}$ and $\bar S_{ij}=S_{(j+1)(i-1)}$ for simplicity.

        By Lemma~\ref{LMA: equivalence minimum circuits and shortest paths} a shortest circuit that separates between $S^*_{ij}$ to $\bar S^*_{ij}$ in $G_1$ contains $l$ elementary cycles if and only if a shortest path between the terminals $v^{i-1,i}_{f_\infty}$ and $v^{j,j+1}_{f_\infty}$ in $G_2$ contains $l+1$ terminals (including the endpoints). Notice that Lemma~\ref{LMA: equivalence minimum circuits and shortest paths} holds also in the graphs $H_2$ and $H_1$, therefore $\T_e(G_1)=D_e(G_2)$ and $D_e(H_2)=\T_e(H_1)$. In addition, the equalities $\T_e(G)=\T_e(G_1)$ and $\T_e(H_1)=\T_e(H)$ holds by the duality between cuts and circuits, and $D_e(G_2)=D_e(H_2)$ because of the triangle inequality in the distance metric. Altogether we get that $\T_e(G)=\T_e(H)$.

        Again by the duality between cuts and circuits and by Lemma~\ref{LMA: equivalence minimum circuits and shortest paths} on the two pairs of graphs $G,G_2$ and $H_2, H_1$ we get that $\mincut_G(S_{ij})=d_{G_2}(v^{i-1,i}_{f_\infty},v^{j,j+1}_{f_\infty})$ and $\mincut_H(S_{ij})=d_{H_2}(v^{i-1,i}_{f_\infty},v^{j,j+1}_{f_\infty})$. Since $H_2$ is an $(q,s)$-DAM of $G_2$ we get that\\
        $\mincut_G(S_{ij}) \leq \mincut_H(S_{ij}) \leq q \cdot \mincut_G(S_{ij})$ and the lemma follows.
    \end{proof}

    \begin{lemma}\label{LMA: reduction H cut sparsifier of size beta}
        The size of $H$ is $O(s)$.
    \end{lemma}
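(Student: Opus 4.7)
The plan is to exploit Euler's formula, using the fact that $H$ is defined as the planar dual of $H_1$, so $|V(H)|=|F(H_1)|$ and it suffices to bound the number of faces of $H_1$ in terms of $s=|V(H_2)|$.

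First I would count vertices. Since $H_1$ is obtained from $H_2$ by identifying the $k$ terminals $v^{1,2}_{f_\infty},\ldots,v^{k,1}_{f_\infty}$ into the single vertex $v_{f_\infty}$, we immediately have $|V(H_1)|=|V(H_2)|-k+1\leq s-k+1=O(s)$. Merging vertices preserves the edge set, so $|E(H_1)|=|E(H_2)|$. Furthermore, $H_1$ remains connected because $H_2$ is connected (as a DAM of the connected graph $G_2$).

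Next I would bound $|E(H_2)|$. Without loss of generality $H_2$ can be taken to be a simple planar graph: as a minor of the planar graph $G_2$ it is planar, and among all $(q,s)$-DAMs one can replace a set of parallel edges by a single edge carrying the minimum of their weights, and suppress any degree-2 non-terminal vertex (replacing its two incident edges by a single edge whose weight is their sum). Neither operation changes the shortest-path distances between terminals, so the DAM property is preserved, and moreover $H_2$ remains a minor of $G_2$. A simple connected planar graph satisfies $|E|\leq 3|V|-6$, hence $|E(H_2)|=O(|V(H_2)|)=O(s)$, and therefore $|E(H_1)|=O(s)$. Applying Euler's formula to the connected planar (multi-)graph $H_1$ gives
\[
  |F(H_1)|=|E(H_1)|-|V(H_1)|+2=O(s),
\]
which yields $|V(H)|=|F(H_1)|=O(s)$, as claimed.

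The main obstacle is the WLOG simplification of $H_2$: one must check that the combination of removing parallel edges (keeping the minimum-weight one) and suppressing degree-2 non-terminal vertices leaves $H_2$ as a bona fide $(q,s)$-DAM of $G_2$, and that after these modifications the downstream construction of $H_1$ and its dual $H$ still produces a graph for which Lemma~\ref{LMA: reduction H is alpha cut sparsifier of G} applies. Both checks are straightforward from the distance-preservation of the simplifications and from the fact that the duality statements used in the reduction depend only on the shortest-path distances between the outer-face terminals of $H_2$, not on the particular planar embedding or on multiplicities of edges.
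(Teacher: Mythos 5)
Your proof is correct and follows essentially the same route as the paper: assume WLOG that $H_2$ is simple (keeping only shortest parallel edges), bound $|E(H_2)|\leq 3|V(H_2)|-6$, and use Euler's formula together with $|V(H)|=|F(H_1)|$ to conclude $O(s)$. The only differences are cosmetic: you apply Euler's formula to $H_1$ (using $|E(H_1)|=|E(H_2)|$ and $|V(H_1)|=|V(H_2)|-k+1$) rather than to $H_2$ and adding the $k$ extra faces, and the degree-2 suppression step is unnecessary though harmless.
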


    \begin{proof}
        Given that $H_2$ is an $(q,s)$-DAM, i.e. $|V(H_2)|=s$, we need to prove that $|V(H)|=O(|V(H_2)|)$. Note that by the reduction construction $|V(H)|=|F(H_1)|=|F(H_2)|+k-1$. Moreover, we can assume that $H_2$ is a simple planar graph (if it has parallel edges, we can keep the shortest one). Thus, $|E(H_2)|\leq 3|V(H_2)| +6$. Plug it in Euler's Formula to get $|F(H_2)|\leq 2|V(H_2)|+8$. Since $s\geq k$ we derive that $|V(H)|\leq 2s +8+k-1=O(s) $ and the lemma follows.
    \end{proof}

\begin{proof}[Proof of Theorem~\ref{THM: reduction from outer face cuts to DAM}]
        Given $H_2$ a $(q,s)$-DAM of $G_2$ and let $H$ be the graph that was constructed from $H_2$. By Lemma~\ref{LMA: reduction H cut sparsifier of size beta} and Lemma~\ref{LMA: reduction H is alpha cut sparsifier of G} the graph $H$ is a $(q,O(s))$-cut-sparsifier of $G$. Since $H_2$ is a minor of $G_2$, and minor is closed under planar duality, then $H$ is furthermore a minor of $G$ and the theorem follows.
    \end{proof}

\subsection{Proof of Theorem~\ref{THM: reduction from outer face DAM to outer face cuts}}
\label{app: Reduction DAM to CAM}

\paragraph{Construction of the Reduction.}
The idea is to first ``close'' the shortest paths between every two terminals into shortest cycles that separates between terminal faces, and then use the planar duality between cuts and cycles to get that every shortest cycle corresponds to a minimum terminal cut that in turn preserved by an $(q,s)$-cut-sparsifier. More formally, given a plane $k$-terminal network $G=(V,E,F,T,c)$ with all its terminals $T=\{t_1, \ldots, t_k\}$ on the outerface in a clockwise order. Firstly, construct a graph $G_1$ by adding to $G$ a new vertex $v_{f_\infty}$ and connects it to all its $k$ terminals $t_i$ using edges with 0 capacity. Note that $G_1$ has $k$ new faces $f_{1,2}, \ldots,f_{k-1,k},f_{k,1}$, where each $f_{i,i+1}$ was created by adding the two new edges $(v_{f_\infty}, t_i)$ and $(v_{f_\infty}, t_{i+1})$. These $k$ new faces will be the terminals of $G_1$. %

\begin{linenomath}
    \begin{align*}
      V(G_1) & := V\cup \{v_{f_\infty}\} \\
      E(G_1) & := E\cup \{(v_{f_\infty},t_i)\ :\ t_i\in T\} \\
      F(G_1) & := F\cup \{f_{1,2},\ldots, f_{k-1,k},f_{k,1}\}\\
      T(G_1) & := \{f_{1,2}\ ,\  \ldots\ ,\  f_{k-1,k}\ ,\  f_{k,1}\}%
    \end{align*}
\end{linenomath}
    Secondly, we denote by $G_2$ the dual graph of $G_1$, where its $k$ terminals are $T(G_2)=\{v_{i,i+1}\ :\ f_{i,i+1}\in T(G_1)\}$. Moreover, the new vertex $v_{f_\infty}$ in $G_1$ corresponds to the outerface $f_\infty$ of $G_2$, the $k$ new edges $(v_{f_\infty},t_i)$ we added to $G_1$ are the edges that lie on the outerface of $G_2$, and the vertices on the outerface of $G_2$ are the $k$ terminals $v_{i,i+1}$ in a clockwise order. See Figure \ref{FGR: Reduction Outer Mimicking To Outer DPM} from left to right for illustration, and see Appendix~\ref{app: Planar Duality} for basic notions of planar duality.

    Let $H_2$ be a $(q,s)$-cut-sparsifier and a minor of $G_2$. Since $H_2$ is a minor of $G_2$, then both are plane graphs with all their terminals on the outerface in the same clockwise order, and there is an edge with capacity 0 on the outerface that connects between every two adjacent terminals. Hence, we can use $H_2$ and the same reduction above (but in opposite order of operations) in order to construct an $(q,O(s))$-DAM $H$ of $G$ as follows. Firstly, let $H_1$ be the dual graph of $H_2$, where every minimum terminal cut in $H_2$ is equivalent to a shortest cycle that separates terminal faces. Notice that again each terminal face $f_{i,i+1}$ in $H_1$ contains the two edges $(v_{f_\infty}, t_i)$ and $(v_{f_\infty}, t_{i+1})$ with capacity 0 on their boundary. Secondly, we ``open'' each shortest cycle in $H'^*$ into a shortest path between terminals by removing the vertex $v_{f_\infty}$ and all its incidence edges, and denote this new graph by $H$. The terminals of $H$ are all the vertices $v\in V(H_1)$ such that $(v_{f_\infty},v)$ is an edge in $H_1$, which are equal to the original terminals of $G$. See Figure \ref{FGR: Reduction Outer DPM to Outer Cuts} from right to left for illustration.

\begin{linenomath}
        \begin{align*}
          V(H) & := V(H_1)\setminus \{v_{f_\infty}\} \\
          E(H) & := E(H_1)\setminus \{(v_{f_\infty},t_i)\ :\ t_i\in T\} \\
          F(H) & :=F(H_1)\setminus \{f_{1,2},\ldots, f_{k-1,k},f_{k,1}\}\\
          T(H) & := T(G)
        \end{align*}
\end{linenomath}

     \begin{figure}
        \centering
        \includegraphics[angle=0,width=0.9\textwidth]{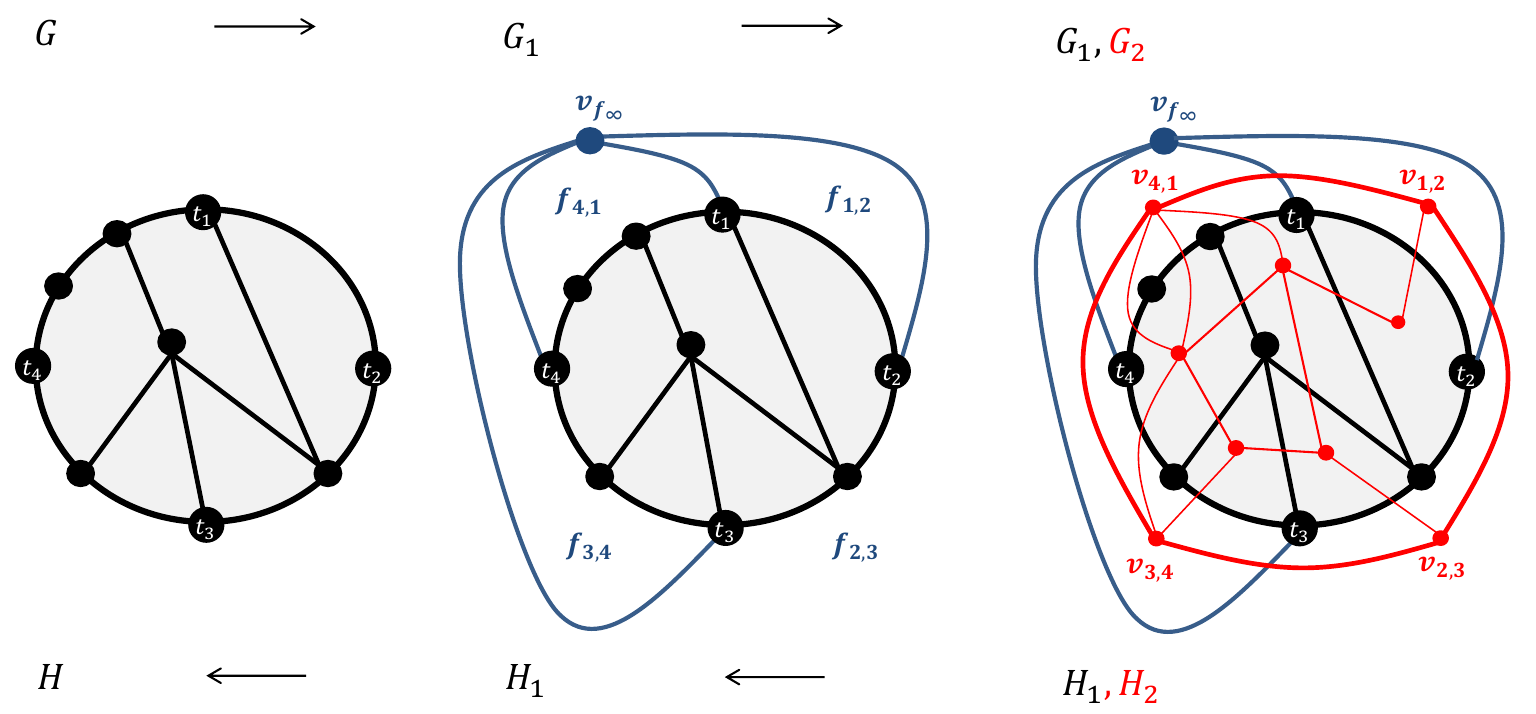}
        \caption{The first graph (in black) is the original graph $G$. The second is the graph $G_1$, where the additional vertex and edges and terminal faces colored in blue. And the third graph $G_2$ and its terminals are colored in red. The bold red edges are the dual of the blue edges, and both of them have capacity 0.}
        \label{FGR: Reduction Outer DPM to Outer Cuts}
        \mbox{}\hrule
     \end{figure}

\paragraph{Analysis of the Reduction.}

\begin{lemma}\label{LMA: reduction H DAM of size beta}
The size of $H$ is $O(s)$.
\end{lemma}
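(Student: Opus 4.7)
The plan is to bound $|V(H)|$ via planar duality and Euler's formula, in direct analogy with the proof of Lemma~\ref{LMA: reduction H cut sparsifier of size beta}. First I would unpack the construction: by definition $H_1$ is the planar dual of $H_2$, and $H$ is obtained from $H_1$ by deleting the single vertex $v_{f_\infty}$ together with its incident edges. Consequently,
\[
  |V(H)| = |V(H_1)| - 1 = |F(H_2)| - 1,
\]
so it suffices to bound $|F(H_2)|$ in terms of $|V(H_2)| = s$.

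Next I would reduce to the case where $H_2$ is a simple planar graph. Since $H_2$ is a cut sparsifier, any two parallel edges between the same pair of vertices can be replaced by a single edge whose capacity is the sum of the two capacities, without affecting $\mincut_{H_2}(S)$ for any $S \subset T$; in particular, this collapse does not break the sparsifier property. Thus we may assume $H_2$ has no parallel edges, and as a minor of the planar graph $G_2$ it is simple and planar. The standard bound $|E(H_2)| \leq 3|V(H_2)| - 6$ then applies.

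Combining this with Euler's formula $|V(H_2)| - |E(H_2)| + |F(H_2)| = 2$ yields $|F(H_2)| \leq 2|V(H_2)| + O(1)$, and therefore
\[
  |V(H)| = |F(H_2)| - 1 \leq 2s + O(1) = O(s),
\]
as required. There is no serious obstacle here; the only point requiring a brief justification is the legitimacy of collapsing parallel edges in $H_2$, which is immediate for cut sparsification since summing capacities preserves every cut value.
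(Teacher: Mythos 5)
Your proposal is correct and follows essentially the same route as the paper: pass to a simple $H_2$ by merging parallel edges (summing capacities, which preserves all cut values), then use Euler's formula together with a standard simple-planar bound to get $|F(H_2)|\le 2|V(H_2)|-4$, and conclude via $|V(H)|+1=|V(H_1)|=|F(H_2)|$. The only cosmetic difference is that you invoke $|E(H_2)|\le 3|V(H_2)|-6$ while the paper uses $|E(H_2)|\ge \tfrac{3}{2}|F(H_2)|$; both give the same face bound.
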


\begin{proof}
Given that $H_2$ is an $(q,s)$-cut-sparsifier, i.e. $|V(H_2)|=s$, we will prove that $|V(H)|=O(|V(H_2)|)$. We can assume that $H_2$ is a simple planar graph (if not, we can replace all the parallel edges between every two vertices by one edge where its capacity is the sum over all the capacities of these parallel edges), thus $|E(H_2)|\geq \frac{3}{2}|F(H_2)|$. Plug it in Euler's Formula to get $|F(H_2)|\leq 2|V(H_2)|-4=2s-4$. By the reduction construction $|V(H)|+1=|V(H_1)|=|F(H_2)|=O(s)$, and the lemma follows.
\end{proof}

\begin{lemma}\label{LMA: H is a minor of G}
The graph $H$ is a minor of $G$.
\end{lemma}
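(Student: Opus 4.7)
The plan is to establish that $H$ is a minor of $G$ in two stages: first, show that $H_1$ is a minor of $G_1$ via planar duality of minors; then ``detach'' $v_{f_\infty}$ from the witnessing branch sets to conclude.

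For the first stage, I would invoke the standard fact that in a connected plane graph, deleting a (non-bridge) edge corresponds under duality to contracting the dual edge, and contracting a (non-loop) edge corresponds to deleting the dual; hence planar duality sends minors to minors. Since $H_2$ is a minor of $G_2$ obtained by a sequence of edge deletions and contractions, $H_1 = H_2^*$ is obtained from $G_1 = G_2^*$ by the dual sequence, witnessing that $H_1$ is a minor of $G_1$. This yields a family of pairwise-disjoint, connected branch sets $\{B_v \subseteq V(G_1) : v \in V(H_1)\}$ with $v_{f_\infty} \in B_{v_{f_\infty}}$ and $t_i \in B_{t_i}$ for each terminal, and such that every edge of $H_1$ is witnessed by an edge of $G_1$ between the corresponding branch sets.

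For the second stage, I claim that $B_{v_{f_\infty}} = \{v_{f_\infty}\}$. By disjointness, $B_{v_{f_\infty}}$ contains no terminal $t_i$; but in $G_1$ the vertex $v_{f_\infty}$ is adjacent only to the terminals $\{t_1, \ldots, t_k\}$, so $v_{f_\infty}$ has no neighbor inside $B_{v_{f_\infty}}$, and connectedness forces $B_{v_{f_\infty}} = \{v_{f_\infty}\}$. Consequently, for every $v \in V(H) = V(H_1) \setminus \{v_{f_\infty}\}$, the branch set $B_v$ lies entirely in $V(G_1) \setminus \{v_{f_\infty}\} = V(G)$, and remains connected in $G$ (its induced subgraph in $G$ agrees with that in $G_1$, since $v_{f_\infty} \notin B_v$). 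Moreover, every edge of $H = H_1 \setminus v_{f_\infty}$ is witnessed by an edge of $G_1$ whose endpoints lie in $V(G)$, hence is an edge of $G$. Thus $\{B_v : v \in V(H)\}$ certifies that $H$ is a minor of $G$.

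The main subtlety lies in the first stage: one must ensure that the outer face of $H_2$ is inherited from the outer face of $G_2$, so that $v_{f_\infty}$ corresponds consistently in both $G_1$ and $H_1$, and that the 0-capacity outer-face edges of $G_2$ are preserved in the minor $H_2$. Otherwise the dual edges $(v_{f_\infty}, t_i)$ would get contracted in $G_1$, merging $v_{f_\infty}$ into a terminal and defeating the singleton-branch-set argument. Both conditions are consistent with the construction, which explicitly requires that $H_2$ retains the 0-capacity outer-face edges connecting consecutive terminals.
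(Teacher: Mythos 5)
Your proof is correct and takes essentially the same route as the paper: planar duality turns the deletion/contraction sequence witnessing $H_2\preceq G_2$ into one witnessing that $H_1$ is a minor of $G_1$, and then $v_{f_\infty}$ is removed from both sides to obtain $H$ and $G$. Your second stage---showing $B_{v_{f_\infty}}=\{v_{f_\infty}\}$ because the only neighbors of $v_{f_\infty}$ in $G_1$ are the terminals, which lie in other branch sets---actually makes precise the step the paper states only informally (``deleting the same vertex $v_{f_\infty}$ from both $G_1$ and $H_1$''), and the caveat you flag about the outer face and the $0$-capacity edges being retained in $H_2$ is exactly the assumption the paper builds into its construction.
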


\begin{proof}
Given that $H_2$ is a minor of $G_2$, and that minor is close under deletion and contraction of edges we get that $H_1$ is a minor of $G_1$. Now by deleting the same vertex $v_{f_\infty}$ together with all its incidence edges from both $G_1$ and $H_1$, we get the graphs $G$ and $H$ correspondingly. Therefore $H$ is a minor of $G$, and the lemma follows.
\end{proof}

\begin{lemma}\label{LMA: reduction H is alpha DAM of G}
The graph $H$ preserve all the distances between every two terminals by factor $q$, i.e. $d_G(t_i,t_j) \leq d_H(t_i,t_j) \leq q \cdot d_G(t_i,t_j)$ for every $t_i,t_j\in T$.
\end{lemma}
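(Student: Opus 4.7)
The plan is to mirror the proof of Lemma~\ref{LMA: reduction H is alpha cut sparsifier of G} by setting up, for each pair of terminals $t_i, t_j \in T$, a distinguished subset $S_{ij} \subset T(G_2)$ of consecutive outer-face terminals, namely $S_{ij} = \{v_{i,i+1}, \ldots, v_{j-1,j}\}$, and proving the two equalities $d_G(t_i,t_j) = \mincut_{G_2}(S_{ij})$ and $d_H(t_i,t_j) = \mincut_{H_2}(S_{ij})$. Once both are in place, the lemma is immediate from the cut-sparsifier guarantee $\mincut_{G_2}(S_{ij}) \leq \mincut_{H_2}(S_{ij}) \leq q\cdot \mincut_{G_2}(S_{ij})$.

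For the first equality, I would argue both inequalities via planar duality (Theorem~\ref{THM: duality cuts and circuits}). For $\geq$, take any shortest $t_i$-$t_j$ path $P$ in $G$ and close it into a cycle $C$ in $G_1$ by appending the two zero-capacity edges $(v_{f_\infty}, t_i)$ and $(v_{f_\infty}, t_j)$; the weight is unchanged, and $C$ separates exactly the terminal faces $\{f_{i,i+1},\ldots,f_{j-1,j}\}$ from the rest, so its dual is an $S_{ij}$-separating cutset in $G_2$ of the same weight. For $\leq$, take the minimum $S_{ij}$-separating cutset in $G_2$; because $S_{ij}$ is a contiguous block of outer-face terminals of $G_2$, it is elementary (by the structural argument in the proof of Theorem~\ref{LMA: at most k^2 elementary min cuts}), hence its dual is a \emph{simple} cycle $C^*$ in $G_1$ of the same weight. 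Any cycle separating the terminal faces of $G_1$ must pass through $v_{f_\infty}$, since every terminal face of $G_1$ has $v_{f_\infty}$ on its boundary; moreover, in order to separate precisely $\{f_{i,i+1},\ldots,f_{j-1,j}\}$ from the remaining terminal faces, the two incident edges of $C^*$ at $v_{f_\infty}$ must be exactly $(v_{f_\infty},t_i)$ and $(v_{f_\infty},t_j)$. Deleting these zero-cost edges leaves a simple $t_i$-$t_j$ path in $G$ whose weight is at least $d_G(t_i,t_j)$, establishing the reverse inequality.

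The same chain of reasoning, applied to the pair $(H,H_2)$ in place of $(G,G_2)$, yields $d_H(t_i,t_j) = \mincut_{H_2}(S_{ij})$: since $H_2$ is a minor of $G_2$ that retains the outer-face terminals of $G_2$ together with the zero-capacity outer-face edges between consecutive terminals, the dual $H_1 = H_2^*$ again has a single vertex $v_{f_\infty}$ joined to the terminals of $H$ by zero-capacity edges, and the earlier argument goes through. Composing the two equalities with the cut-sparsifier inequality for $H_2$ then yields the desired $d_G(t_i,t_j) \leq d_H(t_i,t_j) \leq q\cdot d_G(t_i,t_j)$.

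The step I expect to be most delicate is verifying that the minimum $S_{ij}$-separating cutset in $G_2$ is necessarily elementary, with its dual cycle passing through $v_{f_\infty}$ via the specific edges to $t_i$ and $t_j$; this is what forces the weight to coincide with a true $t_i$-$t_j$ path length in $G$ rather than with the length of a more complicated circuit. The argument leans on the fact that $S_{ij}$ is contiguous in the cyclic order on the outer face of $G_2$, together with the observation that every terminal face of $G_1$ is incident to $v_{f_\infty}$, analogous to the reasoning used for the opposite direction in Lemma~\ref{LMA: equivalence minimum circuits and shortest paths}.
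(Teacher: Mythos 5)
Your overall plan coincides with the paper's: prove the two equalities $d_G(t_i,t_j)=\mincut_{G_2}(S_{ij})$ and $d_H(t_i,t_j)=\mincut_{H_2}(S_{ij})$ for $S_{ij}=\{v_{i,i+1},\ldots,v_{j-1,j}\}$, and then invoke the cut-sparsifier guarantee of $H_2$. Where you differ is in how these equalities are obtained: the paper observes that this reduction is equivalent to the one in Theorem~\ref{THM: reduction from outer face cuts to DAM} and re-uses Lemma~\ref{LMA: equivalence minimum circuits and shortest paths}, whose induction handles minimum circuits consisting of several cycles; you instead argue directly that the minimum $S_{ij}$-separating cutset in $G_2$ is elementary, so its dual is a single simple cycle in $G_1$ that must pass through $v_{f_\infty}$ via the edges $(v_{f_\infty},t_i)$ and $(v_{f_\infty},t_j)$, and opening it gives a $t_i$--$t_j$ path in $G$ of the same weight. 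The parts of your argument about the cycle having to pass through $v_{f_\infty}$, and through exactly those two zero-capacity edges, are correct, and this direct route does avoid the circuit-decomposition induction.

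The gap is the justification of elementarity, precisely the step you flagged as delicate. The proof of Theorem~\ref{LMA: at most k^2 elementary min cuts} (via Lemma~\ref{LMA: if C elementary then I(C)=1}) gives the implication in the opposite direction: an elementary cutset has contiguous sides. It does not say that a contiguous terminal set has an elementary minimum cutset, and in a general planar graph with terminals on the outer face that converse is false: in a star with center $c$, four unit-capacity edges and the four leaves as terminals, the minimum cut separating two cyclically adjacent terminals from the other two isolates two leaves and produces three components. What saves the claim in $G_2$ is structure you did not invoke at this point: the face of $G_2$ dual to $v_{f_\infty}$ is bounded by the $k$ zero-capacity edges, which form a cycle through the terminals $v_{1,2},\ldots,v_{k,1}$ in cyclic order. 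Hence if $(W,V\setminus W)$ realizes $\mincut_{G_2}(S_{ij})$ with $W\cap T(G_2)=S_{ij}$, the terminals of $S_{ij}$ are pairwise connected inside $G_2[W]$ along that cycle (those edges have both endpoints in $W$, so they are not in the cutset), and likewise $\bar S_{ij}$ is connected on the other side; since by minimality every component of $G_2$ minus the cutset contains a terminal, there are exactly two components, i.e., the cutset is elementary. With this replacement argument your proof goes through. A further small point (shared with the paper's own write-up): when repeating the argument for $(H,H_2)$ you assume $H_2$ still contains the zero-capacity outer edges between consecutive terminals; an arbitrary minor sparsifier may have deleted some of them, so one should note that they can be reinstated without changing any terminal cut value (they have capacity $0$) and without violating planarity or the minor relation.
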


\begin{proof}
Notice that connecting all the terminals to a new vertex using edges with capacity 0 is equivalent to uniting all the terminals into one vertex, and also splitting the vertex $v_{f_\infty}$ to $k$ new terminals is equivalent to disconnecting all the terminals by deleting that vertex. Thus our reduction is equivalent to the reduction of Theorem \ref{THM: reduction from outer face cuts to DAM}. In particular, Lemma \ref{LMA: equivalence minimum circuits and shortest paths} holds on the graphs $G,G_1$ and on the graphs $H, H_1$ correspondingly, i.e. every shortest path between two terminals $t_i$ and $t_j$ in $G$ (or $H$) corresponds to a minimum circuit in $G_1$ (or $H_1$) that separates between the terminal faces $\{f_{i,i+1}, \ldots, f_{j-1,j}\}$ to $\{f_{j,j+1}, \ldots, f_{i-1,i}\}$ and vise versa.

Let $S_{(i,i+1),(j-1,j)}=\{v_{i,i+1},\dots, v_{j-1,j}\}$ be a set of terminals in $G_2$, where every terminal $v_{l,l+1}$ corresponds to the terminal face $f_{l,l+1}$ in $G_1$. By the duality between cuts and circuits we get that $d_{G}(t_i,t_j)=\mincut_{G_2}(S_{(i,i+1),(j-1,j)})$ and $d_{H}(t_i,t_j)=\mincut_{H_2}(S_{(i,i+1),(j-1,j)})$. Since $H_2$ is an $(q,s)$-cut-sparsifier of $G_2$ we derive the inequalities
$d_G(t_i,t_j) \leq d_H(t_i,t_j) \leq q \cdot d_G(t_i,t_j)$
and the lemma follows.
\end{proof}

\begin{proof}[Proof of Theorem \ref{THM: reduction from outer face DAM to outer face cuts}]
By Lemma~\ref{LMA: reduction H DAM of size beta}, Lemma \ref{LMA: H is a minor of G} and Lemma \ref{LMA: reduction H is alpha DAM of G} the $k$-terminal network $H$ is an $(q,O(s))$-DAM of $G$ and the theorem follows.
\end{proof}

\subsection{Duality Applications}\label{SEC: Reduction Application}

By Theorem~\ref{THM: reduction from outer face cuts to DAM}
and Theorem~\ref{THM: reduction from outer face DAM to outer face cuts},
every $k$-terminal network $G$ with $\gamma=1$ admits a $(q,s)$-DAM
if and only if it admits a minor $(q,O(s))$-cut-sparsifier.
Hence, every new upper or lower bound results, especially for $q>1$, on DAM also holds for the minor cut-sparsifier problem and vise versa.
For example, the upper bound of $(1+\epsilon, (k/ \epsilon)^2)$-DAM for planar networks \cite{CGH16} yields the following new theorem.

\begin{theorem} \label{THM: planar CAM epsilon-approx}
Every planar network $G$ with $\gamma=1$ admits a minor $(1 + \epsilon, \tilde{O}((k/\epsilon)^2)$-cut-sparsifier for every $\epsilon>0$.
\end{theorem}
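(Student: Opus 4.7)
The plan is to derive this result directly by combining the cut-to-distance reduction of Theorem~\ref{THM: reduction from outer face cuts to DAM} with the known DAM upper bound of Cheung, Goranci and Henzinger~\cite{CGH16}, which asserts that every planar $k$-terminal network admits a $(1+\eps, \tilde O((k/\eps)^2))$-DAM. Since $\gamma(G)=1$, we may assume that all terminals of $G$ lie on a common face, taken to be the outer face, which is exactly the setting required by Theorem~\ref{THM: reduction from outer face cuts to DAM}.

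First, I would apply the reduction of Theorem~\ref{THM: reduction from outer face cuts to DAM} to the input network $G$, obtaining a planar $k$-terminal network $G'$ whose $k$ terminals still lie on its outer face, and such that any $(q,s)$-DAM of $G'$ yields a minor $(q,O(s))$-cut-sparsifier of $G$. Next, I would invoke the planar DAM construction of~\cite{CGH16} on $G'$, with the quality parameter set to $1+\eps$, to obtain a $(1+\eps, \tilde O((k/\eps)^2))$-DAM $H'$ of $G'$. Finally, pulling $H'$ back through the reduction yields a minor $(1+\eps, \tilde O((k/\eps)^2))$-cut-sparsifier $H$ of $G$, as claimed.

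The main thing to verify, which is not quite a routine calculation, is that the construction of $G'$ in Theorem~\ref{THM: reduction from outer face cuts to DAM} does preserve planarity and keep all the terminals on the outer face (so that the hypothesis for applying~\cite{CGH16} is met), and that the size blow-up through both reductions is only a constant factor; both of these facts are in place in the statement of Theorem~\ref{THM: reduction from outer face cuts to DAM} and in the accompanying Lemmas, so the combination is immediate. One subtle point I would double-check is whether the DAM bound of~\cite{CGH16} requires the original graph to be planar with no boundary restriction or with all terminals on one face — in either case the graph $G'$ produced by the reduction qualifies, so the argument goes through. No further work is needed beyond this composition, and the theorem follows.
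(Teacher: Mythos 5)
Your proposal is correct and matches the paper's own proof: the paper derives this theorem precisely by composing the reduction of Theorem~\ref{THM: reduction from outer face cuts to DAM} with the $(1+\eps,\tilde{O}((k/\eps)^2))$-DAM of \cite{CGH16} applied to the auxiliary network $G'$, exactly as you describe. Your side remarks (planarity and terminals on the outer face of $G'$, constant-factor size blow-up) are indeed the points guaranteed by the statement of the reduction and its lemmas, so no further argument is needed.
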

As already mentioned, by recent independent work \cite{GHP17}
these networks also admit a $(1,O(k^2))$-sparsifier
that is planar but \emph{not} a minor of $G$.

In addition, we can apply known upper and lower bounds for $(1,s)$-DAM
to the minor mimicking network problem (i.e., a cut-sparsifier of quality 1).
In particular, the known $(1,k^4)$-DAM \cite{KNZ14} yields
an alternative proof for Corollary~\ref{THM: mimick network size k^4},
and the known lower bound of $(1,\Omega(k^2))$-DAM
(which is shown on grid graphs) \cite{KNZ14}
yields an alternative proof for a lower bound shown in \cite{KR13SODA}.

\subsubsection*{Acknowledgments}
We thank anonymous referees for useful suggestions that improved the presentation.

{%
\bibliographystyle{alphaurlinit}
\bibliography{inbal}

\newcommand{\etalchar}[1]{$^{#1}$}
\begin{thebibliography}{EGK{\etalchar{+}}14}

\bibitem[AGK14]{AGK14}
A.~Andoni, A.~Gupta, and R.~Krauthgamer.
\newblock Towards (1+$\epsilon$)-approximate flow sparsifiers.
\newblock In {\em Proceedings of the Twenty-Fifth Annual {ACM-SIAM} Symposium
  on Discrete Algorithms, {SODA}}, pages 279--293, 2014.
\newblock \href {http://dx.doi.org/10.1137/1.9781611973402.20}
  {\path{doi:10.1137/1.9781611973402.20}}.

\bibitem[Ben09]{Bentz09}
C.~Bentz.
\newblock A simple algorithm for multicuts in planar graphs with outer
  terminals.
\newblock {\em Discrete Appl. Math.}, 157(8):1959--1964, 2009.
\newblock \href {http://dx.doi.org/10.1016/j.dam.2008.11.010}
  {\path{doi:10.1016/j.dam.2008.11.010}}.

\bibitem[BG08]{BG08}
A.~Basu and A.~Gupta.
\newblock Steiner point removal in graph metrics.
\newblock Unpublished Manuscript, available from
  \url{http://www.math.ucdavis.edu/~abasu/papers/SPR.pdf}, 2008.

\bibitem[BK15]{BK15}
A.~A. Benczúr and D.~R. Karger.
\newblock Randomized approximation schemes for cuts and flows in capacitated
  graphs.
\newblock {\em SIAM Journal on Computing}, 44(2):290--319, 2015.
\newblock \href {http://dx.doi.org/10.1137/070705970}
  {\path{doi:10.1137/070705970}}.

\bibitem[BM88]{BM88}
D.~Bienstock and C.~L. Monma.
\newblock On the complexity of covering vertices by faces in a planar graph.
\newblock {\em SIAM J. Comput.}, 17(1):53--76, February 1988.
\newblock \href {http://dx.doi.org/10.1137/0217004}
  {\path{doi:10.1137/0217004}}.

\bibitem[CGH16]{CGH16}
Y.~K. Cheung, G.~Goranci, and M.~Henzinger.
\newblock Graph minors for preserving terminal distances approximately - lower
  and upper bounds.
\newblock In {\em 43rd International Colloquium on Automata, Languages, and
  Programming, {ICALP}}, pages 131:1--131:14, 2016.
\newblock \href {http://dx.doi.org/10.4230/LIPIcs.ICALP.2016.131}
  {\path{doi:10.4230/LIPIcs.ICALP.2016.131}}.

\bibitem[Che18]{Cheung18}
Y.~K. Cheung.
\newblock Steiner point removal - distant terminals don't (really) bother.
\newblock In {\em Proceedings of the Twenty-Ninth Annual {ACM-SIAM} Symposium
  on Discrete Algorithms, {SODA} 2018}, pages 1353--1360, 2018.
\newblock \href {http://dx.doi.org/10.1137/1.9781611975031.89}
  {\path{doi:10.1137/1.9781611975031.89}}.

\bibitem[Chu12]{Chuzhoy12}
J.~Chuzhoy.
\newblock On vertex sparsifiers with steiner nodes.
\newblock In {\em Proceedings of the 44th Symposium on Theory of Computing
  Conference, {STOC} 2012}, pages 673--688, 2012.
\newblock \href {http://dx.doi.org/10.1145/2213977.2214039}
  {\path{doi:10.1145/2213977.2214039}}.

\bibitem[CLLM10]{CLLM10}
M.~Charikar, T.~Leighton, S.~Li, and A.~Moitra.
\newblock Vertex sparsifiers and abstract rounding algorithms.
\newblock In {\em 51st Annual Symposium on Foundations of Computer Science},
  pages 265--274. IEEE Computer Society, 2010.
\newblock \href {http://dx.doi.org/10.1109/FOCS.2010.32}
  {\path{doi:10.1109/FOCS.2010.32}}.

\bibitem[CSW13]{CSW13}
C.~Chekuri, F.~B. Shepherd, and C.~Weibel.
\newblock Flow-cut gaps for integer and fractional multiflows.
\newblock {\em Journal of Combinatorial Theory, Series B}, 103(2):248 -- 273,
  2013.
\newblock \href {http://dx.doi.org/10.1016/j.jctb.2012.11.002}
  {\path{doi:10.1016/j.jctb.2012.11.002}}.

\bibitem[CSWZ00]{CSWZ00}
S.~Chaudhuri, K.~V. Subrahmanyam, F.~Wagner, and C.~D. Zaroliagis.
\newblock Computing mimicking networks.
\newblock {\em Algorithmica}, 26:31--49, 2000.
\newblock \href {http://dx.doi.org/10.1007/s004539910003}
  {\path{doi:10.1007/s004539910003}}.

\bibitem[CW04]{ChenWu04}
D.~Z. Chen and X.~Wu.
\newblock Efficient algorithms for $k$-terminal cuts on planar graphs.
\newblock {\em Algorithmica}, 38(2):299--316, Feb 2004.
\newblock \href {http://dx.doi.org/10.1007/s00453-003-1061-2}
  {\path{doi:10.1007/s00453-003-1061-2}}.

\bibitem[CX00]{CX00}
D.~Z. Chen and J.~Xu.
\newblock Shortest path queries in planar graphs.
\newblock In {\em 32nd Annual ACM Symposium on Theory of Computing}, STOC '00,
  pages 469--478. ACM, 2000.
\newblock \href {http://dx.doi.org/10.1145/335305.335359}
  {\path{doi:10.1145/335305.335359}}.

\bibitem[CXKR06]{CXKR06}
H.~T. Chan, D.~Xia, G.~Konjevod, and A.~W. Richa.
\newblock A tight lower bound for the steiner point removal problem on trees.
\newblock In {\em Approximation, Randomization, and Combinatorial Optimization.
  Algorithms and Techniques, 9th International Workshop on Approximation
  Algorithms for Combinatorial Optimization Problems, {APPROX} 2006 and 10th
  International Workshop on Randomization and Computation, {RANDOM} 2006},
  pages 70--81, 2006.
\newblock \href {http://dx.doi.org/10.1007/11830924_9}
  {\path{doi:10.1007/11830924_9}}.

\bibitem[EGK{\etalchar{+}}14]{EGKRTT14}
M.~Englert, A.~Gupta, R.~Krauthgamer, H.~R{\"a}cke, I.~Talgam-Cohen, and
  K.~Talwar.
\newblock Vertex sparsifiers: New results from old techniques.
\newblock {\em SIAM Journal on Computing}, 43(4):1239--1262, 2014.
\newblock \href {http://dx.doi.org/10.1137/130908440}
  {\path{doi:10.1137/130908440}}.

\bibitem[Fil18]{Filtser18}
A.~Filtser.
\newblock Steiner point removal with distortion \emph{O}(log \emph{k}).
\newblock In {\em Proceedings of the Twenty-Ninth Annual {ACM-SIAM} Symposium
  on Discrete Algorithms, {SODA} 2018}, pages 1361--1373, 2018.
\newblock \href {http://dx.doi.org/10.1137/1.9781611975031.90}
  {\path{doi:10.1137/1.9781611975031.90}}.

\bibitem[FKT19]{FKT19}
A.~Filtser, R.~Krauthgamer, and O.~Trabelsi.
\newblock Relaxed {V}oronoi: a simple framework for terminal-clustering
  problems.
\newblock In {\em SOSA 2019}, 2019.
\newblock To appear.
\newblock \href {http://arxiv.org/abs/1809.00942} {\path{arXiv:1809.00942}}.

\bibitem[Fre91]{Frederickson91}
G.~N. Frederickson.
\newblock Planar graph decomposition and all pairs shortest paths.
\newblock {\em J. ACM}, 38(1):162--204, January 1991.
\newblock \href {http://dx.doi.org/10.1145/102782.102788}
  {\path{doi:10.1145/102782.102788}}.

\bibitem[GHP17]{GHP17}
G.~Goranci, M.~Henzinger, and P.~Peng.
\newblock Improved guarantees for vertex sparsification in planar graphs.
\newblock In {\em 25th Annual European Symposium on Algorithms, {ESA} 2017},
  volume~87 of {\em LIPIcs}, pages 44:1--44:14, 2017.
\newblock \href {http://dx.doi.org/10.4230/LIPIcs.ESA.2017.44}
  {\path{doi:10.4230/LIPIcs.ESA.2017.44}}.

\bibitem[GR16]{GR16}
G.~Goranci and H.~R{\"{a}}cke.
\newblock Vertex sparsification in trees.
\newblock In {\em Approximation and Online Algorithms - 14th International
  Workshop, {WAOA}}, pages 103--115, 2016.
\newblock \href {http://dx.doi.org/10.1007/978-3-319-51741-4_9}
  {\path{doi:10.1007/978-3-319-51741-4_9}}.

\bibitem[Gup01]{Gupta01}
A.~Gupta.
\newblock Steiner points in tree metrics don't (really) help.
\newblock In {\em Proceedings of the Twelfth Annual Symposium on Discrete
  Algorithms}, pages 220--227, 2001.
\newblock Available from:
  \url{http://dl.acm.org/citation.cfm?id=365411.365448}.

\bibitem[HKNR98]{HKNR98}
T.~Hagerup, J.~Katajainen, N.~Nishimura, and P.~Ragde.
\newblock Characterizing multiterminal flow networks and computing flows in
  networks of small treewidth.
\newblock {\em J. Comput. Syst. Sci.}, 57(3):366--375, 1998.
\newblock \href {http://dx.doi.org/10.1006/jcss.1998.1592}
  {\path{doi:10.1006/jcss.1998.1592}}.

\bibitem[KKN15]{KKN15}
L.~Kamma, R.~Krauthgamer, and H.~L. Nguyen.
\newblock Cutting corners cheaply, or how to remove {S}teiner points.
\newblock {\em SIAM Journal on Computing}, 44(4):975--995, 2015.
\newblock \href {http://dx.doi.org/10.1137/140951382}
  {\path{doi:10.1137/140951382}}.

\bibitem[KNZ14]{KNZ14}
R.~Krauthgamer, H.~L. Nguyen, and T.~Zondiner.
\newblock Preserving terminal distances using minors.
\newblock {\em {SIAM} J. Discrete Math.}, 28(1):127--141, 2014.
\newblock \href {http://dx.doi.org/10.1137/120888843}
  {\path{doi:10.1137/120888843}}.

\bibitem[KPZ17]{KPZ17}
N.~Karpov, M.~Pilipczuk, and A.~Zych{-}Pawlewicz.
\newblock An exponential lower bound for cut sparsifiers in planar graphs.
\newblock In {\em 12th International Symposium on Parameterized and Exact
  Computation, {IPEC} 2017}, pages 24:1--24:11, 2017.
\newblock \href {http://dx.doi.org/10.4230/LIPIcs.IPEC.2017.24}
  {\path{doi:10.4230/LIPIcs.IPEC.2017.24}}.

\bibitem[KR13]{KR13SODA}
R.~Krauthgamer and I.~Rika.
\newblock Mimicking networks and succinct representations of terminal cuts.
\newblock In {\em Proceedings of the Twenty-Fourth Annual {ACM-SIAM} Symposium
  on Discrete Algorithms, {SODA} 2013}, pages 1789--1799, 2013.
\newblock \href {http://dx.doi.org/10.1137/1.9781611973105.128}
  {\path{doi:10.1137/1.9781611973105.128}}.

\bibitem[KR14]{KR14}
A.~Khan and P.~Raghavendra.
\newblock On mimicking networks representing minimum terminal cuts.
\newblock {\em Inf. Process. Lett.}, 114(7):365--371, 2014.
\newblock \href {http://dx.doi.org/10.1016/j.ipl.2014.02.011}
  {\path{doi:10.1016/j.ipl.2014.02.011}}.

\bibitem[LM10]{LM10}
F.~T. Leighton and A.~Moitra.
\newblock Extensions and limits to vertex sparsification.
\newblock In {\em Proceedings of the 42nd {ACM} Symposium on Theory of
  Computing, {STOC} 2010}, pages 47--56, 2010.
\newblock \href {http://dx.doi.org/10.1145/1806689.1806698}
  {\path{doi:10.1145/1806689.1806698}}.

\bibitem[LS09]{LS09}
J.~R. Lee and A.~Sidiropoulos.
\newblock On the geometry of graphs with a forbidden minor.
\newblock In {\em 41st annual ACM symposium on Theory of computing}, pages
  245--254, 2009.
\newblock \href {http://dx.doi.org/10.1145/1536414.1536450}
  {\path{doi:10.1145/1536414.1536450}}.

\bibitem[MM16]{MM16}
K.~Makarychev and Y.~Makarychev.
\newblock Metric extension operators, vertex sparsifiers and {L}ipschitz
  extendability.
\newblock {\em Israel Journal of Mathematics}, 212(2):913--959, 2016.
\newblock \href {http://dx.doi.org/10.1007/s11856-016-1315-8}
  {\path{doi:10.1007/s11856-016-1315-8}}.

\bibitem[MNS85]{MNS86}
K.~Matsumoto, T.~Nishizeki, and N.~Saito.
\newblock An efficient algorithm for finding multicommodity flows in planar
  networks.
\newblock {\em SIAM Journal on Computing}, 14(2):289--302, 1985.
\newblock \href {http://dx.doi.org/10.1137/0214023}
  {\path{doi:10.1137/0214023}}.

\bibitem[Moi09]{Moitra09}
A.~Moitra.
\newblock Approximation algorithms for multicommodity-type problems with
  guarantees independent of the graph size.
\newblock In {\em 50th Annual {IEEE} Symposium on Foundations of Computer
  Science, {FOCS} 2009}, pages 3--12, 2009.
\newblock \href {http://dx.doi.org/10.1109/FOCS.2009.28}
  {\path{doi:10.1109/FOCS.2009.28}}.

\bibitem[OS81]{OS81}
H.~Okamura and P.~Seymour.
\newblock Multicommodity flows in planar graphs.
\newblock {\em Journal of Combinatorial Theory, Series B}, 31(1):75 -- 81,
  1981.
\newblock \href {http://dx.doi.org/10.1016/S0095-8956(81)80012-3}
  {\path{doi:10.1016/S0095-8956(81)80012-3}}.

\bibitem[PU89]{PU89}
D.~Peleg and J.~D. Ullman.
\newblock An optimal synchronizer for the hypercube.
\newblock {\em SIAM J. Comput.}, 18:740--747, 1989.
\newblock \href {http://dx.doi.org/10.1137/0218050}
  {\path{doi:10.1137/0218050}}.

\bibitem[R{\"a}c08]{Raecke08}
H.~R{\"a}cke.
\newblock Optimal hierarchical decompositions for congestion minimization in
  networks.
\newblock In {\em 40th Annual ACM Symposium on Theory of Computing}, pages
  255--264. ACM, 2008.
\newblock \href {http://dx.doi.org/10.1145/1374376.1374415}
  {\path{doi:10.1145/1374376.1374415}}.

\bibitem[ST11]{ST11}
D.~A. Spielman and S.-H. Teng.
\newblock Spectral sparsification of graphs.
\newblock {\em SIAM J. Comput.}, 40(4):981--1025, 2011.
\newblock \href {http://dx.doi.org/10.1137/08074489X}
  {\path{doi:10.1137/08074489X}}.

\end{thebibliography}
}

\appendix

\section{Planar Duality}\label{app: Planar Duality}

Using planar duality we bound the size of mimicking networks for planar graphs (Theorem~\ref{THM: planar mimick network size alpha|U|^2}), and we further use it to prove the duality between cuts in distances (Theorem~\ref{THM: reduction from outer face cuts to DAM} and Theorem~\ref{THM: reduction from outer face DAM to outer face cuts}) Recall that every planar graph $G$ has a dual graph $G^*$,
    whose vertices correspond to the faces of $G$,
    and whose faces correspond to the vertices of $G$,
    i.e., $V(G^*)=\{v^*_f: f\in F(G)\}$ and $F(G^*)=\{f^*_v : v\in V(G)\}$.
    Thus the terminals $T=\{t_1,\ldots,t_k\}$ of $G$ corresponds to the terminal faces $T(G_1)=\{f_{t_{1}},\ldots,f_{t_{k}}\}$ in $G^*$, which for the sake of simplicity we may refer them as terminals as well. Every edge $e=(v,u)\in E(G)$ with capacity $c(e)$ that lies on the boundary of two faces $f_1,f_2\in F(G)$ has a dual edge $e^*=(v^*_{f_1},v^*_{f_2})\in E(G^*)$ with
    the same capacity $c(e^*)=c(e)$ that lies on the boundary of the faces $f^*_v$ and $f^*_u$.
    For every subset of edges $M\subset E(G)$, let $M^*:=\{e^*: e\in M\}$ denote the subset
    of the corresponding dual edges in $G^*$.

    The following theorem describes the duality between two different kinds
    of edge sets -- minimum cuts and minimum circuits -- in a plane multi-graph.
    It is a straightforward generalization of the case of $st$-cuts
    (whose dual are cycles) to three or more terminals.

    A \emph{circuit} is a collection of cycles (not necessarily disjoint) $\mathcal{C}=\{C_1,\ldots,C_l\}$. Let $\mathcal{E(C)}=\cup_{i=1}^l C_i$ be the set of edges that participate in one or more cycles in the collection
    (note it is not a multiset, so we discard multiplicities).
    The capacity of a circuit $\mathcal{C}$ is defined as $\sum_{e\in \mathcal{E(C)}}c(e)$.

    \begin{theorem}[Duality between cutsets and circuits]
    \label{THM: duality cuts and circuits}
    Let $G$ be a connected plane multi-graph, let $G^*$ be its dual graph,
    and fix a subset of the vertices $W\subseteq V(G)$.
    Then, $M\subset E(G)$ is a cutset in $G$ that has minimum capacity among those
    separating $W$ from $V(G)\setminus W$
    if and only if the dual set of edges $M^*\subseteq E(G^*)$
    is actually $\mathcal E(\mathcal C)$ for a circuit $\mathcal C$ in $G^*$
    that has minimum capacity among those separating the corresponding faces
    $\{f^*_v: v\in W\}$ from $\{f^*_v: v\in V(G) \setminus W\}$.
    \end{theorem}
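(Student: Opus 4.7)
The plan is to invoke the classical duality between the cut space of a plane multi-graph $G$ and the cycle space of its dual $G^*$, and then translate the ``separating'' condition topologically via the Jordan curve theorem. Since edges are in bijection with their duals and capacities are preserved ($c(e^*)=c(e)$), it suffices to exhibit the edge-set correspondence and then transfer minimality across it.

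For the forward direction, let $M$ be the minimum cutset in $G$ separating $W$ from $V(G)\setminus W$, and let $CC_1,\ldots,CC_r$ be the connected components of $G\setminus M$. By the minimality of $M$, each $CC_i$ lies entirely in $W$ or entirely in $V(G)\setminus W$, and each boundary $\delta(CC_i)$ is a minimal edge cut (a bond), because removing any single edge would merge $CC_i$ with one of its neighbours. By the classical bond-cycle duality in a plane graph, $\delta(CC_i)^*$ is then a simple cycle in $G^*$, which by the Jordan curve theorem encloses exactly the dual faces $\{f^*_v : v\in CC_i\}$. Setting $\mathcal{C} := \{\delta(CC_1)^*,\ldots,\delta(CC_r)^*\}$, the union-as-a-set is $\mathcal{E}(\mathcal{C}) = M^*$ (every edge of $M$ lies on exactly two such boundaries, so it appears in exactly two cycles), and by construction $\mathcal{C}$ separates $\{f^*_v : v\in W\}$ from $\{f^*_v : v\in V(G)\setminus W\}$.

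For the minimality of $\mathcal{C}$, I would argue by contradiction. Suppose some separating circuit $\mathcal{C}'$ in $G^*$ satisfies $c(\mathcal{E}(\mathcal{C}')) < c(M)$. Then $\mathcal{E}(\mathcal{C}')^* \subseteq E(G)$ separates $W$ from $V(G)\setminus W$ in $G$: any path in $G\setminus \mathcal{E}(\mathcal{C}')^*$ from $v\in W$ to $u\in V(G)\setminus W$ would correspond to a sequence of adjacent dual faces, starting at $f^*_v$ and ending at $f^*_u$, that never crosses an edge of $\mathcal{E}(\mathcal{C}')$, contradicting that $\mathcal{C}'$ separates $f^*_v$ from $f^*_u$. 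Hence $\mathcal{E}(\mathcal{C}')^*$ is a separating cutset with capacity strictly less than $c(M)$, contradicting the minimality of $M$. The reverse direction then follows by symmetry, applying the same reasoning with the roles of $G$ and $G^*$ swapped, using $(G^*)^*=G$ for connected plane multi-graphs.

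The main obstacle is the topological bookkeeping: one must rigorously justify that each $\delta(CC_i)$ is a bond (so its dual is a single simple cycle rather than an arbitrary circuit), and that a ``separating circuit'' in $G^*$ translates precisely to a separating cutset in $G$ via a path-crossing argument. Both points rest on the Jordan curve theorem and are delicate in the multi-graph setting, where the cycles constituting a circuit may share edges and where the union $\bigcup_i \delta(CC_i)^*$ as a \emph{set} must be carefully distinguished from the corresponding multiset of edges inside $\mathcal{C}$.
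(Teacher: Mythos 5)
The paper itself never proves this theorem: it is stated in Appendix~B with the remark that it is a straightforward generalization of $st$-cut/cycle duality, so your attempt can only be judged against the standard argument. Your overall plan is the right one — dualize the minimum cutset into a separating circuit, dualize any separating circuit back into a separating edge set, and transfer capacities using that $\mathcal{E}(\mathcal{C})$ is a set, so each edge is counted once. But one intermediate claim is wrong as stated: $\delta(CC_i)$ need not be a bond, because $G$ with $CC_i$ deleted need not be connected. Take $G$ a star with center $c$ and three leaves and $W=\{c\}$: the minimum cutset $M$ is all three edges, the component $CC_i=\{c\}$ of $G\setminus M$ has $\delta(CC_i)=M$, which is not a minimal cut (a proper subset already disconnects $G$), and its dual consists of three self-loops at the unique vertex of $G^*$, not a single simple cycle. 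So the step ``by bond--cycle duality $\delta(CC_i)^*$ is a simple cycle, which by the Jordan curve theorem encloses exactly $\{f^*_v: v\in CC_i\}$'' fails in general, even for the minimum cutset.

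The damage is local and repairable. What you actually need is the weaker (and equally classical) cut-space/cycle-space duality: in a connected plane multigraph, the dual of any edge cut $\delta(A)$ is an edge-disjoint union of simple cycles — which is precisely a circuit in the paper's sense — and the separation of $\{f^*_v: v\in W\}$ from $\{f^*_v: v\notin W\}$ should then be argued by the same crossing/parity argument you already use for the minimality direction (any dual-face walk from $f^*_v$ with $v\in CC_i$ to $f^*_u$ with $u\notin CC_i$ corresponds to a path in $G$ that must use an edge of $\delta(CC_i)\subseteq M$), rather than by treating each boundary as a single Jordan curve. Separately, ``the reverse direction follows by symmetry, swapping $G$ and $G^*$'' is not literally correct: swapping roles pairs cutsets of $G^*$ with circuits of $G$, which is a different statement. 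The converse you need follows instead from the two inequalities your argument already establishes — the dual of the minimum cutset is a separating circuit of the same capacity, and every separating circuit dualizes to a separating edge set of the same capacity — combined with positivity of weights (or the paper's perturbation/uniqueness convention) to conclude that a minimum-capacity separating edge set must equal $\delta(W)$. With these two repairs your proof goes through.
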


    \begin{lemma}[The dual of a connected component]
    \label{LMA: connected components vs regions}
    Let $G$ be a connected plane multi-graph, let $G^*$ be its dual,
    and fix a subset of edges $M\subset E(G)$.
    Then $W\subseteq V$ is a connected component in $G\setminus M$ if and only if
    its dual set of faces $\{f^*_v: v\in W\}$ is a face of $G^*[M^*]$.
    \end{lemma}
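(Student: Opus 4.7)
The plan is to regard $G^*[M^*]$ as a plane subgraph of $G^*$ with the drawing inherited from the fixed planar embedding of $G^*$, and to identify its faces with the maximal connected regions of $\mathbb{R}^2\setminus E(G^*[M^*])$ (where the edges of $M^*$ are viewed as arcs). Under this convention each original dual face $f^*_v$ of $G^*$ is contained in exactly one face of $G^*[M^*]$, because removing edges of $G^*$ from the drawing can only merge $G^*$-faces together, never split one. This yields a well-defined map $\phi\colon V\to\mathrm{Faces}(G^*[M^*])$ sending $v$ to the unique face containing $f^*_v$, and the lemma reduces to showing that $\phi(u)=\phi(v)$ if and only if $u$ and $v$ lie in the same connected component of $G\setminus M$.

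For the easier direction, I would take any path $v=v_0,v_1,\ldots,v_\ell=u$ in $G\setminus M$, note that each edge $e_i=(v_{i-1},v_i)\notin M$ has its dual $e_i^*$ absent from $G^*[M^*]$, and conclude that the consecutive dual faces $f^*_{v_{i-1}}$ and $f^*_{v_i}$ share a boundary edge not in $M^*$ and therefore lie in the same face of $G^*[M^*]$. Transitivity then gives $\phi(v_0)=\phi(v_\ell)$, so all vertices of $W$ map to a common face.

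For the other direction, if $\phi(u)=\phi(v)$ then I would draw a continuous curve inside this common face from an interior point of $f^*_u$ to an interior point of $f^*_v$ and perturb it so that it avoids vertices of $G^*$ and crosses edges of $G^*$ transversally in finitely many points. This produces a sequence of consecutive dual faces $f^*_u=f^*_{v_0},f^*_{v_1},\ldots,f^*_{v_\ell}=f^*_v$, where each consecutive pair shares a dual edge $e_i^*$ that the curve crossed; since the entire curve lies inside a single face of $G^*[M^*]$, no such $e_i^*$ can belong to $M^*$. Hence each $e_i=(v_{i-1},v_i)\in E\setminus M$, giving a walk (and thus a path) in $G\setminus M$ from $u$ to $v$.

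The main obstacle is the topological bookkeeping in this last step: formalizing what it means for $\{f^*_v:v\in W\}$ to ``be'' a face of $G^*[M^*]$, and extracting an honest edge-adjacency sequence from a perturbed curve. Once the convention on the inherited drawing is fixed, the planar-duality ingredients are routine; the only real subtlety is that $G^*[M^*]$ may be disconnected, so ``face'' must be interpreted topologically as a connected open region of the plane rather than in the usual combinatorial sense, and it is precisely this interpretation that makes the stated equivalence hold.
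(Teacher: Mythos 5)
Your argument is correct: identifying the faces of $G^*[M^*]$ with the connected regions of the plane minus the drawing of $M^*$, the map $\phi$ sending $v$ to the face containing $f^*_v$ has as its fibers exactly the connected components of $G\setminus M$, and both directions (crossing a shared dual edge $e^*\notin M^*$, resp.\ perturbing a curve inside a face into a transversal face-sequence) are the standard topological arguments. Note that the paper does not actually prove this lemma --- it is stated in the planar-duality appendix as a known fact generalizing $st$-cut/cycle duality --- so there is no paper proof to diverge from; your write-up is essentially the canonical one. The only point worth making explicit is surjectivity of $\phi$: every face of $G^*[M^*]$ contains some dual face $f^*_v$ (the drawing of $G^*$ has empty interior, and $f^*_v$ is connected and disjoint from the $M^*$ edges, so it lies wholly in one face), which is needed so that \emph{every} face of $G^*[M^*]$, not just those of the form $\phi(v)$, corresponds to a component; with that remark the stated biconditional follows in both directions.
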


    Fix $S\subset T$. We call $E^*_S$ elementary circuit if $E_S$ is an elementary cutset in $G$. Note that by Lemma \ref{LMA: connected components vs regions} $E^*_S$ is an elementary circuit if and only if the graph $G^* \setminus E^*_S$ has exactly two faces. Thus $\T_e(G)=\T_e(G^*)$, and the circuit $E^*_S$ has exactly one minimum cycle in $G^*$. For the sake of simplicity we later on use the term cycle instead of circuit when we refer to elementary minimum circuit.

\end{document}
This is never printed